\renewcommand{\>}{\rangle}
\newcommand{\ket}[1]{|#1\rangle}
\newcommand{\mc}[1]{\mathcal{#1}}
\newcommand{\mbb}[1]{\mathbb{#1}}
\newcommand{\C}{\mathbb{C}}
\newcommand{\R}{\mathbb{R}}
\newcommand{\Z}{\mathbb{Z}}
\newcommand{\rep}{\rm{Rep}}
\newcommand{\acc}{\rm{acc}}
\newcommand{\sinc}{\rm{sinc}}
\DeclareMathOperator{\bs}{\rm{bs}}
\DeclareMathOperator{\gs}{\rm{gs}}
\DeclareMathOperator{\im}{\rm{im}}
\newcommand{\cA}{\mathcal{A}}
\newcommand{\cB}{\mathcal{B}}
\newcommand{\cG}{\mathcal{G}}
\newcommand{\nix}[1]{}
\newcommand{\floor}[1]{{\left\lfloor #1\right\rfloor}}
\newcommand{\ceil}[1]{{\left\lceil #1\right\rceil}}
\newcommand{\spn}[1]{{\left\langle #1\right\rangle}}
\newcommand{\fc}[1]{{\left\lfloor #1\right\rceil}}
\newcommand{\be}{\begin{eqnarray*}}
\newcommand{\ee}{\end{eqnarray*}}
\newcommand{\ben}{\begin{eqnarray}}
\newcommand{\een}{\end{eqnarray}}
\newcommand{\ba}{\begin{array}}
\newcommand{\ea}{\end{array}}
\newtheorem{theorem}{Theorem}
\newtheorem{lemma}[theorem]{Lemma}
\newtheorem{prop}[theorem]{Proposition}
\newtheorem{corollary}[theorem]{Corollary}
\newtheorem{example}{Example}
\newtheorem{defn}{Definition}
\begin{document}
\title{\bf Quantum Algorithms for One-Dimensional Infrastructures}
\author{Pradeep Sarvepalli\thanks{School of Chemistry and Biochemistry, Georgia Institute of Technology, Atlanta, GA~30318, USA. Email:\texttt{pradeep.sarvepalli@gatech.edu}. This work was done when P.S. was at the University of British Columbia, Vancouver.} \quad    Pawel Wocjan\thanks{Department of Electrical Engineering and Computer Science, University of Central Florida, Orlando, FL~32816, USA. Email:
        \texttt{wocjan@eecs.ucf.edu}}} 
\date{May 30, 2012} 

\maketitle
\begin{abstract}
Infrastructures are group-like objects that make their appearance in arithmetic geometry in the study of computational problems related to number fields and function fields over finite fields. The most prominent computational tasks of infrastructures are the computation of the circumference of the infrastructure and the generalized discrete logarithms.  Both these problems are not known to have efficient classical algorithms for an arbitrary infrastructure.  Our main contributions are polynomial time quantum algorithms for one-dimensional infrastructures that satisfy certain conditions.  For instance, these conditions are always fulfilled for infrastructures obtained from number fields and function fields, both of unit rank one.  Since quadratic number fields give rise to such infrastructures, this algorithm can be used to solve  Pell's equation and the principal ideal  problem.  In this sense we generalize Hallgren's quantum algorithms for quadratic number fields, while also providing a polynomial speedup over them. Our more general approach shows that these quantum algorithms can also be applied to infrastructures obtained from complex cubic and totally complex quartic number fields.  Our improved way of analyzing the performance makes it possible to show that these algorithms succeed with constant probability independent of the problem size.   In contrast, the lower bound on the success probability due to Hallgren decreases as the fourth power of the logarithm of the circumference.  Our analysis also shows that fewer qubits are required. We also contribute to the study of infrastructures, and show how to compute efficiently within infrastructures.

\end{abstract}

\maketitle
\newpage
\tableofcontents

\section{Introduction}
One of the most important challenges in quantum computing has been the task of finding efficient
algorithms for problems that are intractable on a classical computer. Following Shor's
discovery of a polynomial time quantum algorithm for factoring integers and solving the discrete logarithm problem \cite{shor97}, the key ideas of the period finding 
algorithm were generalized and led to the framework of the hidden subgroup problem (HSP) \cite{Jozsa01}.  The major algorithmic success in this context is that the abelian HSP can be solved efficiently
by a quantum algorithm (while classical algorithms are  inefficient).  This quantum algorithm can also be viewed as determining the structure of a hidden lattice $\Lambda$ inside $\Z^n$.

An important restriction of this quantum algorithm is that it only works for integral lattices.
But,  Hallgren overcame this obstacle in the one-dimensional setting by generalizing Shor's period finding 
algorithm to the case where the period is irrational \cite{Hallgren02,hallgren07} (see also \cite{jozsa03,schmidt09}).  This enabled him to give polynomial time quantum algorithms for 
computing the regulator of a quadratic number field and solving the principal ideal problem.  Schmidt and Vollmer \cite{SV04,SV05} and Hallgren \cite{Hallgren05} presented 
a polynomial time quantum algorithm for determining a hidden lattice in $\R^n$ for fixed $n$.  They showed that computing the unit group and solving the principal ideal  problem in number
fields of fixed unit rank can be solved efficiently with this algorithm.\footnote{Hallgren also showed in \cite{Hallgren05}  how to compute the class group of a number field of fixed unit rank.}  In stark contrast to $\Z^n$, 
the success probability of the above quantum algorithms for finding a hidden lattice in $\R^n$ decreases exponentially  with the dimension, making them 
inefficient with respect to the dimension.    
Thus, an important open problem is to determine whether there exist quantum algorithms whose success probability decrease less rapidly with the dimension.

In this paper, we   initiate the study of  quantum algorithms for  infrastructures. These group-like structures  are hidden beneath the number theoretic details of the above  quantum algorithms. 
They play an important role in the research on computational problems in global fields, i.e. number fields
and function fields over finite fields \cite{fontein09} (arithmetic geometry provides a unified treatment of global fields \cite{Lorenzini96}).
For instance, computing the unit group and solving the principal ideal problem can both be translated to well defined problems of infrastructures, namely, the computation of the lattice
characterizing the periodic symmetry of the infrastructure and the computation of generalized discrete logarithms in these group-like structures.  Both these computational problems associated with the infrastructures are not known to have 
efficient classical algorithms. 

In this paper we focus on arbitrary one-dimensional infrastructures and give polynomial time quantum algorithms
for computing the circumference and for computing the generalized discrete logarithms. 
One-dimensional infrastructures arise from global fields of unit rank, and include the special case of real quadratic number fields studied by Hallgren \cite{hallgren07} and complex cubic and quartic number fields \cite{BW88}, thereby providing further applications. Our algorithms perform better than 
the algorithms of \cite{hallgren07} when applied to these problems. 
The proposed algorithms provide a super polynomial speedup over classical algorithms. 
In addition, we make several other contributions. 
Firstly, although our algorithms are given in a more general setting, they have lower complexity and a higher success probability than those in \cite{Hallgren02,hallgren07}. In fact, all our algorithms can be shown to have a success probability that is lower bounded by a constant, which is independent of the problem size. For instance, our analysis shows that the success probability of computing the circumference is a constant and at least $10^{-5}$, in contrast to \cite{hallgren07} which implies a lower bound  less than $10^{-9}$ and decreases as a fourth power of the circumference. 
It is also better than the result of \cite{schmidt09} which is lower bounded by $2^{-26}$. 
 Secondly, our results when specialized to quadratic
number fields provide  a simpler treatment of the computational problems, and can be 
easily applied without extensive knowledge of number theory.  Thirdly, we introduce an
interesting technical result that could have wider applicability in the analysis of
quantum algorithms employing  quantum Fourier transform. Finally, we make a contribution to the
study of one-dimensional infrastructures by showing how to perform finite precision computations efficiently 
within the infrastructures.  These are useful even in the context of purely classical 
algorithms for infrastructures. 
A natural direction for further investigation is the generalization of the proposed quantum 
algorithms for higher dimensional infrastructures.  
These are presented in \cite{fontein11}.

This paper is structured as follows. We first introduce the mathematical preliminaries,
defining precisely the notion of an infrastructure and the computational problems associated 
with them. We then show that these infrastructures can be endowed with a group structure 
and review the relevant results related to the embedding of the infrastructures into 
circle groups.  We then introduce group homomorphisms that are key to solving the
computational problems associated to them. We also show that these homomorphisms can be 
computed efficiently. These results should be of interest beyond the present context.

In section~\ref{sec:periodicStates}, we generalize the notion of periodic quantum states and prove a key technical
result related to the analysis of Fourier sampling. This result simplifies the analysis 
of the algorithms and leads to   a tighter bound on the success probabilities of 
the proposed algorithms. In this section, we give a quantum algorithm for estimating the 
period of a pseudo-periodic quantum state. This result could be applicable to situations beyond the 
current setting of infrastructures. 

In section~\ref{sec:circumference}, we show how to set up periodic quantum states from 
infrastructures and use the quantum algorithm proposed in section~3 to estimate the 
circumference of the infrastructure. 
In section~\ref{sec:dlog}, we present the  quantum algorithm to solve the generalized discrete 
logarithm
problem. 
%
% Section: Infrastructures
%

\section{Infrastructures}\label{sec:infra}

We define infrastructures and state the two main computational problems associated to infrastructures. We restrict our attention to the one-dimensional infrastructures. 

\subsection{Definition of infrastructures} 

We refer the reader to \cite{fontein08,fontein08a,fontein09} for 
more information on infrastructures.  Our presentation follows \cite{fontein08,fontein09a}.

\begin{defn}[Infrastructure]
An infrastructure of circumference $R$ is a  pair $(X,d)$ where $X$ is a finite set and  
$d:X\hookrightarrow \mathbb{R}/R\mbb{Z}$ an injective function on $X$.
\end{defn}
Injectivity of $d$ ensures that 
no two distinct elements of $X$ have the same distance. 
\nix{
This ordering induces naturally a function on $X$ such that the orbit
of any element is $X$. This function, called the 
baby-step, $\bs: X \rightarrow X $ is defined as follows.
\ben
\bs(x_i) =\left\{\ba{ll}x_{i+1} & 0\leq i<m-1  \\ x_0& i=m-1 \ea \right. \label{eq:bs}
\een
}
We define a function on the set $X$ called the baby-step, 
$\bs: X \rightarrow X $ as follows. Consider the following set 
\ben
S_x = \{r\in \mbb{R}\mid r> 0 \text{ and } d(x)+r \bmod R \in d(X)\}.
\een 
Let $f_x=\min S_x$.  Then $\bs(x)=x'$ such that $d(x')=d(x)+f_x \mod R $. We also define 
the relative distance function
\ben
\Delta_{\bs} : X \rightarrow \mbb{R}  \text{  where } \Delta_{\bs}(x)=f_x=\min S_x.
\een
Informally, the $\bs(x)$ gives the element next to $x$. The circumference of the infrastructure, denoted $R$, can be expressed in terms of this relative
distance function as follows:
\ben
R= \sum_{i=0}^{m-1}\Delta_{\bs}(x_i).\label{eq:circum}
\een
It is clear that $\bs^{-1}$, the inverse of $\bs$, is well-defined. 
Further, a group-like structure
is imposed on the set $X$ by means of a binary operator, called the giant-step. 
Consider the set 
$$
S_{x,y} = \{ r\in \mbb{R}\mid r\geq 0 \text{  and } d(x)+d(y) +r \bmod R \in d(X) \}.
$$
Let $f_{x,y}=\min S_{x,y}$. Then 
 $\gs : X\times X \rightarrow X$ is defined as:
\ben
\gs(x,y) =  z \text{ such that } d(z) = d(x)+d(y)+f_{x,y} \bmod R.
\een
We define the relative distance function $\Delta_{\gs}$ as: 
\ben 
\Delta_{\gs} : X\times X \rightarrow \mbb{R} \text{ where } \Delta_{\gs}(x,y)= f_{x,y} =\min S_{x,y}.
\een

The giant-step is commutative, but not associative. It is ``almost associative'' in the sense that  for two arbitrary elements $x,y\in X$ the giant-step gives an element $z\in X $ whose distance satisfies $d(z)\approx d(x)+d(y)$.

In infrastructures arising out of quadratic number fields the elements of the 
infrastructure correspond to the principal reduced ideals of the number field. The distance 
function is the norm of the ideals. One can cycle through these ideals using the so-called reduction operator \cite{jozsa03}; this function corresponds to 
the baby-step. One can also define the product of ideals which after reduction
corresponds to the giant-step, see \cite{jozsa03}. % for additional details about these functions. 

The definitions of $\bs$ and $\gs$ and the relative distance functions $\Delta_{\bs}$ and $\Delta_{\gs}$ may suggest that we need $R$ and the distance function $d$ to be able to compute them.  However, this is not the case.  These functions can be computed efficiently without the knowledge of $R$ or the distance function $d$.  To illustrate this point, let us explain how (discrete) infrastructures can be considered as generalizations of finite cyclic groups.  

\medskip
\begin{defn}[Discrete infrastructure]
An infrastructure is said to be discrete if its circumference $R$ is a positive integer and its distance function $d$ is integer-valued, i.e., 
$d : X \hookrightarrow \mathbb{Z} / R \mathbb{Z}$. 
\end{defn}

\medskip
\begin{example}[Finite cyclic group]
Suppose $G=\spn{g}$ is a finite cyclic group of order $R$ and generated by $g$. Then we can
form an infrastructure out of $G$ as follows. We let $X=G$ and   define
$d(h)=\log_g h $, for any $h\in G$,
since every element $h\in G$  is of the form $g^{d(h)}$ for some $d(h)\in \mbb{Z}$.
 The baby step $\bs$ of the infrastructure corresponds simply to multiplication of elements $x$ by the generator $g$, 
while the giant step $\gs$ corresponds to the multiplication of two elements $x$ and $y$ in $G$.  The relative distance functions $\Delta_{\bs}$ and $\Delta_{\gs}$ are constant and take on the
values $1$ and $0$, respectively. 
\end{example}

We can now interpret the order of $G$ as the circumference of the infrastructure. The distance function $d(x)$ corresponds to the discrete logarithm of the element $x$ with
respect to the base $g$.  This example makes it clear why we cannot necessarily determine the circumference and the distance function efficiently, even though we can efficiently evaluate the baby and giant steps and their corresponding distance functions.   

%
% Subsection: Computational problems
%

\subsection{Computational problems} 
The main computational problems related to infrastructures are the computation of the
circumference and the computation of generalized discrete logarithms. 

We consider only infrastructures that satisfy the assumptions below.  These are necessary to be able to carry out basic arithmetic operations in infrastructures
in polynomial time. The cost is measured with respect to the input problem size $n$.

\begin{compactenum}[\textbf{A}1)]
\item The circumference satisfies $R \le 2^{\text{poly}(n)}$.
\item Any element $x\in X$ can be represented by a bit string of length $\text{poly}(n)$.
\item The elements $\bs(x)$, $\bs^{-1}(x)$, $\gs(x,y)$ can be determined in time $\text{poly}(n)$ for all $x,y\in X$.  
\item The relative distances $\Delta_{\bs}(x)$ and $\Delta_{\gs}(x,y)$ cannot necessarily be computed exactly.  We only obtain approximate values $\tilde{\Delta}_{\bs}(x)$ and $\tilde{\Delta}_{\bs}(x,y)$ with 
\begin{equation}
| \Delta_{\bs}(x)   - \tilde{\Delta}_{\bs}(x)   | < \frac{1}{2^m} \text{ and }
| \Delta_{\gs}(x,y) - \tilde{\Delta}_{\gs}(x,y) | < \frac{1}{2^m}
\end{equation}
in time\footnote{Note that $m$ here and elsewhere in the rest of paper is not related to the number of elements in the infrastructure.} $\text{poly}(n,m)$.
\item The minimum distance $d_{\min}$ between any two elements of the 
infrastructure is bounded from below by 
\ben
d_{\min} = \min_{x \in X} \left\{\Delta_{\bs}(x)\right\} \geq \frac{1}{2^{\text{poly}(n)}}.
\een
\item The maximum distance $d_{\max}$ between any two elements of the 
infrastructure is bounded from above by
\ben
d_{\max} = \max_{x\in X} \left\{\Delta_{\bs}(x) \right\} \leq {\text{poly}(n)}.
\een
\item There exists a positive integer $\bar{k} \le \text{poly}(n)$ and a positive (rational) number $ d_{\bar{k}} \ge \text{poly}(n)$ such that for all $x \in X$ we have 
\ben
\sum_{i=0}^{\bar{k}-1} \Delta_{\bs}(\bs^i (x)) \geq d_{\bar{k}}\,,
\een
where $\bs^i$ denotes the $i$-fold application of $\bs$.  In words, any $\bar{k}$ consecutive elements span a distance of a least $d_{\bar{k}}$.
\end{compactenum}

We emphasize that these assumptions are not restrictive; in fact, they are routinely made in the
work on infrastructures.  We have spelt them out explicitly for expository reasons.
In particular, infrastructures  arising from quadratic number fields satisfy all the 
assumptions made above; further justification for these assumptions for number fields is provided below.
The first three assumptions are obvious.  The relative distances $\Delta_{\bs}$ and $\Delta_{\gs}$ could be arbitrary real numbers and, thus, we cannot always
obtain the exact values.  Assumption \textbf{A}4 is made because  we
cannot perform arithmetic with arbitrary real numbers.
Assumptions \textbf{A}5 -- \textbf{A}7 ensure that we can compute in certain
circle groups associated to infrastructures and evaluate certain homomorphisms into these groups efficiently in time $\text{poly}(n)$.

The computational problems in infrastructures  are : %satisfying assumptions \textbf{A}1 -- \textbf{A}7 are:
\begin{itemize}
\item \textbf{Computation of the circumference}: \\
determine an $m$-bit approximation of the circumference $R$
\item \textbf{Generalized discrete logarithm problem}: \\given an element $y\in X$, determine an $m$-bit approximation of $d(y)$
\end{itemize}

The main contributions of this work are efficient quantum algorithms 
for infrastructures satisfying assumptions \textbf{A}1 -- \textbf{A}7. These algorithms make it possible to determine $\lfloor R \rceil$ and $\lfloor d(y) \rceil$ in time $\text{poly}(n)$, where the notation $\lfloor r \rceil$ means either the floor or ceiling of the real number $r$.  Simple classical post processing allows us to obtain efficiently $m$-bit approximations from these integral approximations. For the sake of completeness, we prove later how this can be accomplished.

We now justify the validity of the above assumptions in the case of infrastructures from number fields of unit rank $1$ (such number fields give rise to one-dimensional infrastructures). 
\begin{compactenum}[\textbf{A}1)]
\item This is shown in \cite{Sands91} (see also \cite{APK04}).
\item This is shown in \cite[Corollary 3.7]{thiel95}. 
\item In \cite{BW88}, it is shown that the baby steps and giant steps can be computed in $O(D^\epsilon)$ for arbitrary $\epsilon>0$ (where $D$ is the absolute value of the discriminate, which is bounded by $2^{\text{poly}(n)}$).  However, if one traces through their references and updates the analysis of the running time, one finds that everything
is polynomial in $\log(D)$ and not just subexponential \cite{FonteinJune11}. 
\item This assumption is valid since one can approximate logarithms of absolute values of elements in number fields whose size is polynomially bounded in $n$.
\item In \cite[Example 9.4]{Schoof08}, it is shown that $d_{\min}$ can be of size $1/2^{\text{poly}(n)}$.  In \cite{FonteinJune11}, Fontein informed us that \textbf{A}5 holds in general.
\item This is shown in \cite[Proposition 2.7 (i)]{BW88}.
\item This is shown in \cite[Proposition 2.7 (ii)]{BW88}.
\end{compactenum} 

The infrastructures from function fields are always discrete. This means that there are no issues with finite precision.  Therefore, the above computational problems can be solved directly with the standard hidden subgroup approach.  This is because the circle groups corresponding to discrete infrastructures are just finite cyclic groups.  In \cite{FonteinJune11}, Fontein informed us that the relevant assumptions also hold in infrastructures from finite fields.  

%
% Subsection: Circle groups from infrastructures
%

\subsection{Circle groups from infrastructures}

We now show that infrastructures naturally give rise to circle groups that are isomorphic to $\R/R\Z$.  This isomorphism is the key to solving the two computational problems in quantum polynomial time.  Here and in the next two subsections, we assume that we can compute $\Delta_{\bs}$ and $\Delta_{\gs}$ exactly.  

Picture the elements of $X$ to be embedded in a circle of circumference $R$ as follows.  They
are placed along the circle starting with $x_0$ at the topmost point of the circle and then moving clockwise.  Their position is determined by the distance function $d$. For instance, the element $x_i$ is associated to the point 
$d(x_i)$ on the circle as depicted in figure~\ref{fig:embedding}. 
\begin{figure}[h]
\begin{centering}
\begin{tikzpicture}
\draw (0,0) [thick]circle (2);
\foreach \x in {90, 80, 60, 30 , -30 , -110, -135, 150}
\draw (0,0) +(\x:1.9) -- (\x:2.1);
\draw (0,0) +(90:2) node[above] {$x_0$};
\draw (0,0) +(84:2) node[above right] {$x_1$};
\draw (0,0) +(-28:2) node[below right] {$x_i$};%
\end{tikzpicture}
\caption{Embedding an infrastructure into $\mbb{R}/R\mbb{Z}$} \label{fig:embedding}
\end{centering}
\end{figure}
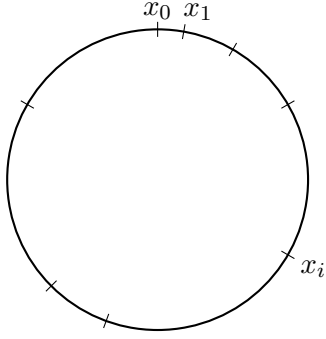

This embedding alone does not yet give rise to a valid group structure because $d(x_i)+d(x_j)$ is not necessarily an element of $d(X)$. 
To obtain a group, we start with the set $X\times\R$ and the map $\psi: X \times \mbb{R}\rightarrow \mbb{R}/R\mbb{Z}$ defined by
\ben
\psi(x,f) = d(x)+f\label{eq:absDist}
\een 
for all $(x,f)\in X\times\R$. We call this the absolute distance of the pair $(x,f)$.  

For each $d\in\R/R\Z$, there exist infinitely many pairs $(x,f)\in X\times\R$ with $\psi(x,f)=d$.  To avoid this infinitude, we continue by defining the equivalence relation $\equiv$ on $X\times\R$:  two pairs $(x,f),(y,g)\in X\times\R$ are said to be equivalent if and only if $\psi(x,f)=\psi(y,g)$ (which is the same as $d(x)+f \equiv d(y)+g \bmod R$).  We denote the equivalence class of $(x,f)$ by
$[x,f]$.

Now the set $X\times \mbb{R}/\equiv$ can be endowed with a group structure as follows.  

\medskip
\begin{prop} \label{prop:repGrp}
The absolute distance map $\psi$ in equation~(\ref{eq:absDist}) is a group isomorphism from $\cG:=X\times\R/\equiv$ to $\R/R\Z$, where the (commutative) group operation on $\cG$ is defined by
\begin{equation}
  [x,f] + [y,g] := [\gs(x,y), f + g - \Delta_{\gs}(x,y)]\label{eq:grpAdd}
\end{equation}
for arbitrary pairs $(x,f),(y,g)\in X\times\R$.
\end{prop}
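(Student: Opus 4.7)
The plan is to descend $\psi$ to a bijection $\bar\psi:\mathcal{G}\to\R/R\Z$, verify that this bijection intertwines the proposed binary operation on $\mathcal{G}$ with ordinary addition on $\R/R\Z$, and transport the abelian group structure from $\R/R\Z$ back along $\bar\psi$. This replaces all group-theoretic verifications (well-definedness, associativity, identity, inverses, commutativity) by a single identity relating $\psi$ and $\gs$.

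First I would observe that the equivalence relation $\equiv$ is defined precisely so that $\psi$ descends: by construction $(x,f)\equiv(y,g)$ iff $\psi(x,f)=\psi(y,g)$, so the induced map $\bar\psi\colon\mathcal{G}\to\R/R\Z$ is well-defined and injective. For surjectivity, fix any $x_0\in X$; given $r\in\R/R\Z$, choose $f\in\R$ with $d(x_0)+f\equiv r\pmod R$, so $\psi(x_0,f)=r$. Hence $\bar\psi$ is a bijection.

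Next I would verify the key compatibility identity
\begin{equation*}
\psi\bigl(\gs(x,y),\,f+g-\Delta_{\gs}(x,y)\bigr)\;\equiv\;\psi(x,f)+\psi(y,g)\pmod R.
\end{equation*}
By the very definition of $\gs$ and $\Delta_{\gs}$, we have $d(\gs(x,y))\equiv d(x)+d(y)+\Delta_{\gs}(x,y)\pmod R$; substituting this into the left-hand side yields $d(x)+d(y)+f+g$, which matches the right-hand side. This single computation does most of the work. It immediately implies that the operation on $\mathcal{G}$ in (\ref{eq:grpAdd}) is well-defined on equivalence classes: if $(x,f)\equiv(x',f')$ and $(y,g)\equiv(y',g')$, then the two candidate results of $[x,f]+[y,g]$ both have $\psi$-image $\psi(x,f)+\psi(y,g)=\psi(x',f')+\psi(y',g')$, so by the defining property of $\equiv$ they lie in the same equivalence class.

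Finally I would transport the group structure: the same identity now reads $\bar\psi([x,f]+[y,g])=\bar\psi([x,f])+\bar\psi([y,g])$, so $\bar\psi$ is a bijection that carries the proposed operation on $\mathcal{G}$ to addition on the abelian group $\R/R\Z$. Consequently all group axioms pull back through $\bar\psi$: associativity and commutativity are inherited from $\R/R\Z$, the identity is the class $[x_0,-d(x_0)]$ (mapping to $0$), and the inverse of $[x,f]$ is any representative whose $\psi$-image is $-d(x)-f\pmod R$. Hence $(\mathcal{G},+)$ is an abelian group and $\bar\psi$ is a group isomorphism. There is no real obstacle; the only non-bookkeeping step is the compatibility identity, which reduces to the defining relation for $\Delta_{\gs}$.
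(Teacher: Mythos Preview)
Your argument is correct and follows essentially the same approach as the paper: the sole nontrivial step in both is the verification that $\psi(\gs(x,y),f+g-\Delta_{\gs}(x,y))=\psi(x,f)+\psi(y,g)$ via the defining relation $d(\gs(x,y))=d(x)+d(y)+\Delta_{\gs}(x,y)$. The paper's proof is terser, recording only this homomorphism computation and declaring the rest straightforward, whereas you spell out well-definedness, bijectivity, and the transport of the group axioms.
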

\begin{proof}
The proof is straightforward. We just verify that $\psi$ is a group homomorphism.
Letting $\psi(x,f) = d(x)+f$ and $\psi(y,g) = d(y)+g$, 
we obtain $\psi(\gs(x,y),f+g-\Delta_{\gs}(x,y)) = d(\gs(x,y))+f+g-\Delta_{\gs}(x,y)$.
By the definition of the giant-step it holds that $d(\gs(x,y)) = d(x)+d(y)+\Delta_{\gs}(x,y)$. Thus,
$d(\gs(x,y))+f+g-\Delta_{\gs}(x,y)= d(x)+d(y)+f+g = \psi(x,f)+\psi(y,g)$.
\end{proof}

%
% Subsection: Group arithmetic based on $f$-representations
%

\subsection{Group arithmetic based on $f$-representations}

We have to use ``nice'' representatives for the equivalence classes of $\cG$ to be able to compute within this group efficiently.  To this end, we introduce $f$-representations.  Intuitively, the $f$-representations fill in the missing points in the circle $\mbb{R}/R\mbb{Z}$, i.e., the set of points $(\mbb{R}/R\mbb{Z})\setminus d(X)$. 

\medskip
\begin{defn}[$f$-representation]
Let $(X,d)$ be an infrastructure.  A pair $(x,f) \in X\times \mbb{R}$ is said to be an
$f$-representation if $0\leq f< \Delta_{\bs}(x)$.  We denote the set of all $f$-representations by $\rep(\mc{I})$.
\end{defn}

The following lemma was shown in \cite{fontein08} (see Proposition~2 and Corollary~1 therein) in a slightly less general setting.  We include this lemma for completeness.  An important aspect of this lemma is that the group operation can be realized without having any knowledge of $R$ or the distance function $d$ (except for the knowledge that is is revealed indirectly through the particular interplay of functions $\bs$, $\gs$, $\Delta_{\bs}$, and $\Delta_{\gs}$).  

We mention that for arbitrary infrastructures, neither  this lemma nor any simple method make it possible to compute inverses in $\cG$.  However, in the case of infrastructures in global fields there is an efficient classical way to compute (approximate) $f$-representations of inverses in the corresponding circle groups. 

%%%%%%%%%%%%%%%%%%%%%%%%%%%%%%%%%%%%%%%%%%%%%%%%%%%%%%%%
%
% Efficient group arithmetic based on f-representations
%
%%%%%%%%%%%%%%%%%%%%%%%%%%%%%%%%%%%%%%%%%%%%%%%%%%%%%%%%

\medskip
\begin{lemma}\label{lm:grpAdd}
The group operation in $\cG$ can be efficiently realized by using $f$-representations to encode the equivalence classes. More precisely, it takes at most $\bar{k} \ceil{2d_{\max}/d_{\bar{k}}}=\text{poly}(n)$ invocations of baby steps to obtain the $f$-representation corresponding to the sum of two elements of $\cG$.
\end{lemma}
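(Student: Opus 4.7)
The plan is to first invoke Proposition~\ref{prop:repGrp} to express the sum of $[x,f]$ and $[y,g]$ as the class $[\gs(x,y),\, f+g-\Delta_{\gs}(x,y)]$. The pair $(z_0,h_0):=(\gs(x,y),\, f+g-\Delta_{\gs}(x,y))$ can be formed in polynomial time using the giant-step and its relative distance oracle, but it is generally not itself an $f$-representation because $h_0$ may fall outside $[0,\Delta_{\bs}(z_0))$. The remaining task is to normalize $(z_0,h_0)$ to an equivalent $f$-representation using only a bounded number of baby steps.

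For the normalization I will use the elementary identity
\[
[x,f]\;=\;[\bs(x),\, f-\Delta_{\bs}(x)]\;=\;[\bs^{-1}(x),\, f+\Delta_{\bs}(\bs^{-1}(x))],
\]
which follows directly from $d(\bs(x))=d(x)+\Delta_{\bs}(x)\bmod R$ and the definition of $\psi$. Starting from $(z_0,h_0)$, my algorithm iterates the rule: if $h\ge \Delta_{\bs}(z)$, replace $(z,h)$ by $(\bs(z),\, h-\Delta_{\bs}(z))$; if $h<0$, replace $(z,h)$ by $(\bs^{-1}(z),\, h+\Delta_{\bs}(\bs^{-1}(z)))$. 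The invariant throughout is that $[z,h]$ equals the sum class, and the loop halts exactly when $0\le h<\Delta_{\bs}(z)$, producing the desired $f$-representation.

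To count the baby steps, I will bound the initial $|h_0|$. Because $(x,f)$ and $(y,g)$ are $f$-representations, $0\le f<\Delta_{\bs}(x)\le d_{\max}$ and $0\le g<\Delta_{\bs}(y)\le d_{\max}$; and $\Delta_{\gs}(x,y)$, being the distance along the circle from $d(x)+d(y)\bmod R$ to the next point of $d(X)$, is at most $d_{\max}$ since no gap between consecutive points of $d(X)$ exceeds $d_{\max}$. Hence $h_0\in (-d_{\max},\, 2d_{\max})$, and the normalization must shift $h$ by at most $2d_{\max}$ in total. By assumption \textbf{A}7, every block of $\bar{k}$ consecutive baby steps shifts $h$ by at least $d_{\bar{k}}$; therefore at most $\bar{k}\lceil 2d_{\max}/d_{\bar{k}}\rceil$ baby steps suffice, which is $\text{poly}(n)$ by \textbf{A}6 and \textbf{A}7.

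The only subtlety I anticipate is the justification that $\Delta_{\gs}(x,y)\le d_{\max}$ (so that the backward branch is controlled at all) and the observation that assumption \textbf{A}7 applies uniformly along any $\bs$-orbit, and in particular also controls the backward excursion via $\bs^{-1}$; both fall out immediately once the definitions are unpacked. With these in place, the lemma reduces to bookkeeping on the two monotone loops.
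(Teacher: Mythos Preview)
Your proposal is correct and follows essentially the same approach as the paper: compute the raw sum via the giant step, bound the resulting offset $h_0\in(-d_{\max},2d_{\max})$, and then normalize by iterated baby steps (forward or backward) using assumption \textbf{A}7 to cap the number of iterations at $\bar{k}\lceil 2d_{\max}/d_{\bar{k}}\rceil$. Your write-up is in fact slightly more careful than the paper's---you correctly add $\Delta_{\bs}(\bs^{-1}(z))$ in the backward step and explicitly justify $\Delta_{\gs}(x,y)\le d_{\max}$---but the argument is the same.
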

\begin{proof}
Let $(x,f), (x',f')\in\rep(\mc{I})$. Then, we have
\be
[x,f] + [x',f'] = [\gs(x,x'), f + f' - \Delta_{\gs}(x,x')] \,.
\ee
In general, the pair $(x'',f''):=(\gs(x,x'), f + f' - \Delta_{\gs}(x,x'))\in X\times\R$ is not a valid $f$-representation.  The task  now is to find the $f$-representation that encodes the same 
equivalence class in $\cG$ as $(x'',f'')$.  We use the bounds  
$-d_{\max}\leq f'' = f+f'-\Delta_{\gs}(x,x') \leq f+f' \leq 2d_{\max}$, where $d_{\max}$ is 
the  maximum distance between two consecutive elements of the infrastructure. 

If $f''\leq 0$, then we iteratively replace $(x'',f'')$ with $(\bs^{-1}(x''),f''+\Delta_{\bs}(x''))$ until it just becomes positive.  If 
$f''\geq 0$, then we iteratively replace $(x'',f'')$ with $(\bs(x''), f''-\Delta_{\bs}(x''))$ until it is minimal while being nonnegative. 
Observe that this reduction process preserves the absolute distance.  Moreover, it takes at most 
$\bar{k} \ceil{2d_{\max}/d_{\bar{k}}}=\text{poly}(n)$ steps to obtain to the canonical representative in $\rep(\mc{I})$. 
\end{proof}

From now on, we identify $\cG$ with $\rep(\mc{I})$ and use $(x,f)\in\rep(\mc{I})$ to denote the group elements instead of $[x,f]$ to simplify notation.

The corollary below is a simple consequence of the above lemma.  We state it explicitly 
because this result it is extensively used in
the quantum algorithms. 

%%%%%%%%%%%%%%%%%%%%%%
%
% Square & Multiply
%
%%%%%%%%%%%%%%%%%%%%%%
\medskip
\begin{corollary}[Double \& multiply]\label{co:sqrMult}
Let $(x,f)\in\cG$ be an arbitrary group element and $a\in \mbb{Z}$ an arbitrary nonnegative integer.  Then, it takes at most 
$O(\bar{k} \ceil{2d_{\max}/d_{\bar{k}}} \log(a)) = \text{poly}(n) \log(a)$ invocations of baby steps and at most $O(\log(a))$ invocations of giant steps to obtain the $f$-representation 
corresponding to $a \cdot (x,f)$.
\end{corollary}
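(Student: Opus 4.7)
The plan is to use the standard double-and-add (binary exponentiation) technique, which is directly applicable since $\cG$ is an abelian group and Lemma~\ref{lm:grpAdd} already bounds the cost of a single group operation. Write $a$ in binary as $a = \sum_{i=0}^{\ell} a_i 2^i$ with $\ell = \lfloor \log_2 a \rfloor = O(\log a)$. First I would compute the sequence of doublings $g, 2g, 4g, \dots, 2^{\ell} g$, where $g = (x,f)$, by iteratively applying $2^{i+1} g = 2^i g + 2^i g$ using the group operation in $\cG$ on $f$-representations. Then I would accumulate the sum $a \cdot g = \sum_{i : a_i = 1} 2^i g$ using a running total, again using the group operation in $\cG$.

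The accounting is then immediate. The doubling phase uses at most $\ell$ group additions, and the accumulation phase uses at most $\ell$ group additions, so the whole procedure performs $O(\log a)$ group additions in $\cG$. Each group addition is, by definition of the operation in Proposition~\ref{prop:repGrp}, one invocation of the giant step on the underlying elements of $X$, followed by the baby-step reduction to bring the result back into $\rep(\mc{I})$. This gives the stated bound of $O(\log a)$ giant steps. By Lemma~\ref{lm:grpAdd}, the reduction after each giant step costs at most $\bar{k}\lceil 2 d_{\max}/d_{\bar{k}}\rceil$ baby-step invocations, so the total number of baby steps is $O\!\bigl(\bar{k}\lceil 2 d_{\max}/d_{\bar{k}}\rceil \log(a)\bigr)$, which is $\text{poly}(n) \log(a)$ by assumptions \textbf{A}6 and \textbf{A}7.

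There is essentially no hard step here; the only thing requiring any care is confirming that every intermediate partial sum produced by the doubling and accumulation stages is itself represented by a valid $f$-representation so that Lemma~\ref{lm:grpAdd} can be reapplied at each step. This is automatic because the reduction procedure in the lemma's proof returns a canonical element of $\rep(\mc{I})$, so the invariant is preserved throughout the $O(\log a)$ iterations.
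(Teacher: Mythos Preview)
Your proposal is correct and follows essentially the same approach as the paper's proof: both use the binary expansion of $a$, compute the doublings $2^i\cdot(x,f)$ iteratively, sum the appropriate terms, and invoke Lemma~\ref{lm:grpAdd} at each step to bound the baby-step cost of reducing back to a valid $f$-representation. Your remark about maintaining the invariant that every intermediate result lies in $\rep(\mc{I})$ is exactly the point the paper makes as well.
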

\begin{proof}
The action of $\Z$ on the commutative group $\cG$ is defined by
\be
a\cdot (x,f) & := & \underbrace{(x,f) + (x,f) + \cdots + (x,f)}_{a \text{  times}}\,.\label{eq:multRep}
\ee 
Consider the special case of computing $a\cdot (x,f)$ for $a=2^i$ with some  $i$. This takes at most $O(i)$ steps: 
\ben
(x,f)       & = & (x^{(0)},f^{(0)}) \nonumber\\
2 (x,f)     & = & (x^{(0)},f^{(0)}) + (x^{(0)},f^{(0)}) = \big(\gs(x^{(0)},x^{(0)}), 2f^{(0)} - \Delta_{\gs}(x^{(0)},x^{(0)})\big) \nonumber\\
&=:& (x^{(1)},f^{(1)}) \nonumber\\
            & \vdots &  \nonumber\\
2^{i} (x,f) & = & (x^{(i-1)},f^{(i-1)}) + (x^{(i-1)},f^{(i-1)})  \nonumber\\
            & = & \big(\gs(x^{(i-1)},x^{(i-1)}), 2 f^{(i-1)} - \Delta_{\gs}(x^{(i-1)},x^{(i-1)})\big) \label{eq:sqrMult}\\
            & =:& (x^{(i)},f^{(i)}) \nonumber\,. 
\een
In each step, we apply the above lemma to ensure that $(x^{(i)},f^{(i)})$ are valid $f$-representations.  

Now suppose $a = b_i 2^{i} + b_{i-1} 2^{i-1} + \cdots + b_0 2^{0}$ in binary representation.  Then, $a \cdot (x,f)$ can be computed as 
\be
a\cdot (x,f) = \sum_{j=0}^i b_j \cdot (x^{(j)},f^{(j)})\,.
\ee
with at most $i$ additions.  We again use the above lemma to ensure that the partial sums are valid $f$-representations.

In total, the whole process takes at most $O(\log (a))$ giant-steps and $O(\bar{k} \ceil{2d_{\max}/d_{\bar{k}}}\log (a))$  baby-steps.
\end{proof}

%
% Subsection: Efficient evaluation of homomorphisms into circle groups
%

\subsection{Group homomorphisms from $\R$ and $\Z \times \R$ into circle groups}

In this subsection, we continue to assume that we can determine
the functions $\Delta_{\bs}$ and $\Delta_{\gs}$ exactly, and compute with arbitrary real numbers. In the next subsection, we will relax this assumption.

\begin{defn}
Let $h : \R \rightarrow \cG$ be the surjective group homomorphism, where $h(r)$ is defined to be the unique $f$-representation $(x,f)\in\rep(\mc{I})$ with $(x_0,r)\equiv (x,f)$.  
\end{defn}

Recall that we define the distance function $d$ such that $d(x_0)=0$, thus $(x_0,0)$ is the
identity of $\mc{G}$.

The statement of following lemma is obvious.  We formulate it explicitly since it provides the intuition required to understand the quantum algorithm for computing the circumference.
\medskip

\begin{lemma}
The kernel of $h$ is equal to $R\Z$.  Thus, $h$ is a periodic function on $\R$ with period $R$.
\end{lemma}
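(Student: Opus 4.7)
The plan is to unpack the definitions and reduce the claim to a statement about the equivalence relation $\equiv$ on $X\times\R$. Since $h(r)$ is defined to be the unique $f$-representation equivalent to $(x_0,r)$, and since $(x_0,0)$ is itself a valid $f$-representation (it is the identity of $\cG$ by the convention $d(x_0)=0$), we have $h(r) = (x_0,0)$ if and only if $(x_0,r)\equiv (x_0,0)$.

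The main step is therefore to spell out what this equivalence means. By the definition of $\equiv$, $(x_0,r)\equiv (x_0,0)$ iff $\psi(x_0,r)=\psi(x_0,0)$ in $\R/R\Z$, i.e. $d(x_0)+r \equiv d(x_0)+0 \pmod{R}$. Using $d(x_0)=0$ this collapses to $r \equiv 0\pmod{R}$, which is exactly the statement $r \in R\Z$. Both implications are immediate, so $\ker h = R\Z$.

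For the periodicity assertion I would simply observe that for any $r\in\R$ and $k\in\Z$, the pairs $(x_0,r+kR)$ and $(x_0,r)$ map to the same point under $\psi$, hence lie in the same equivalence class, hence have the same canonical $f$-representation. Thus $h(r+kR)=h(r)$ for all $r\in\R$ and $k\in\Z$, and conversely any period must lie in $\ker h = R\Z$, so $R$ is indeed the fundamental period.

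There is no real obstacle here: the content is entirely definitional once Proposition~\ref{prop:repGrp} is in place, since $h$ is just the composition of $r\mapsto [x_0,r]$ with the identification of $\cG$ with $\rep(\mc{I})$, and the kernel of the first map is visibly $R\Z$ (as $(x_0,0)$ is the identity). The only thing worth being careful about is not conflating the equivalence class $[x_0,r]$ with its canonical representative; writing the argument at the level of $\psi$ avoids this confusion.
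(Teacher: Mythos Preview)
Your argument is correct. The paper does not actually prove this lemma; it merely states that ``the statement of following lemma is obvious'' and records it for expository purposes. Your unpacking of the definitions---reducing $h(r)=(x_0,0)$ to $(x_0,r)\equiv(x_0,0)$ and hence to $r\equiv 0\pmod R$ via $\psi$ and the convention $d(x_0)=0$---is exactly the routine verification the authors chose to omit, so there is nothing further to compare.
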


\medskip
\begin{lemma}\label{lm:discJumps}
Let $r\in [0,B]\subset\R$, where $B$ is an arbitrary (but fixed) positive real number.  Then, we can determine the exact value
$h(r)$  %in time %$\text{poly}(\log(B))$ 
using $O(\log(B))$ giant-steps and $O(\log(B)\bar{k}\ceil{2d_{\max}/d_{\bar{k}}})=O(\log(B)\text{poly}(n))$ baby-steps
%$O(\log(B)+\bar{k})$  
under the assumption that $\Delta_{\bs}$ and $\Delta_{\gs}$ can be computed exactly.
\end{lemma}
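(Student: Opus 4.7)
The plan is to reduce the computation of $h(r)$ to a bounded sequence of giant-step and baby-step operations by splitting $r$ into an integer part and a fractional part, and then applying Corollary~\ref{co:sqrMult} to the integer part.  Since $h:\R\to\cG$ is a group homomorphism, writing $r=n+s$ with $n:=\lfloor r\rfloor\in\{0,1,\dots,\lfloor B\rfloor\}$ and $s:=r-n\in[0,1)$ gives $h(r)=h(n)+h(s)$.  The obstacle is that $r$ is real, so the $\Z$-action of Corollary~\ref{co:sqrMult} cannot be applied to $r$ directly; the whole proof revolves around isolating the ``bad'' non-integer part and verifying that assumption \textbf{A}7 makes it trivially reducible.

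For the integer part, I first compute the $f$-representation $h(1)$ by starting with the pair $(x_0,1)$ and applying the reduction process inside Lemma~\ref{lm:grpAdd}. Since the initial value of $f$ is $1$ and each block of $\bar{k}$ consecutive baby-steps decreases $f$ by at least $d_{\bar{k}}\geq\text{poly}(n)\geq 1$ by \textbf{A}7, this reduction terminates after at most $\bar{k}\lceil 1/d_{\bar{k}}\rceil=O(\bar{k})$ baby-steps. Noting that the $\Z$-action satisfies $n\cdot(x_0,1)\equiv(x_0,n)$ in $\cG$, so $n\cdot h(1)=h(n)$, I then invoke Corollary~\ref{co:sqrMult} with $a=n\leq B$. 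This yields $h(n)$ at a cost of $O(\log B)$ giant-steps and $O(\log(B)\,\bar{k}\lceil 2d_{\max}/d_{\bar{k}}\rceil)$ baby-steps.

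For the fractional part, I reduce the pair $(x_0,s)$ directly to its canonical $f$-representation using the baby-step procedure of Lemma~\ref{lm:grpAdd}. Since $0\leq s<1\leq d_{\bar{k}}$, the same counting argument shows that at most $\bar{k}$ baby-steps are needed to obtain $h(s)$. Finally, a single giant-step plus reduction combines $h(n)$ and $h(s)$ into $h(r)=h(n)+h(s)$, adding one more giant-step and $\bar{k}\lceil 2d_{\max}/d_{\bar{k}}\rceil$ baby-steps. Summing the three contributions gives the claimed bound of $O(\log B)$ giant-steps and $O(\log(B)\,\bar{k}\lceil 2d_{\max}/d_{\bar{k}}\rceil)$ baby-steps, and the computation is exact because we have assumed $\Delta_{\bs}$ and $\Delta_{\gs}$ are computed exactly, so the arithmetic at every intermediate stage is exact. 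The crux of the argument is therefore not a clever combinatorial trick but the recognition that the homomorphism property of $h$, together with the polynomial lower bound on $d_{\bar{k}}$, reduces the problem to one instance of Corollary~\ref{co:sqrMult} plus an $O(\text{poly}(n))$-sized residual reduction.
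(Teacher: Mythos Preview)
Your proof is correct and achieves the stated bounds, but it takes a genuinely different route from the paper's.  The paper fixes the infrastructure element $x_{\bar{k}}:=\bs^{\bar{k}}(x_0)$, which satisfies $d(x_{\bar{k}})\ge d_{\bar{k}}$ by \textbf{A}7, sets $a=[r/d(x_{\bar{k}})]$, and applies Corollary~\ref{co:sqrMult} to $a\cdot(x_{\bar{k}},0)$; the resulting $f$-representation then lies within $d(x_{\bar{k}})/2$ of $r$, so a bounded number of additional baby-steps finishes the reduction.  You instead split $r=\lfloor r\rfloor+s$, compute $h(1)$ once, and apply Corollary~\ref{co:sqrMult} to $\lfloor r\rfloor\cdot h(1)$.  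Your decomposition is arguably cleaner in the exact setting: it avoids computing $d(x_{\bar{k}})$ explicitly and lands at exact distance $\lfloor r\rfloor$ rather than at an approximate multiple of $d(x_{\bar{k}})$.  The paper's choice, on the other hand, feeds more directly into the approximate analysis of Lemma~\ref{lm:nonDiscJumps}, where starting from a pure element $(x_{\bar{k}},0)$ with zero fractional component streamlines the tracking of accumulated error.

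Two small caveats.  First, your claim $d_{\bar{k}}\geq 1$ is not literally guaranteed by \textbf{A}7; the assumption only says $d_{\bar{k}}\ge\text{poly}(n)$, which need not exceed $1$ for every $n$.  This does not damage the argument: reducing $(x_0,1)$ or $(x_0,s)$ with $0\le s<1$ costs at most $\bar{k}\lceil 1/d_{\bar{k}}\rceil=\text{poly}(n)$ baby-steps in any case, and this is absorbed into the stated $O(\log(B)\,\bar{k}\lceil 2d_{\max}/d_{\bar{k}}\rceil)$ bound.  Second, you overload the symbol $n$ for both $\lfloor r\rfloor$ and the problem size used throughout the assumptions; choosing a different letter for the integer part of $r$ would avoid confusion.
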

\begin{proof}
In general, $(x_0,r)$ is not a valid $f$-representation.  Thus, we need to find the corresponding $f$-representation. If $r$ is small and positive, then we can use baby-steps to find it with at most $\bar{k}\ceil{r/d_{\bar{k}}}$ invocations. 

If $r$ is large, then the baby-step method is not efficient anymore. We have to use giant-steps as well.  The idea is to use the
double and multiply technique of Corollary~\ref{co:sqrMult}. Let $x_{\bar{k}}=\bs^{\bar{k}}(x_0)$. Then $d(x_{\bar{k}}) \geq d_{\bar{k}}$.  Let $a=[ r/d(x_{\bar{k}})]$, where 
$[x ]$ denotes the nearest integer to $x $. We can compute
$a\cdot (x_{\bar{k}},0) =(x,f)$ using $O(\log (a))= O(\log(B))$ giant-steps
and $O(\log (B) \bar{k} \ceil{2d_{\max}/d_{\bar{k}}})$ baby-steps. Note that $(x,f) \equiv (x_0, ad({x_{\bar{k}}})) $. But, $|ad({x_{\bar{k}}}) - r| = |[ r/d(x_{\bar{k}})]d({x_{\bar{k}}}) - r|\leq d(x_{\bar{k}})/2$.
Therefore, $(x,f)$ is at most within a distance of $d(x_{\bar{k}})/2$ from $r$. Thus we can find $h(r)$ by using no more than $\bar{k}$ additional invocations of either $\bs$ or $\bs^{-1}$. 
The overall time complexity of evaluating $h(r)$ is therefore $O(\log(B)\bar{k}\ceil{2d_{\max}/d_{\bar{k}}})= O(\log(B)\text{poly}(n))$, since $d_{\max}$ and $\bar{k}$  are $O( \text{poly}(n)$ by assumptions {\bf A}6 and {\bf A}7.
\end{proof}
Similar ideas can be applied when $r$ is negative.
The method proposed in Lemma~\ref{lm:discJumps} relies essentially on the group arithmetic of 
$\mc{G}$ and thus is quite different from a generalization of the binary search method.

\medskip
\begin{defn}
Let $x\in X$ be an arbitrary (but fixed) element of the infrastructure.  Let $g : \Z \times \R \rightarrow \cG$ be the surjective homomorphism, where $g(a,r)$ is defined to be the unique $f$-representation corresponding to
\ben
a \cdot (x,0) + h(r)\,.\label{eq:g}
\een
\end{defn}
We note that  $g(a,b)$ is same as the $f$-representation of $h(ad(x)+b)$,  where $d(x)$ is the distance of $x$.

The following statement on the kernel of the homomorphism $g$ is obvious.  

\medskip
\begin{lemma}\label{lm:ker_g}
The kernel of the above homomorphism $g$ is equal to
\be
\{ (a, r) \, : \, r \equiv - a \, d(x) \mod R \}\,.
\ee
\end{lemma}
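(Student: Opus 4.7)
The plan is to push the computation through the isomorphism $\psi : \cG \to \R/R\Z$ from Proposition~\ref{prop:repGrp}, since membership in $\ker(g)$ is equivalent to $\psi(g(a,r)) = 0$ in $\R/R\Z$.

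First I would unpack the two summands in the definition of $g$. Because $\psi$ is a group homomorphism and the $\Z$-action on $\cG$ is by definition iterated addition (see equation~(\ref{eq:multRep})), one gets $\psi(a \cdot (x,0)) = a \cdot \psi(x,0) = a \cdot d(x) \bmod R$. For the second summand, the definition of $h(r)$ gives $h(r) \equiv (x_0, r)$, so $\psi(h(r)) = d(x_0) + r = r \bmod R$, using the normalization $d(x_0) = 0$ adopted just before Lemma~\ref{lm:discJumps}.

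Adding these and again invoking that $\psi$ is a homomorphism gives
\begin{equation*}
\psi(g(a,r)) = a \cdot d(x) + r \pmod R.
\end{equation*}
Since $\psi$ is an isomorphism, $(a,r) \in \ker(g)$ iff the right-hand side is $0$ in $\R/R\Z$, i.e.\ iff $r \equiv -a \, d(x) \pmod R$, which is exactly the claimed description of $\ker(g)$.

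There is essentially no obstacle here; the content of the lemma is a direct consequence of Proposition~\ref{prop:repGrp} and the definitions of $h$ and $g$. The only subtlety worth double-checking is that the $\Z$-action on $\cG$ commutes with $\psi$, but this is immediate from the fact that $\psi$ is a group homomorphism into the abelian group $\R/R\Z$, so I would state this as a one-line justification rather than belabor it.
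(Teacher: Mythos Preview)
Your argument is correct and is exactly the natural unwinding the paper has in mind; the paper itself does not give a proof, merely stating that the lemma is obvious. Your use of the isomorphism $\psi$ from Proposition~\ref{prop:repGrp} together with the normalization $d(x_0)=0$ is the right (and essentially only) way to verify the claim.
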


\medskip
\begin{corollary}
Let $A$ be an arbitrary positive integer and $B$ an arbitrary positive real number.  Then, we can determine the exact value $g(a,b)$ for all pairs $(a,r)\in\{0,1,\ldots,A-1\}\times [0,B]$ in time $O((\log A+ \log B)\text{poly}(n))$ under the assumption that $\Delta_{\bs}$ and $\Delta_{\gs}$ can be computed perfectly.
\end{corollary}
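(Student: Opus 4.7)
The statement follows by a direct decomposition of $g$ into the two pieces out of which it was defined. My plan is to compute, for any given pair $(a,r)\in\{0,1,\ldots,A-1\}\times[0,B]$, the $f$-representations of $a\cdot(x,0)$ and of $h(r)$ separately, and then add them in $\cG$ using Lemma~\ref{lm:grpAdd}. I would present the bound for a single pair $(a,r)$; the phrase ``for all pairs'' in the statement is naturally read as ``for any such pair'' since the cost is stated per-pair.

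First I would invoke Corollary~\ref{co:sqrMult} on the input $(x,0)\in\cG$ and the nonnegative integer $a\le A-1$. This yields the $f$-representation of $a\cdot(x,0)$ using $O(\log a)=O(\log A)$ giant-steps and $O(\bar{k}\lceil 2d_{\max}/d_{\bar{k}}\rceil\log a)=O(\log(A)\,\text{poly}(n))$ baby-steps; by assumptions \textbf{A}6 and \textbf{A}7 the combined cost is $O(\log(A)\,\text{poly}(n))$. Next I would invoke Lemma~\ref{lm:discJumps} with the real number $r\in[0,B]$ to obtain the $f$-representation of $h(r)$ in time $O(\log(B)\,\text{poly}(n))$; this relies on the exact-arithmetic assumption on $\Delta_{\bs}$ and $\Delta_{\gs}$, which we have retained here.

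Finally, I would apply Lemma~\ref{lm:grpAdd} once to add the two $f$-representations obtained above, producing the unique $f$-representation of $a\cdot(x,0)+h(r)=g(a,r)$ in $\bar{k}\lceil 2d_{\max}/d_{\bar{k}}\rceil=\text{poly}(n)$ baby-steps. Summing the three contributions gives the claimed total time $O((\log A+\log B)\,\text{poly}(n))$.

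There is essentially no obstacle: the whole argument is assembly of previously proved lemmas, and the only mild point to be careful about is invoking the exact-arithmetic assumption (so that $h(r)$ is truly determined, not merely approximated) and noting that Corollary~\ref{co:sqrMult} was only stated for nonnegative integer multipliers, which is exactly the range $\{0,\ldots,A-1\}$ we need. Thus the proof can be written in just a few lines.
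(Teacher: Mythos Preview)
Your proposal is correct and mirrors the paper's own proof almost verbatim: compute $a\cdot(x,0)$ via Corollary~\ref{co:sqrMult}, compute $h(r)$ via Lemma~\ref{lm:discJumps}, and add the two $f$-representations with Lemma~\ref{lm:grpAdd}. The only cosmetic difference is that the paper cites $\bar{k}$ baby-steps for the final addition while you cite $\bar{k}\lceil 2d_{\max}/d_{\bar{k}}\rceil$; both are $\text{poly}(n)$, so nothing changes.
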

\begin{proof}
By definition  $g(a,r)= a\cdot(x,0)+h(r)$. The computation of $a\cdot (x,0)$ can be 
performed in  $O(\log (A)\bar{k} \ceil{2d_{\max}/d_{\bar{k}}}) =O(\log(A)\text{poly}(n))$ time by Corollary~\ref{co:sqrMult}, while the computation of $h(r)$ can be performed in 
$O(\log (B)\bar{k}\ceil{2d_{\max}/d_{\bar{k}}}) = O(\log(B)\text{poly}(n))$ time by Lemma~\ref{lm:discJumps}. The final group addition in 
$\mc{G}$ takes at most $\bar{k}=\text{poly}(n)$ baby-steps, by Lemma~\ref{lm:grpAdd}.
\end{proof}

%
% Efficient approximative arithmetic and evaluation of homomorphisms
%

\subsection{Efficient approximate group arithmetic and evaluation of the homomorphisms from $\R$ and $\Z\times\R$}\label{ssec:effArithmetic}

The previous assumption that we can compute $\Delta_{\bs}$ and $\Delta_{\gs}$ and represent arbitrary real numbers is clearly an idealization.  We made this assumption at first because we can explain the intuition in a simpler and more elegant way when the homomorphisms $h$ and $g$ are perfect.  We now drop this assumption and work instead with the approximate versions $\tilde{\Delta}_{\bs}$ and $\tilde{\Delta}_{\gs}$. 

Let $L$ be some large positive integer.  We only consider evaluation points $r$ that are rational numbers with denominator $L$.  

Let $h(r)=(x,f)$ be the perfect $f$-representation with $(x,f)\equiv (x_0,r)$.  We can only determine an approximate $\tilde{h}(r)=(\tilde{x},\tilde{f})\in X\times\R$ of $h(r)$.  This approximation can be realized efficiently and has the  following two properties:
\begin{enumerate}[\bf {P}1.]
\item The first component is  off at most by either a baby-step backward or forward, i.e., $\tilde{x} \in \{\bs^{-1}(x), x, \bs(x)\}$.
\item If we have the promise that 
\begin{equation}\label{eq:farAway}
\frac{1}{L} \leq f \leq \Delta_{\bs}(x)-\frac{1}{L}
\end{equation} 
holds, then the first component is correct, i.e., $\tilde{x}=x$, and the second component $\tilde{f}$ satisfies 
\begin{equation}
|f-\tilde{f}|\le\frac{1}{2L} \,.\label{eq:hPrecision}
\end{equation}
\end{enumerate}

Later, we will show that all evaluation points $r$ necessary for the quantum algorithm are such that the condition in equation~(\ref{eq:farAway}) holds with high probability by adding 
a random shift to the evaluation points.

\medskip
\begin{lemma}[Approximate homomorphism $\tilde{h}$]\label{lm:nonDiscJumps}%\ps{\rd{Suppose $d_{\min}=1.1$, $L=1$ and $B1.2$, then it is not possible to evaluate $\tilde{h}(1)$ satisfying P1,P2}}
Let $L$ be a positive integer with $d_{\min}>1/L$.  We consider only evaluation points of the form $r=k/L$ with $r<B$. 
Let $h(r)=(x,f)$ be the perfect $f$-representation.  Then, we can compute an approximate pair $\tilde{h}(r)=(\tilde{x},\tilde{f})$ that satisfies {\bf P1}, {\bf P2}.  The running time is $\text{poly}(\log(B),\log(L),n)$.
\end{lemma}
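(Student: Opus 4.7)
The plan is to follow the strategy of Lemma~\ref{lm:discJumps} line by line, replacing every invocation of $\Delta_{\bs}$ and $\Delta_{\gs}$ by the approximate versions $\tilde{\Delta}_{\bs}$ and $\tilde{\Delta}_{\gs}$ available from assumption \textbf{A}4, and choosing the precision parameter $m$ large enough that the accumulated error in the $f$-coordinate is controlled. In detail, first compute $\tilde{d}_{\bar{k}}=\sum_{i=0}^{\bar{k}-1}\tilde{\Delta}_{\bs}(\bs^i(x_0))$ as an approximation to $d(x_{\bar{k}})$, set $\tilde{a}=[r/\tilde{d}_{\bar{k}}]$, run the double-and-multiply procedure of Corollary~\ref{co:sqrMult} to obtain an approximate pair representing $\tilde{a}\cdot(x_{\bar{k}},0)$, and finish with at most $O(\bar{k})$ approximate baby-step adjustments to arrive at a candidate $(\tilde{x},\tilde{f})$.

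The heart of the argument is tracking how errors accumulate. Every giant-step combination and every baby-step reduction perturbs the second coordinate by at most $1/2^m$. The total number of such operations along the whole procedure is
\[
N \;=\; O\!\bigl(\log(B)\,\bar{k}\ceil{2d_{\max}/d_{\bar{k}}}\bigr) \;=\; O\!\bigl(\log(B)\,\text{poly}(n)\bigr),
\]
so for any candidate pair on the computation path the absolute distance $\psi(\tilde{x},\tilde{f})=d(\tilde{x})+\tilde{f}$ differs from the true $r$ by at most $N/2^m$ modulo $R$. This is the right bound because the equivalence class in $\cG$ is preserved exactly by all reductions, so only $\psi$ has to be tracked. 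Setting
\[
m \;=\; \ceil{\log_2(2LN)} \;=\; O(\log L + \log\log B + \log n)
\]
drives the error below $1/(2L)$, and by \textbf{A}4 each call to $\tilde{\Delta}_{\bs},\tilde{\Delta}_{\gs}$ then costs $\text{poly}(n,m)=\text{poly}(n,\log L,\log B)$, making the total running time $\text{poly}(n,\log L,\log B)$ as claimed.

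It remains to check \textbf{P}1 and \textbf{P}2. For \textbf{P}1, the final sequence of baby-step adjustments halts with $0\le \tilde{f}<\tilde{\Delta}_{\bs}(\tilde{x})$; since the absolute distance of $(\tilde{x},\tilde{f})$ lies within $1/(2L)<d_{\min}/2$ of $r=\psi(x,f)$, the element $\tilde{x}$ must lie in the $\bs$-neighborhood $\{\bs^{-1}(x),x,\bs(x)\}$. For \textbf{P}2, the promise $1/L\le f\le \Delta_{\bs}(x)-1/L$ places $r$ at a safety margin of at least $1/L$ from both $d(x)$ and $d(\bs(x))$; because the approximation error on $\psi(\tilde{x},\tilde{f})$ is at most $1/(2L)<1/L$, the unique baby-step interval containing $\tilde{x}$ must be the one containing $x$, forcing $\tilde{x}=x$, and $|f-\tilde{f}|\le 1/(2L)$ then follows from the error bound above. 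The main obstacle I anticipate is the subtlety that the intermediate pairs produced by double-and-multiply need not themselves be canonical $f$-representations in $\rep(\mc{I})$; once one recognizes that the group law respects the equivalence relation so that absolute distance is the only quantity that must be tracked across steps, everything collapses to the single additive error bound of the previous paragraph.
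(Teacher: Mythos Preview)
Your overall strategy---run the algorithm of Lemma~\ref{lm:discJumps} with the approximate $\tilde{\Delta}_{\bs},\tilde{\Delta}_{\gs}$ and choose $m$ large enough---matches the paper. But your error analysis contains a real mistake: the error in the absolute distance does \emph{not} accumulate additively through the double-and-multiply procedure.

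Consider one doubling step. You hold a pair $(\tilde{x}^{(i-1)},\tilde{f}^{(i-1)})$ whose true absolute distance $\psi(\tilde{x}^{(i-1)},\tilde{f}^{(i-1)})$ is off from the intended $2^{i-1}d(x_{\bar{k}})$ by some $\epsilon_{i-1}$. Doubling produces the pair $(\gs(\tilde{x}^{(i-1)},\tilde{x}^{(i-1)}),\,2\tilde{f}^{(i-1)}-\tilde{\Delta}_{\gs})$, whose true absolute distance is
\[
2d(\tilde{x}^{(i-1)}) + \Delta_{\gs} + 2\tilde{f}^{(i-1)} - \tilde{\Delta}_{\gs}
\;=\; 2\,\psi(\tilde{x}^{(i-1)},\tilde{f}^{(i-1)}) + (\Delta_{\gs}-\tilde{\Delta}_{\gs}).
\]
Hence $\epsilon_i \le 2\epsilon_{i-1} + 1/2^m$ (plus the baby-step contributions), exactly the geometric recursion the paper derives. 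After $\log\tilde{a}$ doublings and the subsequent sum, the error is of order $\tilde{a}/2^m \approx B/(d_{\bar{k}}2^m)$, not $N/2^m$ with $N=O(\log(B)\,\text{poly}(n))$. Your choice $m=\ceil{\log_2(2LN)}=O(\log L+\log\log B+\log n)$ is therefore too small; you need $m$ on the order of $\log L+\log B+O(\log n)$ to push the error below $1/(2L)$.

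The fix is easy and the lemma survives: with the corrected $m$ each call to $\tilde{\Delta}_{\bs},\tilde{\Delta}_{\gs}$ still costs $\text{poly}(n,\log L,\log B)$, so the overall running time remains $\text{poly}(n,\log L,\log B)$. Your arguments for \textbf{P}1 and \textbf{P}2 are then fine once the $1/(2L)$ bound on $|\psi(\tilde{x},\tilde{f})-r|$ is genuinely established. The ``main obstacle'' you flagged (non-canonical intermediate pairs) is not the issue; the issue is that the group law, precisely because it respects $\psi$, \emph{doubles} the existing error when you double an element.
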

\begin{proof}
We analyze what happens if we run the algorithm in Lemma~\ref{lm:discJumps}, but now rely on the approximate versions $\tilde{\Delta}_{\bs}$ and $\tilde{\Delta}_{\gs}$.  
Recall that the parameter $m$ characterizes the precision of the approximations.  The maximal deviation between the approximate and perfect values is smaller than $1/2^m$.  

We use $\tilde{d}_{\acc}(\cdot)$ to denote the corresponding approximate accumulated distances 
 of the (intermediate) $f$-representations and their first components.
We use $d_{\acc}(\cdot)$ to denote the correct accumulated distance of the representations and elements (these distances exist even though we cannot always compute them). The accumulated 
distances are not taken modulo $R$ and take into account how the $f$-representation is generated. 
A key observation that we need in the proof is that 
$d_{\acc}(\tilde{x},\tilde{f}) = d_{\acc}(\tilde{x})+\tilde{f}$ and 
$\tilde{d}_{\acc}(\tilde{x},\tilde{f}) = \tilde{d}_{\acc}(\tilde{x})+\tilde{f}$, so that
$d_{\acc}(\tilde{x},\tilde{f})- \tilde{d}_{\acc}(\tilde{x},\tilde{f})  = 
d_{\acc}(\tilde{x})- \tilde{d}_{\acc}(\tilde{x})$.

The characterizing condition of the perfect $f$-representation is 
\begin{equation}\label{eq:perfectX}
d_{\acc}(x) \le r < d_{\acc}(x) + \Delta_{\bs}(x)\,.
\end{equation}
We can only guarantee
\begin{equation}\label{eq:ApproxAcc}
\tilde{d}_{\acc}(\tilde{x}) \le r < \tilde{d}_{\acc}(\tilde{x}) + \tilde{\Delta}_{\bs}(\tilde{x})
\end{equation}
for the approximate pair $(\tilde{x},\tilde{f})$.

Assume that $m$ has been chosen to be sufficiently large so that 
\begin{equation}
| \tilde{d}_{\acc}(\tilde{x}) - d_{\acc}(\tilde{x}) | \le \frac{1}{2L}\label{eq:accDistCondition}
\end{equation} 
holds.  Together with equation~(\ref{eq:ApproxAcc}) this implies
\begin{equation}
d_{\acc}(\tilde{x}) - \frac{1}{2L} \le r < d_{\acc}(\tilde{x}) + \Delta_{\bs}(\tilde{x}) + \frac{1}{2L} + \frac{1}{2^m}\,.
\end{equation}
This condition on $\tilde{x}$ is weaker than the condition of the perfect $x$ in equation~(\ref{eq:perfectX}).  But since $1/2^m < 1/L < d_{\min}$ we must have
$\tilde{x}\in\{\bs^{-1}(x),x,\bs{(x)}\}$, depending on which of the three cases $r<d_{\acc}(\tilde{x})$, $d_{\acc}(\tilde{x})\le r < \tilde{d}_{\acc}(\tilde{x}) + \Delta_{\bs}(\tilde{x})$, or 
$\tilde{d}_{\acc}(\tilde{x}) + \Delta_{\bs}(\tilde{x}) \le r$ occurs.  We cannot have a deviation by more than one baby-step backward or forward because otherwise equation~(\ref{eq:ApproxAcc}) would
not be satisfied.

If we know that $f$ satisfies $\frac{1}{L} \leq f \leq \Delta_{\bs}(x)-\frac{1}{L}$, then we can conclude that $\tilde{x}=x$ must hold.  This is because the first and third cases are excluded.  The condition on $\tilde{f}$ is automatically satisfied in this case since
$\tilde{f} = r - \tilde{d}_{\acc}(\tilde{x})$, which is the same as $r - \tilde{d}_{\acc}(x)$.

We now show how to choose $m$ so that the condition in equation~(\ref{eq:ApproxAcc}) holds.
The algorithm in Lemma~\ref{lm:discJumps} has two steps. In the first step, we
compute $a\cdot (x_{\bar{k}},0)$, where  $a=[r/d(x_{\bar{k}})]$. This gives us a representation $(x',f')$, such that 
\be
|d_{\acc}(x',f')-r|\leq \frac{d(x_{\bar{k}})}{2}.
\ee
Then we apply a sequence of baby-steps to obtain an $f$-representation $(x,f)$, 
which satisfies $d_{\acc}(x,f) = r$.

Working with $\tilde{\Delta}_{\bs}$ and $\tilde{\Delta}_{\gs} $, in the first step we actually
compute $(\tilde{x}',\tilde{f}')$ an approximation of $\tilde{a} \cdot (x_{\bar{k}},0) $, where 
$\tilde{a}=[r/\tilde{d}({x_{\bar{k}}})]$. 
 
Let us analyze the error in this computation. The computation of $\tilde{a} \cdot (x_{\bar{k}},0)$ itself can be broken down into two parts: (i) computation of representations of the form $(\tilde{x}^{(i)}, \tilde{f}^{(i)})$ which approximate $2^i(x_{\bar{k}},0)$ and (ii) summing  $O(\log \tilde{a})$ such representations.

The error at the very beginning $e_0$ satisfies
\begin{equation}
e_0 := | \tilde{d}_{\acc}(\tilde{x}^{(0)}) - d_{\acc}(\tilde{x}^{(0)}) |= | \tilde{d}_{\acc}(\tilde{x}^{(0)},\tilde{f}^{(0)}) - d_{\acc}(\tilde{x}^{(0)},\tilde{f}^{(0)}) | < \frac{\bar{k}}{2^m}\,.\nonumber
\end{equation}
Note that $\tilde{d}_{\acc}(\tilde{x}^{(0)}) \ge d_{\bar{k}}$ holds because if we get a value strictly smaller than $d_{\bar{k}}$ we can replace it by $d_{\bar{k}}$, because of {\bf A}7.  The error in the 
$i$th step 
\begin{equation}
  e_i:=| \tilde{d}_{\acc}(\tilde{x}^{(i)}) - d_{\acc}(\tilde{x}^{(i)}) | = 
  | \tilde{d}_{\acc}(\tilde{x}^{(i)}, \tilde{f}^{(i)}) - d_{\acc}(\tilde{x}^{(i)}, \tilde{f}^{(i)}) |\nonumber
\end{equation}
satisfies the recursion 
\begin{equation}
e_i < 2 e_{i-1} + \frac{1}{2^m} + \frac{\bar{k}}{2^m}\ceil{\frac{2d_{\max}}{d_{\bar{k}}}}
\end{equation}
The recursion relation can be easily explained by considering equation~\eqref{eq:sqrMult}. 
The first term is due to the fact that the error in $\tilde{f}^{(i-1)}$ is
multiplied by 2,  the second term is due to one
giant-step, and the third term due to $O(\bar{k}\ceil{2d_{\max}/d_{\bar{k}}})$ baby-steps used to obtain a valid $f$-representation.
This implies
\begin{equation}
e_i < \frac{1}{2^{m-i}}\left(\bar{k}+1+\bar{k}\ceil{\frac{2d_{\max}}{d_{\bar{k}}}} \right)\,.
\end{equation}
In order to obtain $(\tilde{x}',\tilde{f}')$, we have to sum $O(\log \tilde{a})$ such 
$f$-representations, where $i $ varies from $0$ to $\log \tilde{a} -1$. 
Each sum adds an additional error term due to the giant step and the baby-steps used for
reduction. Therefore the error at the end of the first step is given by 
\be
e'&: =&| \tilde{d}_{\acc}(\tilde{x}') - d_{\acc}(\tilde{x}') | = 
  | \tilde{d}_{\acc}(\tilde{x}', \tilde{f}') - d_{\acc}(\tilde{x}', \tilde{f}') |\\
  &=& \frac{\tilde{a}}{2^{m}}\left(\bar{k}+1+\bar{k}\ceil{\frac{2d_{\max}}{d_{\bar{k}}}} \right) + \frac{\log \tilde{a}}{2^m} \left( 1+ \bar{k}\ceil{\frac{2d_{\max}}{d_{\bar{k}}}}\right)\,,\\
  &\leq& \frac{r+d_{\bar{k}}}{2^{m}d_{\bar{k}}}\left(\bar{k}+1+\bar{k}\ceil{\frac{2d_{\max}}{d_{\bar{k}}}} \right) + \frac{\log (r/d_{\bar{k}}+1)}{2^m}\left( 1+ \bar{k}\ceil{\frac{2d_{\max}}{d_{\bar{k}}}}\right)\nonumber \,
\ee
where we used the fact that $\tilde{a} \leq r/d_{\bar{k}}+1$.
The $f$-representation $(x',f')$ is at most at a distance\footnote{We can tighten this by a factor of 2. But this suffices. } of $d_{\max}\bar{k}$ from $r$. 
Thus $(\tilde{x}',\tilde{f}')$ is at most a distance of $(e'+{d_{\max}}\bar{k})$ from $r$
and we need to take at most $\bar{k}\ceil{(e'+{d_{\max}}\bar{k})/d_{\bar{k}}}$ baby-steps to obtain 
$(\tilde{x},\tilde{f})$.

 The error in the accumulated distances of the final representation $(\tilde{x},\tilde{f})$ is given by 
 \be
 \tilde{e}&: =&| \tilde{d}_{\acc}(\tilde{x}) - d_{\acc}(\tilde{x}) | = 
  | \tilde{d}_{\acc}(\tilde{x}, \tilde{f}) - d_{\acc}(\tilde{x}, \tilde{f}) |\\
  &=& e'+ \frac{\bar{k}}{2^m}\ceil{\frac{e'+d_{\max}\bar{k}}{d_{\bar{k}}}}
 \ee
The dominant term in the error is the first term $e'$, as it is proportional to $r$, while the
second term is proportional to $r/2^m$ and therefore does not contribute too much as $m$ is  
large. We can make the error smaller than $1/2L$ as required in equation~\eqref{eq:accDistCondition} by choosing $m=\text{poly}(\log(B), \log(L))$. 
\end{proof}
The proof does not actually require that the evaluation points are of the form $k/L$.

Analogous results hold for the homomorphism $g$.  We state them without proof since the above argument can be easily  adapted.

Let $g(a,r)=(x,f)$ be the perfect $f$-representation with $(x,f)\equiv (x_0,r)$.  We can only determine an approximate $\tilde{g}(a,r)=(\tilde{x},\tilde{f})\in X\times\R$ of $g(a,r)$.  This approximation can be realized efficiently and has the  properties {\bf P1}, {\bf P2}.

\medskip

\begin{lemma}[Approximate homomorphism $\tilde{g}$]\label{lm:approxg}
Let $L$ be a positive integer with $d_{\min}>1/L$.  We consider only evaluation points of the form $(a,r)$ with $a\in\{0,1,\ldots,A-1\}$ and $r=k/L\in [0,B]$.
Let $g(a,r)=(x,f)$ be the perfect $f$-representation.  Then, we can compute an approximate pair $\tilde{g}(a,r)=(\tilde{x},\tilde{f})$ that satisfies {\bf P1}, and {\bf P2}. The running time is $\text{poly}(\log(A),\log(B),\log(L),n)$.
\end{lemma}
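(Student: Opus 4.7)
The plan is to mimic the proof of Lemma~\ref{lm:nonDiscJumps} by decomposing $g(a,r)$ via its defining identity $g(a,r) = a\cdot (x,0) + h(r)$ and tracking the accumulated rounding errors through each of the three sub-computations: the double-and-multiply for $a\cdot(x,0)$, the evaluation of $h(r)$ via Lemma~\ref{lm:nonDiscJumps}, and the final giant-step combining the two.

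First I would compute the approximate $f$-representation $\tilde{p}=(\tilde{x}_1,\tilde{f}_1)$ of $a\cdot (x,0)$ using Corollary~\ref{co:sqrMult} with the approximate distances $\tilde{\Delta}_{\bs}$ and $\tilde{\Delta}_{\gs}$ in place of the exact ones. Exactly as in the analysis of $\tilde{h}$, the error $e_i := |\tilde d_{\acc}(\tilde x^{(i)}) - d_{\acc}(\tilde x^{(i)})|$ obeys the recursion $e_i < 2e_{i-1}+\frac{1}{2^m}+\frac{\bar k}{2^m}\lceil 2d_{\max}/d_{\bar k}\rceil$, and the $O(\log a)$ partial sums in the binary expansion of $a$ add at most $O(\log a)$ extra terms of the same order. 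Thus the error in $\tilde p$ is bounded by $\frac{a}{2^m}\big(\bar k + 1 + \bar k\lceil 2d_{\max}/d_{\bar k}\rceil\big) + \frac{\log a}{2^m}\big(1+\bar k\lceil 2d_{\max}/d_{\bar k}\rceil\big)$, which is $\text{poly}(n)\cdot a/2^m$.

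Next I would compute $\tilde h(r)$ by Lemma~\ref{lm:nonDiscJumps}; its accumulated-distance error is bounded by $\text{poly}(\log B,n)/2^m$ when the same precision parameter $m$ is used. Finally I would form the sum in $\cG$ by a single giant-step followed by at most $\bar k\lceil 2d_{\max}/d_{\bar k}\rceil$ reducing baby-steps (Lemma~\ref{lm:grpAdd}); this introduces only an additional additive error of $\frac{1}{2^m}+\frac{\bar k}{2^m}\lceil 2d_{\max}/d_{\bar k}\rceil$. The three contributions combine to give
\begin{equation*}
|\tilde d_{\acc}(\tilde x) - d_{\acc}(\tilde x)| \;\le\; \text{poly}(n)\cdot\frac{a + \log B + 1}{2^m}\,.
\end{equation*}
Choosing $m = \text{poly}(\log A,\log B,\log L,n)$ makes this upper bound smaller than $1/(2L)$, which is precisely the analogue of equation~(\ref{eq:accDistCondition}) needed in the argument of Lemma~\ref{lm:nonDiscJumps}.

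With this bound on the accumulated-distance error in hand, properties \textbf{P1} and \textbf{P2} follow exactly as in Lemma~\ref{lm:nonDiscJumps}: the approximate condition $\tilde d_{\acc}(\tilde x)\le r' < \tilde d_{\acc}(\tilde x)+\tilde\Delta_{\bs}(\tilde x)$ (where $r'\equiv a\,d(x)+r\bmod R$) combined with $|\tilde d_{\acc}(\tilde x)-d_{\acc}(\tilde x)|\le 1/(2L)$ and $1/2^m<1/L<d_{\min}$ forces $\tilde x\in\{\bs^{-1}(x),x,\bs(x)\}$, and if the perfect $f$ satisfies $1/L\le f\le \Delta_{\bs}(x)-1/L$ then the first and third cases are ruled out, yielding $\tilde x=x$ and $|f-\tilde f|\le 1/(2L)$. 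The dominant step in the running-time accounting is the double-and-multiply, which takes $O(\log a\cdot\text{poly}(n))$ arithmetic operations each performed to precision $m$; since $m$ is polynomial in $\log A$, $\log B$, $\log L$ and $n$, the total cost is $\text{poly}(\log A,\log B,\log L,n)$. The only mildly delicate point—and the step I would want to write out most carefully—is the book-keeping that shows the error from the double-and-multiply is proportional to $a$ rather than $2^{\log a}\cdot\text{poly}(n)$ blow-up, which is handled exactly as in the recursion above.
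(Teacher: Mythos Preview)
Your approach is correct and is precisely the adaptation the paper has in mind; indeed the paper omits the proof entirely, saying only that ``the above argument can be easily adapted,'' and your write-up supplies exactly that adaptation by tracking the accumulated-distance error through the double-and-multiply for $a\cdot(x,0)$, the evaluation of $\tilde h(r)$, and the final addition in~$\cG$. One small slip: the accumulated-distance error from $\tilde h(r)$ is proportional to $r$ (hence to $B$), not to $\log B$, so your combined bound should read $\text{poly}(n)\cdot(a+r+1)/2^m$ rather than $\text{poly}(n)\cdot(a+\log B+1)/2^m$; this does not affect the conclusion, since choosing $m\ge \log\big(2L\,\text{poly}(n)(A+B)\big)=\text{poly}(\log A,\log B,\log L,n)$ still drives the error below $1/(2L)$.
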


%%%%%%%%%%%%%%%%%%%%%%%%%%%%%%%%%%%%%%%%%%%%%%%%%%%%%%%%%%%%%%%%%%%%%%%%%%%%
%
% Quantum algorithm for approximating the period of pseudo periodic states
%
%%%%%%%%%%%%%%%%%%%%%%%%%%%%%%%%%%%%%%%%%%%%%%%%%%%%%%%%%%%%%%%%%%%%%%%%%%%%

\section{Quantum algorithm for approximating the period of pseudo periodic states}
\label{sec:periodicStates}
In this section we generalize the notion of periodic states introduced 
in \cite{KLM07}. We assume that the quantum states are elements of a $q$-dimensional complex Hilbert space, denoted by $\C^q$.

%
% Pseudo periodic states
%

\subsection{Pseudo-periodic states}

\begin{defn}[Periodic state]
A quantum state  in   $\C^q$  is  periodic with period 
$r\in \mbb{Z}$ at offset $k\in\{0,1,\ldots, r-1 \}$ if it is of the form
\ben
\ket{\psi}_{k,r} :=\frac{1}{\sqrt{p}}\sum_{j=0}^{p-1}\ket{k+jr},\label{eq:periodicState}
\een
%where $p = \floor{ (q-1)/r+1}$.
where $p = \floor{ (q-k-1)/r+1}$. We denote a periodic state with period $r$ at offset $k$ by
$\ket{\psi}_{k,r}$.
\end{defn}
Periodic states can be created by the evaluation of injective functions over a uniform 
superposition. To be more precise, we create the state 
$\ket{\psi}= q^{-1/2}\sum_{i=0}^{q-1}\ket{i}\ket{f(i)}$,
and measure the second register. We assume that $f$ is periodic with period $r$.
It is possible to recover the period $r$ by means of Fourier sampling. In fact, the period 
can be recovered even when $r$ is irrational. For this reason, we generalize these periodic 
states to a larger class of quantum states called the pseudo-periodic states. 

\medskip

\begin{defn}[Pseudo-periodic state]
A pseudo-periodic state in $\C^q$, with possibly irrational period $r\in \mbb{R} $, is of the form:
\ben
\ket{\psi}_{k,r} &= &\frac{1}{\sqrt{p}} \sum_{j=0}^{p-1} \ket{\lfloor k+j r\rceil},\label{eq:irrPeriodicState}
\een
where $k\in \{0,1,\ldots, \floor{r} \}$ and  $p$ is the largest integer such that 
$\lfloor k+(p-1)r\rceil\leq (q-1)$. 
\end{defn}
Please note that $\lfloor x \rceil $ can be either $\floor{x}$ or $\ceil{x}$, therefore,  
 $p$ take any integer value in the set
$\{\floor{(q-2)/r},\ldots, \floor{q/r}+1\}$, depending on the value of the offset $k$. If we assume that $r>2$, then we can restrict 
$p \in \{\floor{q/r}-1,\floor{q/r}, \floor{q/r}+1 \}$.

The weakly periodic functions defined in \cite{hallgren07} are one class of functions which 
can induce such pseudo-periodic states.
As we  show in this section, we can  recover the period even when the state is ``almost''
periodic.
 We observe
that in the definition of the periodic states above, there is an implicit
dependence on the offset $k$; this offset is usually the outcome of some measurement, and
therefore random. 

%
% Perturbed geometric sums with missing terms
%

\subsection{Perturbed geometric sums with missing terms}

The following lemma is  at the heart of the analysis of the quantum algorithms for infrastructures.  It is   crucial for  understanding the performance of these algorithms. 
The special  case $\mc{J}=\{0,1,\ldots, n-1 \}$ suffices to bound the probability of the 
algorithm for computing the circumference. The more general case where $\mc{J}$ is a proper
subset of $\{0,1,\ldots, n-1 \}$ is necessary for the analysis of the quantum algorithm for
computing the discrete logarithms. 

\medskip

\begin{lemma}[Perturbed geometric sums with missing terms]\label{lm:pertMissingSeries}
Let  $\omega$ be the $n$th root of unity $e^{2\pi i/n}$, $n\geq 2$, $\theta$ an arbitrary real-valued function defined on $\mc{J}\subseteq \{0,\ldots,n-1\}$ satisfying the following conditions on 
$\theta_j$ and $|\mc{J}|$:
\begin{subequations}
  \begin{equation}
|\theta_j| \le n/32
    \label{eq:thetaDist}
  \end{equation}
  \begin{equation}
|\mc{J}| \geq n(1-c_\delta)/(1-2\sin(\pi/32))
    \label{eq:JSize}
  \end{equation}
\end{subequations}
%with  $|\theta_j| \le n/32$ and $|\mc{J}| \geq n(1-c_\delta)/(1-2\sin(\pi/32))$, 
where %$c_{\delta} = \frac{2}{\pi} $ if $|\delta| \leq 1/2$ and $c_{\delta} = \frac{4}{3\sqrt{2}\pi}$ if $1/2<|\delta|\leq 3/4$.
\ben
c_\delta &=&  \rm{sinc} (\delta) = \frac{\sin(\pi \delta)}{\pi \delta} \quad  \text{ if } |\delta| < 1.\label{eq:cdelta}
%c_\delta &=& \left\{\ba{ll} \frac{2}{\pi}& \text{if }0 \leq |\delta|\leq 1/2 \\ \frac{\sin(\pi \delta)}{\pi \delta} & \text{if }1/2 < |\delta| \leq 3/4\ea \right.\label{eq:cdelta}
\een
Then the following inequality holds:
\begin{eqnarray}
\frac{1}{|\mc{J}|^2}
\left|
\sum_{j\in \mc{J}}
\omega^{\delta j + \theta_j}
\right|^2 
& \ge & 
\left(
1 
- 2 \sin(\pi/32)
- (
1-c_\delta
) \frac{n}{|\mc{J}|}\right)^2 \,.
\end{eqnarray}
\end{lemma}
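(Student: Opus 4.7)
The strategy is to peel off the perturbation $\theta_j$ first by triangle inequality, then bound the clean geometric sum $\sum_{j\in\mc{J}} \omega^{\delta j}$ from below in two stages: first for the full sum over $\{0,\ldots,n-1\}$ via the standard closed form, then account for the missing terms by another triangle inequality. The hypotheses (\ref{eq:thetaDist}) and (\ref{eq:JSize}) will appear respectively as the ``perturbation cost'' $2\sin(\pi/32)$ and as the ``missing-mass cost'' $(1-c_\delta)n/|\mc{J}|$, and the latter hypothesis will also guarantee that the bracket being squared is nonnegative.

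\textbf{Step 1 (remove the perturbation).} Factor $\omega^{\delta j+\theta_j} = \omega^{\delta j}\omega^{\theta_j}$ and write $\omega^{\theta_j} = 1 + (\omega^{\theta_j}-1)$. The elementary identity $|e^{i\alpha}-1| = 2|\sin(\alpha/2)|$ combined with $|\theta_j| \le n/32$ (hypothesis (\ref{eq:thetaDist})) gives $|\omega^{\theta_j}-1| = 2|\sin(\pi\theta_j/n)| \le 2\sin(\pi/32)$. Triangle inequality then yields
\begin{equation*}
\left|\sum_{j\in\mc{J}} \omega^{\delta j+\theta_j}\right|
\;\ge\; \left|\sum_{j\in\mc{J}} \omega^{\delta j}\right| \;-\; 2|\mc{J}|\sin(\pi/32).
\end{equation*}

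\textbf{Step 2 (bound the clean sum).} For the full geometric sum, use $\sum_{j=0}^{n-1} e^{2\pi i\delta j/n} = (e^{2\pi i\delta}-1)/(e^{2\pi i\delta/n}-1)$ so that its modulus equals $|\sin(\pi\delta)|/|\sin(\pi\delta/n)|$. Since $|\delta|<1$ implies $|\pi\delta/n|\le \pi/2$, I can use $|\sin(\pi\delta/n)|\le \pi|\delta|/n$ to lower-bound this by $n\,|\sin(\pi\delta)|/(\pi|\delta|) = n\, c_\delta$. Then separate off the missing indices:
\begin{equation*}
\left|\sum_{j\in\mc{J}} \omega^{\delta j}\right|
\;\ge\; \left|\sum_{j=0}^{n-1} \omega^{\delta j}\right| - \left|\sum_{j\notin\mc{J}} \omega^{\delta j}\right|
\;\ge\; n\,c_\delta - (n - |\mc{J}|).
\end{equation*}

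\textbf{Step 3 (combine and square).} Chaining the two bounds and dividing by $|\mc{J}|$ gives
\begin{equation*}
\frac{1}{|\mc{J}|}\left|\sum_{j\in\mc{J}} \omega^{\delta j+\theta_j}\right|
\;\ge\; \frac{n\,c_\delta - (n-|\mc{J}|)}{|\mc{J}|} - 2\sin(\pi/32)
\;=\; 1 - 2\sin(\pi/32) - (1-c_\delta)\frac{n}{|\mc{J}|}.
\end{equation*}
Hypothesis (\ref{eq:JSize}), rearranged, is exactly $(1-c_\delta)n/|\mc{J}| \le 1 - 2\sin(\pi/32)$, so the right-hand side is nonnegative and may be squared without reversing the inequality; this yields the claimed bound.

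The argument is mostly bookkeeping; the only subtle point is making sure the two losses (perturbation and missingness) are correctly tracked in a single triangle-inequality chain, and noticing that the technical condition (\ref{eq:JSize}) is precisely what keeps the final bracket nonnegative so that squaring is monotone. No deeper obstacle arises, provided one uses the $|\sin x|\le |x|$ estimate in the right direction when lower-bounding the full geometric sum by $n\,c_\delta$.
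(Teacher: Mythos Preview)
Your proof is correct and follows essentially the same two-stage triangle-inequality approach as the paper: bound the full geometric sum below by $n c_\delta$, subtract the missing terms, and pay $2\sin(\pi/32)$ for the perturbation. The only cosmetic differences are that the paper handles the missing terms first and phrases the perturbation step in operator-norm language (writing the perturbed sum as $\|PU|\psi\rangle\|$ with $U=\mathrm{diag}(\omega^{\theta_j})$ and bounding $\|I-U\|$), whereas you do the perturbation step first via a direct $|\omega^{\theta_j}-1|\le 2\sin(\pi/32)$ triangle inequality; the resulting bounds are identical.
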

\begin{proof}
Triangle inequality and upper bound on the absolute value of the unperturbed geometric sum without missing terms imply
\begin{eqnarray}
\left|
\sum_{j\in \mc{J}}
\omega^{\delta j} 
\right|
+
\left|
\sum_{j\in \bar{\mc{J}}}
\omega^{\delta j}
\right| 
& \ge &
\left|
\sum_{j\in \mc{J}}
\omega^{\delta j} 
+
\sum_{j\in \bar{\mc{J}}}
\omega^{\delta j}
\right| \nonumber \nonumber\\
& = &
\left|
\sum_{j=0}^{n-1}
\omega^{\delta j}
\right| =
%& = &
\left|
\frac{1 - \omega^{\delta n}}{1 - \omega^{\delta}} 
\right| \nonumber\\
& = &
\left|
\frac{\sin( \pi \delta)}{\sin(\pi\delta/n)} 
\right|\label{eq:sinEq}\\
& \geq & \left|\frac{\sin (\pi \delta)}{\pi\delta/n}\right| = n c_\delta,\label{eq:sinEq2}
\end{eqnarray}
The  equality in equation~\eqref{eq:sinEq} follows from $|1-e^{i \vartheta}| = |e^{-i\vartheta/2} - e^{i \vartheta/2}| = 2|\sin(\vartheta/2)|$ holding for all $\vartheta\in\mbb{R}$.  For the inequality in equation~\eqref{eq:sinEq2}  we used the fact that $|\sin \vartheta| \leq |\vartheta|$, when $|\vartheta|< \pi/2$.

Subtracting the absolute value of the sum over $\bar{\mc{J}}$ from both sides of 
equation~\eqref{eq:sinEq2}
and dividing by $|\mc{J}|$ yields
\begin{eqnarray}
\frac{1}{|\mc{J}|}
\left|
\sum_{j\in \mc{J}}
\omega^{\delta j} 
\right| 
& \ge &
c_\delta \frac{n}{|\mc{J}|} - 
\frac{1}{|\mc{J}|} \left| \sum_{j\in \bar{\mc{J}}}
\omega^{\delta j}
\right| \nonumber\\ 
& \ge &
c_\delta\frac{n}{|\mc{J}|} - 
\frac{\bar{\mc{J}}}{|\mc{J}|}\nonumber \\
& = &
1 - \Big(
1-c_\delta\Big) \frac{n}{|\mc{J}|}\,.\label{eq:unpertBound}
\end{eqnarray}

We now bound the ``perturbed'' geometric sum.  To this end, we use some basic ideas from quantum information theory. Define the states $|\psi\>=\frac{1}{\sqrt{|\mc{J}|}} \sum_{j\in\mc{J}}  \omega^{\delta j} |j\>$ and $|e\> = \frac{1}{\sqrt{|\mc{J}|}} \sum_{j\in \mc{J}}|j\>$, the projector $P=|e\>\<e|$, and the diagonal unitary matrix $U=\mathrm{diag}(\omega^{ \theta_0},\ldots, \omega^{\theta_{|\mc{J}|-1}})$.  Observe that the square of the absolute value of unperturbed geometric sum is equal to $\| P |\psi\> \|^2$ and that of the perturbed one to $\| P U |\psi\> \|^2$. We have
\begin{eqnarray}
\Big| \| P |\psi\> \| -  \| P U |\psi\> \| \Big| 
& \le &
\| P |\psi\> -  P U |\psi\> \|  \nonumber\\
& \le & 
\| P \| \cdot \| I - U \| \cdot \| |\psi\> \| \nonumber \\
& = &
2 \max_j \Big\{ |\sin(2 \pi \theta_j/ (2 n))| \Big\} \nonumber\\
& \le &
2 \sin(\pi/32)\,.\label{eq:inter2}
\end{eqnarray}
The upper bound on $\|I-U\|$ follows by noting that the entries of the diagonal matrix $I-U$ are $1-e^{2\pi i \theta_j/ n}$ and using the above identity for the absolute value of expressions of this form. Let $\left \| P |\psi\> \right\| =x$
and $\| P U |\psi\> \|=y$. Then equation~\eqref{eq:inter2}  %bound 
implies the desired result since 
\begin{equation}
y^2 \ge \big(x-2\sin(\pi/32) \big)^2 \ge \left(1- 2 \sin(\pi/32)- \left(1-
c_\delta\right)\frac{n}{|\mc{J}|} 
\right)^2\,
\end{equation}
where we used equation~\eqref{eq:unpertBound} in the last step.
\end{proof}

We pause to make two observations regarding the application of this result. First, 
we must ensure that $|\mc{J}|/n \geq (1-c_{\delta})/(1-2\sin(\pi/32)) $ for $\delta \in [0, 1)$.
 Second, the choice of $|\theta_j|\leq n/32$,  can be improved  in that we can tolerate a higher perturbation, depending on the actual value of 
$\delta$. Although, we retain this bound on $\theta_j$ throughout this paper for the
sake of a clearer exposition, 
optimizing this bound on $\theta_j$ based on $\delta$ will enable us to obtain better bounds 
on the success probability of the quantum algorithms.

%
% Subsection: Quantum algorithm for approximating the period of pseudo periodic states
%

\subsection{Presentation and proof of the quantum algorithm}

Now we shall give a quantum algorithm for estimating the period of a pseudo-periodic
state. In general, these states arise  from some periodic functions, therefore the proposed quantum algorithm 
can be used to estimate the periods of such functions. 

\begin{theorem}\label{th:qalgoPeriodicState}
Given a pair of  pseudo-periodic states   whose period $S\in \mbb{R}$  is bounded as 
 $M\geq S>1$, then with a probability $\Omega(1)$ 
and in time $\text{poly}(\log S)$,
Algorithm~\ref{alg:periodicStates} gives a list of real numbers $\mc{L}$ such that for some $\hat{S}\in \mc{L}$, we have $|S-\hat{S}| \leq 1$. Further, 
$|\mc{L}| = O ( \text{poly}\log S )$ and the success probability %$p_{\rm{success}}$
is given by 
\ben
p_{\rm{success}}\geq \frac{1 }{2}\left({\frac{1}{32}-\frac{2}{S}}\right)^2
\left({1-\frac{2S}{q}}\right)^2 \left( \sinc\Big(\frac{1}{2}+\frac{1}{2S}\Big)- 2 \sin(\pi/32) 
\right)^4\,\, \label{eq:psuccessPeriodic}
\een
where $M^2\leq q<2M^2$. % and $p=\floor{q/S+1}$.
\end{theorem}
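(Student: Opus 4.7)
The plan is to analyze the probability of a ``good'' Fourier outcome from a single pseudo-periodic state, square it to account for the pair of independent measurements, and then attach a classical post-processing step that turns the two outcomes into the short list $\mathcal{L}$ of candidate periods.

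First, I would compute the amplitude at $|c\rangle$ after applying $\mathrm{QFT}_q$ to $|\psi\rangle_{k,S}$:
\[
A(c) \;=\; \frac{1}{\sqrt{pq}}\sum_{j=0}^{p-1} e^{2\pi i c\lfloor k+jS\rceil/q}\,.
\]
Writing $\lfloor k+jS\rceil=k+jS+\epsilon_j$ with $|\epsilon_j|<1$ and $cS=\ell q+\delta_c$ with $\ell$ the integer nearest to $cS/q$, the global phase $e^{2\pi i ck/q}$ factors out and one is left with $\sum_{j=0}^{p-1} e^{2\pi i(j\delta_c+c\epsilon_j)/q}$. I would then match this to Lemma~\ref{lm:pertMissingSeries} by taking $n=p$, $\mathcal{J}=\{0,\dots,p-1\}$, $\delta_L:=p\delta_c/q$, $\theta_j:=pc\epsilon_j/q$ and $\omega=e^{2\pi i/p}$. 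Provided both hypotheses of the lemma hold, it gives
\[
|A(c)|^2 \;\ge\; \frac{p}{q}\Bigl(c_{\delta_L}-2\sin(\pi/32)\Bigr)^2\,.
\]

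Second, I would identify the good outcomes and sum their probabilities. The perturbation hypothesis $|\theta_j|=p|c\epsilon_j|/q\le p/32$ forces $|c|\le q/32$. For each integer $\ell$ with $1\le \ell\le \lfloor S/32\rfloor-1$, take $c_\ell$ to be the nearest integer to $\ell q/S$; then $|c_\ell|\le q/32$ and $|\delta_c|\le S/2$, and using $p\le q/S+1$ together with $q\ge M^2\ge S^2$ gives $|\delta_L|\le pS/(2q)\le \tfrac12+\tfrac{1}{2S}$, so $c_{\delta_L}\ge\sinc(\tfrac12+\tfrac{1}{2S})$. This in turn implies the size hypothesis, which reduces to $c_{\delta_L}\ge 2\sin(\pi/32)$ in the full-set case $|\mathcal{J}|=n$ and is guaranteed since $\sinc(\tfrac12+\tfrac{1}{2S})>2\sin(\pi/32)$. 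Since $p\ge q/S-2$ also gives $p/q\ge (1-2S/q)/S$, and since there are at least $S/32-2$ such $\ell$'s, summing $|A(c_\ell)|^2$ over the good outcomes yields
\[
\Pr[\text{good }c]\;\ge\;\Bigl(\tfrac{1}{32}-\tfrac{2}{S}\Bigr)\Bigl(1-\tfrac{2S}{q}\Bigr)\Bigl(\sinc(\tfrac12+\tfrac{1}{2S})-2\sin(\pi/32)\Bigr)^2
\]
for a single measurement.

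Third, because the two pseudo-periodic states are measured independently, the probability that both outcomes are good is at least the square of the above, which reproduces the three factors of~(\ref{eq:psuccessPeriodic}) with exponents $2$, $2$ and $4$. The remaining factor of $1/2$ is contributed by the classical post-processing: the continued-fraction expansion of each of $c_1/q$ and $c_2/q$ produces convergents $\ell_i/\widehat{S}_i$ approximating $\ell_i/S$, and combining $(c_1,\ell_1)$ and $(c_2,\ell_2)$ in the usual Shor-style manner produces a list $\mathcal{L}$ of size $O(\mathrm{poly}\log S)$ that contains some $\hat S$ with $|S-\hat S|\le 1$ whenever $\gcd(\ell_1,\ell_2)$ is sufficiently small, an event of probability at least $6/\pi^2>1/2$.

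The delicate step is the identification $n=p$ rather than the more naive $n=q$: this rescaling shrinks the perturbation parameter from $\delta_c$ to $\delta_L=p\delta_c/q$ and shrinks $n/|\mathcal{J}|$ to $1$, which is precisely what makes the lemma usable in this regime. Once that identification is in place, the rest is a book-keeping exercise in which the three constraints---the support size $p\approx q/S$, the perturbation window $|c|\le q/32$, and the admissible width $|\delta_L|\le 1/2+1/(2S)$---must be reconciled into the three separate multiplicative factors appearing in~(\ref{eq:psuccessPeriodic}). The continued-fraction post-processing is essentially standard Shor-style extraction and is not expected to present serious difficulty.
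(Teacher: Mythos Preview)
Your Fourier-sampling analysis is correct and essentially identical to the paper's: the rescaling to $n=p$, the bound $|\delta_L|\le \tfrac12+\tfrac{1}{2S}$, the restriction $\ell<\lfloor S/32\rfloor$ (equivalently $|c|\le q/32$) to control the perturbations $\theta_j$, and the resulting single-sample lower bound $\beta$ all match exactly.

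The gap is in the post-processing. You describe it as ``standard Shor-style extraction'' via the continued-fraction expansions of $c_1/q$ and $c_2/q$, expecting convergents of the form $\ell_i/\hat S_i$. This is precisely what \emph{fails} when $S$ is irrational: there is no rational $\ell_i/S$ to recover, and nothing forces any convergent of $c_i/q$ to have numerator exactly $\ell_i$. Algorithm~\ref{alg:periodicStates}, which you are supposed to be analyzing, does something different: it takes the convergents of the ratio $c/d$ of the two measured values. Writing $c=[kq/S]$ and $d=[lq/S]$ with $k\le l<\lfloor S/32\rfloor$, the fraction $k/l$ \emph{is} a bona fide rational of small height, and the paper proves the inequality $|c/d-k/l|<1/(2l^2)$ (this is where the hypothesis $q\ge S^2$ is actually used), so $k/l$ appears among the convergents of $c/d$. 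One then sets $\hat S=[kq/c]$ and verifies $|S-\hat S|\le 1$ by a short direct estimate. Both the convergent inequality and the final accuracy bound are absent from your outline and are not routine; this two-sample ratio trick is exactly Hallgren's device for coping with irrational periods, and calling it ``standard Shor'' is the substantive error. The coprimality factor $1/2$ is fine, but it has to be attached to this extraction procedure, not the one you describe.
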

%\algsetup{indent=2em}
\begin{algorithm}
\caption{{\ensuremath{\mbox{\scshape Approximate period of pseudo-periodic states}}}
}\label{alg:periodicStates}
\begin{algorithmic}[1]
\REQUIRE {A pair of pseudo-periodic states in $\C^q$ with period $S\in \mbb{R}$, where  
$M$ is an upper bound  on $ S>2$ and $q$ is an integer such  that  $S^2\leq M^2\leq q< 2M^2$.}
\STATE For each pseudo-periodic state,  apply a Fourier transform over $\mbb{Z}_q$ and measure  to obtain $c$ and $d$.
\STATE Compute the convergents $c_i/d_i$ of $c/d$ where $d_i\leq  \floor{q/32}$.
\STATE Return $\mc{L}= \Big\{ [c_iq/c] \mid d_i\leq \floor{q/32} \Big\}$ as candidates for $S$.
\end{algorithmic}
\end{algorithm}

\begin{proof}%[Recovery of the pseudo-period]
Assume that  the pseudo-periodic state is as follows:
\begin{equation}
\ket{\psi}_{o,S} = \frac{1}{\sqrt{|\mc{J}|}}
\sum_{j\in \mc{J}} | \lfloor o + j S\rceil \>\,.\nonumber
\end{equation}
where $  \mc{J} = \{0,1,\ldots, p-1 \}$ and $p \in \{ \floor{q/S}-1, \floor{q/S},\floor{q/S}+1 \}$.
Since we are Fourier sampling, we may assume without loss of generality, that $o=0$.
Therefore, the measured distribution will be the same as the one induced by Fourier sampling the following state:
\begin{equation}
\frac{1}{\sqrt{|\mc{J}| }}
\sum_{j\in \mc{J}} \ket{ \lfloor j S\rceil}
\end{equation}
Taking the Fourier transform over $\mbb{Z}_q$ we obtain
\ben
\frac{1}{\sqrt{|\mc{J}| }}\frac{1}{\sqrt{q}}
\sum_{j\in \mc{J}} \sum_{\ell=0}^{q-1}\omega_q^{\ell \lfloor  j S \rceil}\ket{\ell}.
\een
The Fourier transform at $|\ell\>$ has the amplitude
\begin{equation}
\frac{1}{\sqrt{q|\mc{J}| }} \sum_{j \in \mc{J}} \omega_q^{\lfloor  jS\rceil\ell}\,.
\end{equation}
We seek to find a lower bound on the probability of obtaining outcomes $\ell$ of the form $[ \frac{mq}{S} ] $, where $m\in\{0,1,\ldots,\lfloor S \rfloor\}$.  For a given $m$, $[ m \frac{q}{S} ]$ denotes either the floor or ceiling so that that $m \frac{q}{S} = [ m \frac{q}{S}] + \epsilon_\ell$ with $|\epsilon_\ell|\le \frac{1}{2}$.
The probability of observing $\ell$ is given by
\begin{equation}
\frac{1}{q|\mc{J}| } \left| \sum_{j\in\mc{J}}  \omega_p^{\frac{p}{q}\lfloor jS\rceil [m\frac{q}{S}]} \right|^2\,.
\end{equation}
To bound this probability, we consider the exponent of $\omega_p$
\begin{eqnarray}
\frac{p}{q}\lfloor j S\rceil [ m\frac{q}{S} ]
& = &
\frac{p}{q}( j S + \delta_j)(m\frac{q}{S} + \epsilon_\ell) \nonumber \\
& = &
p m j  + \frac{p S \epsilon_\ell}{q} j  + \frac{p\delta_j \epsilon_\ell}{q} + \frac{pm\delta_j}{S}\, \nonumber\\
& = & p  m  j+  \frac{p S \epsilon_\ell}{q} j  + \frac{p\delta_j \epsilon_\ell}{q} + \frac{pm\delta_j}{S}\, \nonumber\\
&= & pm j + \delta j + \theta_j.
\end{eqnarray}

The first term is a multiple of $p$, implying that it can be omitted in the exponent.  The factor $\delta=\frac{p S \epsilon_\ell}{q}$ in front of $j$ in the second term is less or equal to $(1+S/q)/2 \leq (1+1/S)/2$. 
The absolute value of the sum of the third and fourth terms is less or equal to $p/32$ provided that $m< \lfloor S/32\rfloor$.  In this case, the phase perturbations $\theta_j$ caused by these two terms satisfy equation~\eqref{eq:thetaDist}. Further, $|\mc{J}|=p$ ensures that equation~\eqref{eq:JSize} is also satisfied and we can apply Lemma~\ref{lm:pertMissingSeries}.
We conclude that the probability of obtaining $\ket{\ell}$ is 
\be
\frac{1}{q|\mc{J}|  } \left| \sum_{j\in\mc{J}}  \omega_p^{\frac{p}{q}\lfloor j S\rceil [m\frac{q}{S}]} \right|^2 &\geq& \frac{p  }{q} \left(
 \sinc\Big(\frac{1}{2}+\frac{1}{2S}\Big) 
- 2 \sin(\pi/32)
\right)^2\,\\
&\geq& \left(\frac{1 }{S}-\frac{2}{q}\right) \left(
\sinc\Big(\frac{1}{2}+\frac{1}{2S}\Big)- 2 \sin(\pi/32) 
\right)^2\,
\ee
where the last inequality follows from  $p\geq\floor{q/S} -1\geq q/S-2$. So the probability of obtaining any ``good'' $\ell$, i.e. $m\in \{1,\ldots,\floor{S/32-1}\}$, is at least $\beta$, where 
\ben
\beta = \left(\frac{S}{32}-2\right) \left(\frac{1}{S}-\frac{2}{q}\right)\left(
 \sinc\Big(\frac{1}{2}+\frac{1}{2S}\Big) - 2 \sin(\pi/32)
\right)^2\,,\label{eq:goodPairProb}
\een
where we used that $\floor{S/32-1}\geq (S/32-2)$.
The measured value $\ell$ is a multiple of $q/S$ rounded
to the nearest integer i.e. $\ell = [ mq/S ]$ for some $m$.

Unlike the case of period finding algorithm where the period is integral, 
the period $S$ of $\ket{\psi}_{o,S}$ cannot be reconstructed
with Fourier sampling one (pseudo-periodic) quantum state. However, as shown below, 
we can reconstruct using the method suggested by Hallgren in \cite{hallgren07}.
 Suppose we have two  measurements 
$c= [ kq/S]$ and $d=[ l q/S ]$,  obtained
by Fourier sampling the pair of periodic states, 
then $k/l$ occurs as a convergent of $c/d$ and
we can compute an integer close to 
$S$  by computing $[kq/c]$. 
Without loss of generality assume that
$0<k\leq l < \floor{S/32}$. Assume that $c=kq/S+\epsilon_c$  and $d=lq/S+\epsilon_d$ where
$-1/2\leq\epsilon_c,\epsilon_d\leq 1/2$.
Then 
\be
\left|\frac{c}{d}-\frac{k}{l} \right| &=& \left|\frac{kq+\epsilon_c S}{lq+\epsilon_d S} - \frac{k}{l}\right|
=  \left|\frac{S(\epsilon_c l-\epsilon_dk)}{l^2q+\epsilon_dS l} \right|\\
&\leq& \left|\frac{S( l+k)/2}{l^2q -S l/2} \right| \leq  \left|\frac{Sl}{l^2q-Sl/2}\right| \\
&=&   \left|\frac{1}{lq/S-1/2}\right|  < \frac{1}{2l^2},
\ee
under the assumption that $0<k\leq l< \floor{S/32}$ and $q\geq S^2$. Thus ${k}/{l}$ is a 
convergent of ${c}/{d}$. 
Since $l\leq \floor{S/32}$, we only need to
compute the convergents $c_i/d_i$ whose denominators $ d_i$ are less than $\floor{q/32}$. 
We now form the list of candidate estimates for $S$ as 
\ben
\mc{L}= \Big\{ [c_iq/c] \mid d_i\leq \floor{q/32}\Big\}.\label{eq:candSolns}
\een
As the $d_i$ grow exponentially, $|\mc{L}| =O(\text{polylog}(|S|))$.

Since $k/l$ is a convergent of $c/d$, we know that there exists an estimate 
 $\hat{S}=[{kq}/{c} ] \in \mc{L}$. We now show that $\hat{S}$ 
satisfies $|S-\hat{S}|\leq1$. Let $c=kq/S+\epsilon_c$ and $\hat{S} = kq/c $, where $|\epsilon_c |  \leq1/2$.  Then,  we can bound 
$|S-\hat{S}|$ as 
\be
|S-\hat{S}|& = &\left|S - \frac{S}{1+\epsilon_c S/kq} \right| \leq \left|\frac{\epsilon_c S^2/kq}{1+\epsilon_c S/kq}\right| \\
&\leq &\left|\frac{\epsilon_c/k}{1+\epsilon_c/kS}\right| \mbox{ because } q\geq S^2\\  
&=&  \left| \frac{\epsilon_c}{k}\right| \left|\frac{1}{1+\epsilon_c/kS}\right| \\
&\leq& \frac{1}{2k} \frac{1}{1-1/2kS}  \leq \frac{1}{2k}\cdot 2\\
&\leq&1.
\ee 
We now compute a lower bound on the success probability of the algorithm. 
We have already seen that the probability of  a  pair of good measurements is given by 
\eqref{eq:goodPairProb}. In order to be 
able to recover the period $S$, we require  $k$ and $l$ to be coprime. By 
Lemma~\ref{lm:coprimeProb}, the probability that $k$, $l$ are coprime is at least $1/2$.
Thus the overall success probability of 
the algorithm is greater than ${\beta^2 /2}=\Omega(1)$.  
\end{proof}

The algorithm does not return a single value for the period but
rather a small list of candidates for the period. This presumes a post processing step 
by which  we can single out the solutions. 

Further, we note that the previous algorithm uses a pair of pseudo-periodic states and if these 
states are being prepared probabilistically, then we must factor that into the success 
probability of the algorithm.

%%%%%%%%%%%%%%%%%%%%%%%%%%%%%%%%%%%%%%%%%%%%%%%%%%%%%%%%%%%%%%%%%%%%%%%%%%%%%%%%%%%%%
%
% Section: Quantum algorithm for approximating the circumference of infrastructures
%
%%%%%%%%%%%%%%%%%%%%%%%%%%%%%%%%%%%%%%%%%%%%%%%%%%%%%%%%%%%%%%%%%%%%%%%%%%%%%%%%%%%%%

\section{Quantum algorithm for approximating the circumference}\label{sec:circumference}

Our goal  is to set up  pseudo-periodic states 
whose period is a multiple of the circumference of an infrastructure.
Then the quantum algorithm 
of the  preceding section can be applied to extract an integer close to the circumference. 
With this knowledge,  the circumference can be computed to the desired
accuracy by a classical algorithm.

\subsection{Pseudo-periodic states from infrastructures}

In section~\ref{ssec:effArithmetic}, we showed
that an approximate version $\tilde{h}$ of $h$    can be computed so that  properties {\bf P1}, {\bf P2} are satisfied. For this approximate version to be useful, it is necessary 
that the $f$-representations at the evaluation points meet the condition stated in 
 equation~\eqref{eq:farAway}. In 
this subsection, we show how to satisfy this condition  which allows us to compute
$\tilde{h}$ so that the first component is always correct and the error in the second 
component is under control. However, $\tilde{h}$ does not induce the periodic states that
we discussed in the previous section.  To create a periodic quantum state it is essential to work with a ``quantized'' version of 
$\tilde{h}$. Therefore we  introduce the function $h_N :\mbb{Z}\rightarrow X\times  \mbb{Z}$ 
by setting 
\ben
h_N(i)=(\tilde{x},\lfloor \tilde{f} N \rfloor),\label{eq:hN}
\een
where $\tilde{h}(\frac{i}{N}+\frac{j}{L})=(\tilde{x},\tilde{f})$. When {\bf P2} is satisfied,
it is helpful to interpret $h_N$ in the following way:  $h_N(i)= ({x},k )$, then $k$ is the number of sampling points between $d({x})+\lfloor (i/N + 
j/L)/R\rfloor R$ and $i/N+j/L$. 

The incorrectness in $\tilde{h}$ cannot be avoided if the evaluation points $r$
are chosen arbitrarily. As already stated in Lemma~\ref{lm:nonDiscJumps}, we assume that the evaluation
points are of the form $k/L$ for some large integer $L$ and bounded $k$. Even so, we cannot evaluate always 
$\tilde{h}$ correctly for every $k$. Therefore, we further restrict the evaluation of $\tilde{h}$ to a subset of
the points which are $\frac{1}{N}$ uniformly spaced along a bounded interval, where
$N $ divides $L$.  We choose $N\ge \lceil{2}/{d_{\min}}\rceil$
so that there are at least two evaluation points $\frac{i}{N}$ and $\frac{i+1}{N}$ between any two adjacent elements of $\mc{I}$.
This is shown in the figure below. The dashed lines indicate the sampling points.

\begin{figure}[h]
\begin{center}
\begin{tikzpicture}[scale=2.5]
\clip (1.3,-0.4) rectangle (5.5,0.5);
\draw (-0.4,0) -- (5.4,0);
\foreach \x in {0.5, 1, 1.5, 2, 2.5 ,3, 3.5,4, 4.5,5}
\draw[dashed] (\x,-0.1) -- (\x,0.1);
\foreach \x in {0.25, 1.9,3.45,  4.8}
\draw[color=red, very thick] (\x,0) -- (\x,0.2);
\draw (1.9,0.2) node[above] {${\bs}^{-1}(x)$};
\draw (3.45,0.2) node[above] {$x$};
\draw (4.8,0.2) node[above] {$\bs(x)$};
\draw (3.5,-0.1) node[below] {$r=\frac{i}{N}$};
\end{tikzpicture}
\end{center}
\end{figure}

But this is still inadequate to satisfy equation~\eqref{eq:farAway}, as some of the evaluation 
points could be very close to elements of the infrastructure. So we 
shift all the evaluation points by a random offset of the form $\frac{j}{L}$, where 
$j$ is chosen uniformly at random from $\{0,1,\ldots,\frac{L}{N}-1\}$.  This is shown in the figure 
below.  The solid lines indicate the shifted evaluation points and they are still of the form 
$k/L$.

\begin{figure}[h]
\begin{center}
\begin{tikzpicture}[scale=2.5]
\clip (1.3,-0.4) rectangle (5.5,0.5);
\draw (-0.4,0) -- (5.4,0);
\foreach \x in {0.5, 1, 1.5, 2, 2.5 ,3, 3.5,4, 4.5,5}
{
\draw[dashed] (\x,-0.1) -- (\x,0.1);
\draw (\x+0.2,-0.1) -- (\x+0.2,0.1);
}
\foreach \x in {0.25, 1.9,3.45,  4.8}
\draw[color=red, very thick] (\x,0) -- (\x,0.2);
\draw (1.9,0.2) node[above] {${\bs}^{-1}(x)$};
\draw (3.45,0.2) node[above] {$x$};
\draw (4.8,0.2) node[above] {$\bs(x)$};
\draw (3.5+0.2,-0.1) node[below] {$r=\frac{i}{L}+\frac{j}{L}$};
\end{tikzpicture}
\end{center}
\end{figure}
Now we can show that with high probability equation~\eqref{eq:farAway}
is satisfied and can use Lemma~\ref{lm:nonDiscJumps} to guarantee that $\tilde{h}$ can be 
computed with the  precision stated in equation~\eqref{eq:hPrecision}.

\medskip
\begin{lemma}\label{lm:offset}
Let  $N\geq \ceil{2/d_{\min}}$. Suppose we  evaluate $h_N$ at points $i/N+j/L$ for $i\in\{0,\ldots,q-1\}$, where
 $j$ chosen uniformly at random from $\{0,1,\ldots,L/N-1\}$, and $L$ is an integer such that
 \ben
 L\geq N\ceil{ \frac{2\bar{k} }{(1-p_{h})} \ceil{\frac{q }{N d_{\bar{k}}} }}. \label{eq:offset}
 \een
 Then  with probability greater or 
 equal to $p_{h}$, no sampling point $i/N+j/L$ is closer than $1/L$ to any element $x$ of the 
 $\mc{I}$, i.e., 
 \ben
 |(d(x)-i/N-j/L) \bmod R | \geq 1/L.
 \een 
\end{lemma}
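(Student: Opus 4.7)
The plan is to bound the probability over the random shift $j$ by a union bound: for each infrastructure element $x$ whose distance $d(x)$ lies in (or near) the sampled range, count the shifts $j$ that make some sampling point land within $1/L$ of $d(x)$, and sum over all such $x$.

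First I would fix an infrastructure element $x$ and ask: for how many values of $j \in \{0, 1, \ldots, L/N - 1\}$ does there exist $i \in \{0, \ldots, q-1\}$ with $|d(x) - i/N - j/L| < 1/L$? Since the sampling points $\{i/N + j/L : i\}$ are spaced exactly $1/N$ apart, this condition is equivalent to asking that the residue $(d(x) - j/L) \bmod (1/N)$ lies in $[0, 1/L) \cup (1/N - 1/L, 1/N)$. The $L/N$ candidate shifts $j/L$ are themselves uniformly spaced at spacing $1/L$ in the interval $[0, 1/N)$, so this ``bad'' region captures at most $2$ values of $j$ per element $x$.

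Next I would count how many elements $x$ of the infrastructure can possibly contribute. The sampling points cover an interval of length at most $q/N$ (plus $O(1/N)$ from the shift). By assumption \textbf{A}7 any $\bar{k}$ consecutive elements of $\mc{I}$ span a distance of at least $d_{\bar{k}}$, so in an interval of length $q/N$ the number of elements of $\mc{I}$ is at most $\bar{k} \lceil q/(N d_{\bar{k}}) \rceil$. A union bound then shows that the total number of bad shifts (over all relevant $x$) is at most
\begin{equation*}
2 \bar{k} \left\lceil \frac{q}{N d_{\bar{k}}} \right\rceil.
\end{equation*}

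Finally, since $j$ is drawn uniformly from a set of size $L/N$, the probability of a bad shift is at most
\begin{equation*}
\frac{2 \bar{k} \lceil q/(N d_{\bar{k}}) \rceil}{L/N},
\end{equation*}
and the hypothesis on $L$ in equation~\eqref{eq:offset} forces this quantity to be at most $1 - p_h$, giving the desired lower bound $p_h$ on the success probability. The main (and only) subtle point is the observation that for each element $x$ the bad set of shifts in $[0, 1/N)$ has measure $2/L$ regardless of $i$, because translating by a multiple of $1/N$ does not change the residue mod $1/N$; everything else is a clean union bound combined with the density bound from \textbf{A}7.
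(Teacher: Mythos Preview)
Your proposal is correct and follows essentially the same argument as the paper: bound the number of infrastructure elements in the sampled interval by $\bar{k}\lceil q/(N d_{\bar{k}})\rceil$ via assumption~\textbf{A}7, observe that each such element rules out at most two of the $L/N$ equally spaced offsets, and finish with a union bound. Your residue-mod-$1/N$ explanation for the ``at most two bad offsets per element'' step is actually more detailed than the paper's one-line justification.
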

\begin{proof}
By assumption \textbf{A}7 there are at most $\bar{k}\ceil{q/Nd_{\bar{k}}}$ elements of $\mc{I}$ in the interval $[0,q/N]$.
There are $L/N$ possible offsets to choose from.  Since the offsets are spaced at $1/L$,
any element $x\in \mc{I}$ can be within a distance of less than $1/2L$ for at most two offsets. 
The fraction of offsets that are not useful is given by $2\bar{k}\ceil{q/Nd_{\bar{k}}}/(L/N) \leq 1-p_{h}$ provided that $L$ is chosen as  in equation~\eqref{eq:offset}.
\end{proof}

When $L$ is chosen according to Lemma~\ref{lm:offset}, we have  $h_N(i)=(x,\lfloor \tilde{f} N \rfloor)$,
where $\tilde{h}(\frac{i}{N}+\frac{j}{L})=(x,\tilde{f})$.  We use $x$ instead of $\tilde{x}$ on purpose to emphasize again that the first component is correct.  It is crucial to observe that $\lfloor \tilde{f} N\rfloor$ is equal to $\lfloor f N\rfloor$.  This is because {\bf P2}
holds and no evaluation point is within $1/L$  of any element of the infrastructure.

The preceding results imply that $h_N(i)$ can be computed efficiently and correctly. 
%Further, if $N \geq \ceil{4/d_{\min}}$, then for an infrastructure with at least two elements  we can guarantee that $ p_{\text{periodic}}>35/64> 1/2$. 

\medskip

\begin{corollary}If Lemma~\ref{lm:offset} holds, then for 
 all $i$ with $0\leq i\leq 2N^2R^2$ the value $h_N(i) = (\tilde{x},\lfloor \tilde{f} N \rfloor)$ is equal to $(x,\floor{fN})$, where $\tilde{h}(i/N+j/L)=(\tilde{x},\tilde{f})$ and
$h(i/N+j/L)=(x,f)$.  
\end{corollary}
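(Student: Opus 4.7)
The claim has two parts: that the first component $\tilde x$ of $h_N(i)$ equals the exact $x$, and that $\lfloor \tilde f N \rfloor = \lfloor fN \rfloor$. Both will be deduced from the hypothesis of Lemma~\ref{lm:offset} together with property {\bf P2} of the approximation $\tilde h$.

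For the first component, the plan is to translate the distance bound in Lemma~\ref{lm:offset} into the hypothesis (\ref{eq:farAway}) of property {\bf P2}. If $|r - d(x') \bmod R| \ge 1/L$ for every $x' \in \mc I$ and every sampling point $r = i/N + j/L$ in the relevant range, then writing $h(r) = (x,f)$ with $r \equiv d(x) + f \pmod R$ and $0 \le f < \Delta_{\bs}(x)$, the distance from $r$ to $d(x)$ gives $f \ge 1/L$, and the distance from $r$ to $d(\bs(x)) \equiv d(x) + \Delta_{\bs}(x) \pmod R$ gives $\Delta_{\bs}(x) - f \ge 1/L$.  Here one uses $\Delta_{\bs}(x) \le d_{\max} \ll R/2$ to identify these circle-group distances with ordinary real distances.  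This is exactly (\ref{eq:farAway}), so {\bf P2} forces $\tilde x = x$ together with $|f - \tilde f| \le 1/(2L)$.

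For the floor equality, the strategy is to observe that $\tilde f = r - \tilde d_{\acc}(x)$ while $f = r - d_{\acc}(x) \bmod R$, so $|f - \tilde f|$ is controlled by $|d_{\acc}(x) - \tilde d_{\acc}(x)|$ and can be driven below any prescribed threshold by choosing the precision parameter $m$ of Lemma~\ref{lm:nonDiscJumps} sufficiently large, at cost $\text{poly}(n, \log L, \log B)$.  The main obstacle is the degenerate case in which $fN$ itself lies very close to an integer, since then even a tiny perturbation could flip the floor; the resolution is to tighten $m$ further so that $|fN - \tilde f N|$ is strictly smaller than the minimum separation of $fN$ from the integer lattice across the $q = O(N^2 R^2)$ sampling points, ensuring that $fN$ and $\tilde f N$ always share the same half-open integer interval $[k,k+1)$.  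Since $B = O(NR^2) \le 2^{\text{poly}(n)}$ and $\log L = \text{poly}(n)$, the resulting value $h_N(i)$ is computed in $\text{poly}(n)$ time.
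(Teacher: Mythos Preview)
Your treatment of the first component is correct and is exactly the paper's reasoning: Lemma~\ref{lm:offset} forces the hypothesis~(\ref{eq:farAway}) of property {\bf P2}, whence $\tilde x = x$ and $|f - \tilde f| \le 1/(2L)$.

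The gap is in your handling of the floor equality. You propose to tighten the precision parameter $m$ until $N|f - \tilde f|$ falls below the minimum distance of the values $fN$ (ranging over all $q$ sampling points) to the integer lattice. This does not work: that minimum is not available to you---you cannot compute the exact $f$'s, which is the whole point---and, more fatally, you supply no a priori lower bound on it. Nothing in your argument rules out some $fN$ lying exactly on an integer, or within $2^{-2^n}$ of one; in either case no $m\in\text{poly}(n)$ suffices, so your final efficiency claim is unsupported.

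The paper's (tersely stated) argument is different and requires no extra precision beyond what {\bf P2} already provides; it exploits the $1/N$-grid structure of the sampling points. Write $k=\lfloor fN\rfloor$. If the fractional part $\{fN\}$ were smaller than $N/L$, then $r=i/N+j/L$ would lie within $1/L$ of $d(x)+k/N$ (modulo $R$), and hence the sampling point $(i-k)/N+j/L$ would lie within $1/L$ of $d(x)$ itself---contradicting the conclusion of Lemma~\ref{lm:offset}. The same reasoning with $k+1$ in place of $k$ shows $\{fN\}\le 1-N/L$. (One checks that the shifted indices $i-k$ and $i-k-1$ stay in the admissible range, using $0\le k\le i$.) Thus $\{fN\}\in[N/L,\,1-N/L]$, and the bound $|fN-\tilde f N|\le N/(2L)$ already delivered by {\bf P2} immediately forces $\lfloor fN\rfloor=\lfloor\tilde f N\rfloor$. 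This is what the paper means by ``{\bf P2} holds and no evaluation point is within $1/L$ of any element of the infrastructure.''
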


Next we show that $h_N$ when evaluated over a  finite interval induces a periodic state with 
probability greater than or equal to $ 1/2$, if we assume that no sampling point is too close to any element of the infrastructure. 
\medskip

\begin{lemma}\label{lm:periodicState}
Let $N\geq \ceil{ 2/d_{\min} }$ and let $\ket{\psi} = q^{-1/2}\sum_{i=0}^{q-1} \ket{i}\ket{h_N(i)}$. We assume that no
element of the infrastructure is too close to the sampling points $i/N+j/L$, where $j$ and $L$ are chosen as in Lemma~\ref{lm:offset}.
Then, with probability greater than 
\ben
p_{\text{periodic}}= \left(1-\frac{1}{Nd_{\min}}-\frac{1}{NR}\right) \left(1-\frac{2NR}{q} \right)\label{eq:periodicProb}
\een
measuring the second register of 
$\ket{\psi}$ induces a periodic state with period $NR$,
\ben
\ket{\psi}_{k,NR} = \frac{1}{\sqrt{p}} \sum_{\ell=0}^{p-1} \ket{\lfloor k+\ell NR\rceil},
\een 
where $p$ is equal to one of the values\footnote{Note that $N\geq \ceil{2/d_{\min}}$, implies that 
$NR>2$, and therefore, $p$ must be at least $\floor{q/NR}-1$.} $\floor{q/NR} - 1$, $\floor{q/NR}$, or $\floor{q/NR} +1$. 
\end{lemma}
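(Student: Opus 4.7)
The plan is to unpack the state $\ket{\psi} = q^{-1/2}\sum_i \ket{i}\ket{h_N(i)}$ and use the structural results already established: correctness of $h_N$ from the preceding corollary, perfect $R$-periodicity of $h$ from the earlier kernel lemma, and sufficient sampling density from the choice $N\geq\lceil 2/d_{\min}\rceil$. Write $h_N(i)=(x_i,\lfloor f_iN\rfloor)$ where $(x_i,f_i)=h(i/N+j/L)$, which is \emph{exact} by the corollary (assuming Lemma~\ref{lm:offset} holds). Measuring the second register at value $(x,c)$ projects $\ket{\psi}$ onto the uniform superposition over the preimage $h_N^{-1}(x,c)\subseteq\{0,\dots,q-1\}$, so the task reduces to showing that, for a random measurement outcome, this preimage forms a pseudo-periodic sequence with period $NR$ and cardinality in $\{\floor{q/NR}-1,\floor{q/NR},\floor{q/NR}+1\}$.

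Next, I would establish the quasi-periodic structure of $h_N$. Since $\ker h = R\Z$, the real points $r$ and $r+R$ give the same value under $h$; translated to indices this says that if $i_0$ lies in $h_N^{-1}(x,c)$, then the integer $i_\ell$ nearest to $i_0+\ell NR$ (for $\ell=1,2,\dots$) should also lie there. Because $|i_\ell-(i_0+\ell NR)|\leq 1/2$, the corresponding $f$-value shifts by at most $1/(2N)$, so $f_{i_\ell}N$ differs from $f_{i_0}N$ by at most $1/2$. Whenever the $x$-component is unchanged under this shift, one of the two choices $\lfloor i_0+\ell NR\rfloor$ or $\lceil i_0+\ell NR\rceil$ preserves $\lfloor f N\rfloor$; since the pseudo-periodic state definition allows either rounding at each $\ell$, the preimage has exactly the form $\{\lfloor i_0+\ell NR\rceil : \ell=0,\dots,p-1\}$.

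The probability bound then follows from identifying and subtracting off the two kinds of ``bad'' outcomes. First, boundary effects of the window $[0,q)$: at most one period at each end can contain indices whose partner under the shift $+NR$ falls outside $\{0,\dots,q-1\}$, so the total measure of outcomes with preimage size outside $\{\floor{q/NR}-1,\floor{q/NR},\floor{q/NR}+1\}$ is at most $2NR/q$, giving the factor $(1-2NR/q)$. Second, the outcomes where the shift-by-$NR$ carries a sample across an element boundary (from $x$ to $\bs(x)$ or vice versa) fail to yield a consistent pseudo-periodic sequence. For each $x\in X$ at most one value of $c$ is bad in this sense, and $|X|\leq R/d_{\min}$, so the fraction of bad $(x,c)$ pairs is at most $|X|/(NR)\leq 1/(Nd_{\min})$; an additional $1/(NR)$ term accounts for the single wrap-around $c$ associated with the base element $x_0$. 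Combining these two contributions yields the factor $(1-1/(Nd_{\min})-1/(NR))$, and the product of the two factors gives $p_{\text{periodic}}$.

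The main obstacle will be the bookkeeping in the last step: precisely matching the two subtracted contributions to the two summands $1/(Nd_{\min})$ and $1/(NR)$, and carefully verifying that the three allowed values of $p$ indeed exhaust all cases once boundary outcomes of $[0,q)$ are excluded. In particular, one must confirm that the preimage cannot drop below $\floor{q/NR}-1$ or exceed $\floor{q/NR}+1$ away from the boundary, which uses the fact that each sample spacing $1/N$ is strictly smaller than $d_{\min}$ and hence at most one sample lies in a given $x$-region per elementary shift.
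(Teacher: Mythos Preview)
Your approach is essentially the paper's: identify the ``bad'' offsets as those corresponding to the last sampling point between consecutive infrastructure elements (where the shift by $NR$ can cross an element boundary), and lower-bound the probability of the good outcomes. The core structural step---showing that for a good offset $k$ one has $h_N(k)=h_N(\lfloor k+\ell NR\rceil)$ for all $\ell$---matches the paper's $\alpha_0$ vs.\ $\alpha_\ell$ case analysis. However, your accounting for the two factors in $p_{\text{periodic}}$ is off, and treating them as two independent subtractions does not hold up.

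The paper obtains the bound by a \emph{single} counting argument, not a product of two event probabilities. Among the offsets $k\in\{0,\ldots,\lfloor NR\rfloor\}$ (there are $\lfloor NR\rfloor+1$ of them), at most $\lceil R/d_{\min}\rceil$ are last sampling points, leaving at least $\lfloor NR\rfloor+1-\lceil R/d_{\min}\rceil\ge NR-R/d_{\min}-1$ good offsets. For every good $k$, the preimage of $h_N(k)$ inside $\{0,\ldots,q-1\}$ has size at least $\lfloor q/NR\rfloor-1\ge q/NR-2$. Hence
\[
p_{\text{periodic}}\;\ge\;\frac{(NR-R/d_{\min}-1)(q/NR-2)}{q}
\;=\;\Bigl(1-\tfrac{1}{Nd_{\min}}-\tfrac{1}{NR}\Bigr)\Bigl(1-\tfrac{2NR}{q}\Bigr).
\]
The product structure is purely algebraic; it does not correspond to two separate failure modes.

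Two specific corrections. (i) The factor $(1-2NR/q)$ is \emph{not} the measure of outcomes with preimage size outside $\{\lfloor q/NR\rfloor-1,\lfloor q/NR\rfloor,\lfloor q/NR\rfloor+1\}$: for every good offset the preimage size already lies in that range, and this factor merely records that the minimum such size is $q/NR-2$ rather than $q/NR$. (ii) The $1/(NR)$ term is not a ``wrap-around at $x_0$'' correction; it is slack from the crude estimate $\lceil R/d_{\min}\rceil\le R/d_{\min}+1$ on the number of last sampling points. Your own bound $|X|\le R/d_{\min}$ would actually be tighter and drop that term, but the lemma as stated keeps it.
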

\begin{proof} 
Denote the measurement outcome by $(x,m)$.  First, we show that if $(x,m)$
satisfies a certain condition, then the resulting post-measurement state is a pseudo-periodic state.  Second, we estimate the probability that we obtain such measurement
outcome.

Assume that $h_N(k)=(x,m)$ for some $k\in\{0,\ldots,\floor{NR}\}$.  
%\footnote{For the special case when 
%$\floor { NR} = NR $, then modulo $NR$, the offset $0$ is same as $\floor{NR}$. Our argument does not change,  only the offsets range from $\{0,1,\ldots, \floor{NR}-1 \}$.} 
Then, in $\ell$th period the sampling points are at a distance $\alpha_\ell+m_{\ell}/N$ for  $m_\ell\in \{0,1, \ldots, \floor{N\bs(x)} \}$ from the element $x$.  Under the assumption of Lemma~\ref{lm:offset}, $1/L\leq \alpha_\ell \leq 1/N-1/L$. 

Consider now the sampling points for the zeroth period and some other period $\ell\neq 0$. 
\begin{figure}[h]
\begin{center}
\begin{tikzpicture}[scale=2.5]
\clip (1.3,-0.4) rectangle (5.5,0.6);
\draw (-0.4,0) -- (5.4,0) ;
\foreach \x in {0.5, 1, 1.5, 2, 2.5 ,3, 3.5,4, 4.5,5}
{
\draw[dashed] (\x,-0.1) -- (\x,0.1);
\draw (\x-0.2,-0.1) -- (\x-0.2,0.1);
}
\foreach \x in {0.25, 1.6, 4.6}
\draw[color=red, very thick,<-] (\x,0) -- (\x,0.2);
\draw (1.6,0.2) node[above] {$x$};
\draw (4.6,0.2) node[above] {$\bs(x)$};
\draw (3.5,-0.1) node[below] {$k-1$};
\draw (4,-0.1) node[below] {$k$};
\draw (4.5,-0.1) node[below] {$k+1$};
\draw (2.05,0.05) node[above] {$(x,0)$};
\draw (2.55,0.05) node[above] {$(x,1)$};
\draw (4.05,0.05) node[above] {$(x,m)$};
\draw (1.6,-0.2)--(1.6,-0.4);
\draw (2,-0.2)--(2,-0.4);
\draw[<->] (1.6,-0.3)--(2,-0.3);\draw (1.8,-0.3) node[above] {$\alpha_0$};
\draw (1.8,0.35)--(1.8,0.55);
\draw (1.6,0.35)--(1.6,0.55);
\draw[<->] (1.6,0.45)--(1.8,0.45);\draw (1.7,0.45) node[above] {$\alpha_\ell$};
\end{tikzpicture}
\end{center}
\end{figure}

Then, the following cases arise: $1/L\leq \alpha_\ell \leq \alpha_0 $,  and 
$\alpha_0< \alpha_\ell \leq 1/N-1/L$. As can be seen from the figure above,
if $1/L\leq\alpha_\ell \leq \alpha_0$, then  we must have $h_N(k)=h_N(k+\floor{\ell NR}) $. On the other hand, 
if $\alpha_0 < \alpha_\ell \leq 1/N-1/L $, then it is clear that $h_N(k)=h_N(k+\ceil{\ell NR})$ unless $k$ 
corresponds to the last sampling point between the elements $x$ and $y=\bs(x)$ since in this case 
$h_N(\ceil{k + \ell N R}) = (y,0) \neq h_N(k)$. 

On the one hand, if $k$ does not correspond to the last sampling point between two adjacent 
elements of $\mc{I}$, then for all $\ell \in \{0,1,\ldots, p-1 \}$ we have $h_N(k+\lfloor \ell N R\rceil)=h_N(k)$.
On the other hand, if $k$ corresponds to the last evaluation point between two elements, then 
the preimage may not contain all $\ell$.
 
We now estimate the probability of obtaining an outcome $(x,m)$ such that $h_N(k)=(x,m)$ and the offset $k\in\{0,\ldots,\floor{NR}\}$ 
does not correspond to the last evaluation point between any two elements. 

There are $\floor{NR}+1$ possible offsets in the zeroth period.  At most $\ceil{R/d_{\min}}$ of these can correspond to last evaluation points between two elements. 
We know that the preimage of a ``good'' measurement outcome $(x,m)$ contains at least $\floor{q/NR}-1$ elements.  So, the probability of obtaining a good measurement
outcome is at least
\be
p_{\text{periodic}}&=&\frac{(\floor{NR}+1 - \ceil{R/d_{\min}}) \cdot (\floor{q/NR}-1)}{q}\\
&\geq& (NR-R/d_{\min}-1) (q/NR-2)/q\\
& = &\left(1-\frac{1}{Nd_{\min}}-\frac{1}{NR}\right) \left(1-\frac{2NR}{q} \right).
\ee
\end{proof}

\subsection{Presentation and proof of the quantum algorithm}

\begin{theorem}[Estimating the circumference to arbitrary accuracy]\label{lm:accuracy}
Let $\mc{I}$ be an infrastructure satisfying the assumptions \textbf{A}1--\textbf{A}7.  For 
any $\delta > 0$, there is an efficient Las Vegas algorithm that outputs an estimate $\hat{R}$ of the circumference $R$ of $\mc{I}$ such that $|R-\hat{R}| \le \delta$.  

Let $N\geq \ceil{2/d_{\min}}$, $S=NR$,  $p_h$  the probability of evaluating $h_N$ correctly, and 
$p_{\text{periodic}}$ the probability of creating a periodic state,  see equation~\eqref{eq:periodicProb}. Then,
the classical algorithm  invokes Algorithm 1 an expected $O(1/q_{\mathrm{success}})$ number of times, 
where $q_{\rm{success}}$ is
\ben
q_{\rm{success}}\geq \frac{ p_h^2 p_{\text{periodic}}^2}{2}\left({\frac{1}{32}-\frac{2}{S}}\right)^2
\left({1-\frac{2S}{q}}\right)^2 \left(
\sinc\Big(\frac{1}{2}+\frac{1}{2S}\Big) - 2 \sin(\pi/32)\right)^4.\,\,
\een
The classical computations take $\mathrm{poly}(\log(R), \log(1/\delta))$ time.
\end{theorem}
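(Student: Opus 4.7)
The plan is to reduce circumference estimation directly to period estimation of pseudo-periodic states via Theorem \ref{th:qalgoPeriodicState}. First I would choose the parameters: pick $N \geq \max\bigl(\lceil 2/d_{\min}\rceil,\,\lceil 1/\delta\rceil\bigr)$ so that $N$ is large enough both for Lemma \ref{lm:periodicState} to apply and for the final $\delta$-accuracy, let $S := NR$ and choose $q$ with $S^2 \leq q < 2S^2$ (using the upper bound $R \leq 2^{\text{poly}(n)}$ from \textbf{A}1), and fix $L$ as in Lemma \ref{lm:offset} so that a correct evaluation of $h_N$ occurs with probability at least $p_h$. All of these choices are $\mathrm{poly}(n,\log(1/\delta))$ in bit length.

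Next I would describe the quantum subroutine producing one pseudo-periodic state. Sample an offset $j$ uniformly from $\{0,\ldots,L/N-1\}$ and prepare
\begin{equation}
\frac{1}{\sqrt{q}}\sum_{i=0}^{q-1}|i\rangle\,\bigl|h_N(i)\bigr\rangle,
\end{equation}
computing $h_N(i) = h_N(i/N+j/L)$ via the efficient approximation $\tilde h$ of Lemma \ref{lm:nonDiscJumps}. By Lemma \ref{lm:offset} the evaluation is correct (i.e., matches the ideal $h_N$) with probability at least $p_h$. Conditioned on correctness, Lemma \ref{lm:periodicState} guarantees that measuring the second register collapses the first to a pseudo-periodic state with period $S=NR$ with probability at least $p_{\text{periodic}}$.

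Running this subroutine twice independently and feeding the two resulting pseudo-periodic states into Algorithm \ref{alg:periodicStates} yields, with the probability stated in Theorem \ref{th:qalgoPeriodicState}, a short list $\mc{L}$ of $O(\mathrm{polylog}(S))$ candidates containing some $\hat S$ with $|S-\hat S|\leq 1$. Setting $\hat R := \hat S/N$ gives $|R-\hat R|\leq 1/N \leq \delta$. Multiplying the three probabilities (squared, since two independent preparations are needed) yields the claimed lower bound on $q_{\text{success}}$; repeating the whole procedure an expected $O(1/q_{\text{success}})$ times achieves overall success with high probability.

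The only remaining step, which is also the one I expect to need the most care, is promoting this to a \emph{Las Vegas} algorithm, i.e., classically certifying which candidate in $\mc{L}$ is correct. For each candidate $\hat R'$ from any run, evaluate $\tilde h(\hat R')$ using Lemma \ref{lm:nonDiscJumps}: since the kernel of $h$ is $R\mbb{Z}$, a correct candidate produces an $f$-representation within distance $\delta$ of the identity $(x_0,0)$, while any candidate at distance more than $d_{\min}$ from an integer multiple of $R$ produces a demonstrably non-trivial element. Testing that $\hat R'$ is the smallest positive value with this property (by also checking $\hat R'/k$ for small integers $k$) rules out integer multiples of $R$ and certifies the true circumference. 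Since $|\mc{L}|$ is polylogarithmic and each test costs $\mathrm{poly}(\log R,\log(1/\delta),n)$ by Lemma \ref{lm:nonDiscJumps}, the classical post-processing is within the stated budget, and the algorithm either outputs a certified $\hat R$ or reports failure, making it Las Vegas.
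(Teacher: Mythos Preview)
Your proposal is correct and follows the same overall architecture as the paper's proof: prepare pseudo-periodic states via $h_N$ (Lemmas~\ref{lm:offset} and~\ref{lm:periodicState}), feed a pair of them into Algorithm~\ref{alg:periodicStates}, and then classically certify the candidates by evaluating $\tilde h$ at each $\hat R'$ and checking proximity to the identity $(x_0,0)$. The probability bookkeeping leading to $q_{\text{success}}$ is identical.

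There is, however, one genuine methodological difference worth noting. You achieve the final $\delta$-accuracy by inflating $N$ to $N\ge\max\bigl(\lceil 2/d_{\min}\rceil,\lceil 1/\delta\rceil\bigr)$, so that $|\hat S-S|\le 1$ immediately yields $|\hat R-R|\le 1/N\le\delta$. The paper instead keeps $N$ at its minimal value $\lceil 2/d_{\min}\rceil$, uses the quantum step only to obtain a coarse $R'$ with $|(R-R')\bmod R|\le 1/N$, and then \emph{refines classically}: once the test confirms that $\tilde h(R')$ lands on $x_0$ or $\bs^{-1}(x_0)$, it re-evaluates $\tilde h(R')$ with precision $\delta/2$ and sets $\hat R = R'-\tilde f$ (respectively $\hat R = R'-\tilde g+\Delta_{\bs}(\bs^{-1}(x_0))$). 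The paper's route keeps the quantum Fourier transform size, and hence the number of qubits, completely independent of $\delta$; your route pushes the $\log(1/\delta)$ dependence into the quantum register (since $q\sim(NR)^2$). Both are polynomial, but the paper's decomposition is what justifies the phrasing in the theorem that only the \emph{classical} computations scale with $\log(1/\delta)$.

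Two smaller points. First, you write ``choose $q$ with $S^2\le q<2S^2$''; since $S=NR$ is unknown, you must actually pick $q$ from the known upper bound $M\ge S$ supplied by \textbf{A}1, as in Algorithm~\ref{alg:periodicStates}. Second, in your certification step you propose ruling out multiples by testing $\hat R'/k$ ``for small integers $k$''; you should say how small. The paper sidesteps this by simply returning the \emph{smallest} $R'$ in the candidate list that passes the test, which handles the case where both $R$ and a multiple appear among the convergent-derived candidates.
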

%
% proof
%
\begin{proof}  We first create an 
pseudo-periodic state in $\C^q$, where $q$ is chosen as specified by 
Algorithm~\ref{alg:periodicStates}.
We create the superposition  
\[
\ket{\psi} = \frac{1}{\sqrt{q}}\sum_{i=0}^{q-1}\ket{i}\ket{h_N(i)}.
\]
If the conditions of Lemma~\ref{lm:offset}  are satisfied, then $\ket{\psi}$  will be created correctly with a probability $p_h$.  Then by 
Lemma~\ref{lm:periodicState}, measuring the second register of the state results in a  periodic state $\ket{\psi}_{k,NR}$ with probability $\geq 1/2$, where $p \in \{\floor{q/S}-1,\floor{q/S},\floor{q/S}+1 \}$. %As $q\geq S^2$, it follows that $\epsilon \geq 1-1/NR \geq 0.5$ for $N\geq 2$ and $R>1$.
Algorithm~\ref{alg:periodicStates} returns $\mc{L}$, a list of candidates for $S$, which
contains an element $\hat{S}$ which satisfies $|S-\hat{S}|\leq 1$.
The probability of this event is  
\begin{equation}
\Pr(
|S-\hat{S}|\le 1) \ge  p_h^2p_{\text{periodic}}^2p_{\mathrm{success}}\,,
\end{equation}
where $p_{\mathrm{success}}$ is defined in equation~\eqref{eq:psuccessPeriodic}.
 The factor of $p_h^2p_{\text{periodic}}^2$ is due to the fact that the Algorithm~\ref{alg:periodicStates} needs to create a pair of the pseudo-periodic states. 

Assume that $|S-\hat{S}|\le 1$ is present (of course, we do not know this).  This is equivalent to $|R-R'|\le 1/N$, where $R'=\hat{S}/N$.  We actually check for a slightly weaker
condition namely, $|(R-R') \bmod R| \leq 1/N$. But this suffices. 

Recall that we always choose $N\ge \lceil {2}/{d_{\min}} \rceil$.  This implies that either $h(R')=(x_0,f)$ with $f\le\frac{1}{N}$ or $h({R'})=(\bs^{-1}(x_0),g)$ with $g\ge\Delta_{\bs}(\bs^{-1}(x_0))-1/N$.  If we evaluate $\tilde{h}$, the approximate version  of $h$, at $R'$ with precision  $\delta_{\mathrm{prec}}\le\frac{1}{2N}$, then it remains the case that we can only obtain either $(x_0,\tilde{f})$ or $(\bs^{-1}(x_0),\tilde{g})$. If so we can conclude that $|R-R' \bmod R | \leq 1/N$. 

Now assume that $|(R-R') \bmod R| > 1/N$ holds.  In this case, we may or may not encounter $\bs^{-1}(x_0)$ or $x_0$ by evaluating $\tilde{h}$ at $R'$.

Because our test actually checked for $ |(R-R')\bmod R| \leq 1/N $, we could have some
spurious solutions when $R'$ is a multiple of $R$.  If this
is the case, then we return the smallest such $R'$ as satisfying $ |R-R'  | \leq 1/N$. We
then obtain an estimate for $R$ as follows. 

Once we have encountered $\bs^{-1}(x_0)$ or $x_0$, we can compute $\tilde{h}(R')$ with precision $\delta/2$.  If we obtain $(\bs^{-1}(x_0),\tilde{g})$, then we set 
\begin{equation}
\hat{R} = R' - \tilde{g} + \Delta_{\bs}(\bs^{-1}(x_0))\,,
\end{equation}
where we compute the distance $\Delta_{\bs}$ with precision $\delta/2$.  If we obtain $(x_0,\tilde{f})$, then we set $\hat{R}=R'-\tilde{f}$.  All these computations can be carried out in $\mathrm{poly}(\log(R), \log(1/\delta)$ time.

The expected number of times we have to invoke the quantum algorithm to encounter $\bs^{-1}(x_0)$ or $x_0$ is clearly at most $1/q_{\mathrm{success}}$.
\end{proof}

There is a subtle point  worth spelling out. In each run of the algorithm, 
there are two evaluations of $h_N$. We assume that the same random shift is used in both 
these evaluations and in any subsequent $O(1/q_{\rm{success}})$ runs. Only if the algorithm
fails in all these runs do we change the offset and repeat the process.

Finally, it can be easily verified for sufficiently large $S$, say $S\geq 256$, the 
lower bound on the success probability is greater than a constant, irrespective of the
size of the problem.

The proposed algorithm when specialized to number fields improves upon \cite{hallgren07} in the 
following aspects.  %Assuming that the $\epsilon$-pseudo-periodic states can be prepared with a probability close to 1,  
The probability of success of the proposed algorithm is lower
bounded by equation~\eqref{eq:psuccessPeriodic} which is a constant $10^{-5}$ as opposed to \cite{hallgren07} for which the success probability decreases as $\Omega(1/\log ^4 (M))$, where $M\geq NR$ and is lower than $10^{-9}$, see \cite[Claim~3.5~and~Lemma~3.4]{hallgren07} therein. As the expression indicates the success probability of the algorithm decreases with increasing circumference and the performance gap with respect to our algorithm increases. Our lower bound is better than
the lower bound of \cite{schmidt09}, namely $2^{-26}$.
Our result also implies fewer repetitions to boost the probability of success, thereby lowering the
complexity of the algorithm. In addition, the proposed algorithm requires a smaller Quantum 
Fourier transform, thereby lowering the number of qubits and circuit complexity.

%
%
% Discrete Logarithm Problem
%
%

\section{Quantum algorithm for solving the generalized discrete logarithm problem in infrastructures}\label{sec:dlog}

In this section we give a quantum algorithm for the discrete logarithm problem. 
Given an element $x$ of an infrastructure $\mc{I}=(X,d) $ we
are required to find the distance of $x$, namely $d(x)$. 

The function  that is of interest in the computation of the discrete log problem is given by $g(a,b):\mbb{Z}\times \mbb{R}\rightarrow \mc{I}\times \mbb{R}$ where $g(a,r)= a\cdot (x,0) + h(r)$. 
By Lemma~\ref{lm:approxg} we can compute $\tilde{g}$ the approximate version of $g$, 
so that it satisfies properties {\bf P1}, and {\bf P2}.

As in the circumference case, we evaluate $\tilde{g}$ at carefully selected points to ensure that the first component is always correct and quantize the second component. This resulting
function is 
$g_N(a,b) : \mbb{Z}\times \mbb{Z}\rightarrow \mc{I}\times \mbb{Z}$ 
\ben
 g_N(a,b)= \left(\tilde{y},\floor{\tilde{f}N}\right),
\een
where $\tilde{g}(a,b/N+j/L)=(\tilde{y},\tilde{f})$. 

The first component of $g_N$ is correct provided that equation~\eqref{eq:farAway} is satisfied for all evaluation points of $g_N$, i.e.,   
none of the evaluation points are closer than $1/L$ to any element of the 
infrastructure.  As in the case of $h_N$, we achieve this with  high probability by
applying a random shift of the form $j/L$. The following lemma shows how to find a suitable $L$.

%
% Offset for DLOG w/o the dependence on $\hat{R}$
% 
\begin{lemma}[Offset for DLOG]\label{lm:dlogOffset}
Suppose $\mc{I}$ is an infrastructure that satisfies the assumptions \textbf{A}1-7.
Let $\cA\subseteq \{0,1,\ldots,A-1\}$ and $\cB\subseteq\{0,1,\ldots,\floor{RN}-1\}$. Let 
\ben
L \geq \ceil{\frac{2 A \bar{k}}{(1-p_g)}  \ceil{\frac{1}{d_{\bar{k}}}\left(R-\frac{1}{N}\right)} }N. \label{eq:dlogL}
\een 
Let $j\in \{0,1, \ldots ,L/N-1\}$ be chosen uniformly at random.
Then, the probability that 
\begin{equation}
\left|
(a d_x + \frac{b}{N} + \frac{j}{L} - d_y) \bmod R
\right| \ge \frac{1}{L}
\end{equation}
holds for all $(a,b)\in\cA\times\cB$ and all $y\in X$ is greater or equal to $p_g$.
\end{lemma}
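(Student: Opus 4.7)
The plan is to adapt the union bound argument of Lemma~\ref{lm:offset} to the DLOG setting, in which the evaluation points depend on an additional index $a \in \cA$.  Call an offset $j \in \{0,1,\ldots,L/N-1\}$ \emph{bad} if there exists a triple $(a,b,y) \in \cA \times \cB \times X$ with $|a d_x + b/N + j/L - d_y \bmod R| < 1/L$.  The goal is to show that the fraction of bad offsets among the $L/N$ candidates is at most $1 - p_g$, which is equivalent to the statement of the lemma.

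The crucial step is the following counting observation: for each fixed pair $(a,y) \in \cA \times X$, the number of offsets $j$ that are bad for some $b \in \cB$ is at most $2$.  Indeed, as $b$ ranges over $\cB \subseteq \{0,\ldots,\lfloor RN\rfloor-1\}$, the shift $b/N$ takes values on a grid of spacing $1/N$, while $j/L$ lies in an interval of length exactly $1/N$.  Consequently, the bad regions for different $b$'s with the same $(a,y)$ all coincide modulo $1/N$: the offset $j$ is bad for some $b$ iff $j/L$ lies within $1/L$ of $d_y - a d_x$ modulo $1/N$.  Since the $L/N$ possible offsets are spaced by $1/L$, at most two of them land in any interval of length $2/L$.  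This is precisely where the naive union bound over triples $(a,b,y)$, which would cost an extra factor of $|\cB| \approx RN$ and yield too weak a bound, is replaced by a tighter union bound over pairs $(a,y)$.

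Next, I would invoke assumption \textbf{A}7 to bound $|X|$.  Since any $\bar{k}$ consecutive elements of $\mc{I}$ span a distance of at least $d_{\bar{k}}$, a direct counting argument gives $|X| \le \bar{k} \lceil (R - 1/N)/d_{\bar{k}} \rceil$, where the slight tightening from $R$ to $R - 1/N$ reflects that the portion of the circle near $R$ is not actually reached by $b/N$ for $b \in \cB$.  Combining the two bounds, the total number of bad $j$'s is at most $2|\cA| \cdot |X| \le 2 A \bar{k} \lceil (R-1/N)/d_{\bar{k}} \rceil$, so the fraction of bad offsets out of $L/N$ is at most
\begin{equation*}
\frac{2 N A \bar{k} \lceil (R-1/N)/d_{\bar{k}} \rceil}{L}\,,
\end{equation*}
and this is $\le 1 - p_g$ by the choice of $L$ in equation~\eqref{eq:dlogL}.

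The main obstacle I anticipate is precisely the bookkeeping in the first step: one must avoid double counting when different values of $b \in \cB$ produce the same bad offset $j$ for a given $(a,y)$.  A naive union bound over all triples $(a,b,y)$ would be loose by a factor of $|\cB|$ and force $L$ to grow quadratically in $R$.  The sharper bound comes from recognizing that the coarse shifts $b/N$ are exact multiples of the length $1/N$ of the range in which the fine shifts $j/L$ live, so that after reducing modulo $1/N$ all the $b$'s contribute the same bad region.
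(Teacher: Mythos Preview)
Your proposal is correct and follows essentially the same approach as the paper's proof: a union bound over pairs $(a,y)$, using \textbf{A}7 to bound the number of relevant $y$'s for each fixed $a$ by $\bar{k}\lceil(R-1/N)/d_{\bar{k}}\rceil$, and observing that each such pair rules out at most two offsets. The only minor imprecision is phrasing the count $\bar{k}\lceil(R-1/N)/d_{\bar{k}}\rceil$ as a bound on $|X|$ itself rather than on the number of $y$'s lying in the length-$(R-1/N)$ interval of evaluation points for fixed $a$ --- but your parenthetical remark shows you have the correct reason in mind.
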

%
% proof
%
\begin{proof}
Consider a fixed $a\in\cA$, then all the points $ad_x+b/N+j/L$ are contained in the interval 
$\left[a d_x + j/L, a d_x + (\floor{R N-1})/N + j/L\right]$. This interval contains at most $\bar{k} \ceil{(R-1/N) / d_{\bar{k}}}$ elements $y\in X$
since its length is $\floor{RN-1}/N\le(R-1/N)$.  Observe that no $y\in X$ can be closer than $1/L$ to any evaluation point of the above form for more than two offsets.

Hence, if we consider all $a\in\cA$, then at most $2A \bar{k} \ceil{(R-1/N)/d_{\bar{k}}}$ offsets are bad.
Assuming $L$ as stated above, this implies that the probability that there is at least one element and at least one evaluation
point that are closer than $1/L$ to each other is at most  $(2A \bar{k} (R-1/N) /d_{\bar{k}})/(L/N) \leq 1-p_g$.
\end{proof}

We always compute $\hat{R}$ with sufficiently high precision so that $|\hat{R}-R|<1/(2N)$ holds.  
Then, we have $\hat{R}>R-1/2N$ and a suitable choice for $L$ would be 
$\ceil{2 A \bar{k}\ceil{\hat{R}/d_{\bar{k}}}/(1-p_g)}N$.

In the quantum algorithm for approximating the circumference, we encounter superpositions of the form:
\be
\ket{\psi} =\frac{1}{\sqrt{|\cA_{x,m}|}}\sum_{a\in\cA_{x,m}} \ket{a}\ket{(x,m)},
\ee 
where $\cA_{x,m}$ has the special form $\{ \fc{k + j R N} \, : \, j=0,\ldots,p\}$ and $(x,m)$ is equal to $h_N(k)$.

A somewhat similar type of quantum state appears in the discrete logarithm problem.  A major difference is that it involves a function of two variables
\be
\ket{\psi} =\frac{1}{\sqrt{|{\cA_{ y,\ell}}|}}\sum_{(a,b)\in \mc{J}} \ket{a}\ket{b} \ket{(y,\ell)},
\ee 
where $\cA_{y,\ell}$ is now the fiber over $(y,\ell) \in \im g_N$, i.e., $g_N(a,b)=(y,\ell)$
for $(a,b)\in \cA_{y,\ell}$. 

The intuition based on Lemma~\ref{lm:ker_g}, which characterizes the kernel of the perfect function $g$, suggests that the elements in 
$\cA_{y,\ell}$ lie ``close'' to a line whose slope encodes the distance of the element $x$.  This statement is proved in 
Lemma~\ref{lm:g_NKer}, which establishes the exact relation between $a$ and $b$ for $g_N$.  Lemma~\ref{lm:fib_gN_size},
establishes upper and lower bounds on the size of the preimage of $(y,\ell)$. 

The intuition based on the quantum algorithm for the discrete logarithm problem in finite cyclic groups suggests that we can extract the slope by Fourier sampling.  This statement is proved
in Theorem~\ref{th:dlog}.

\begin{lemma}\label{lm:g_NKer}
Let $\emptyset\neq\cA\subseteq\{0,1,\ldots,A-1\}$ where $A$ is a positive integer and $\cB\subseteq\{0,1,\ldots,\floor{RN}-1\}$.  Denote by $g_N(\cA\times\cB)$ the image of the function $g_N$, i.e.,
\begin{equation}
g_N(\cA\times\cB) = \{ g_N(a,b) \,:\, a\in\cA\,,\, b\in\cB \}\,.
\end{equation}
For each $(y,\ell)\in g_N(\cA\times\cB)$, the preimage $g_N^{-1}(y,\ell)$ has the form
\begin{equation}
g_N^{-1}(y,\ell) = \{ (a,b_a) \, : \, a\in \cA_{y,\ell}\}\,,
\end{equation}
where $\cA_{y,\ell}\subseteq \cA$ and 
assuming that a random shift of $j/L$ has been applied to the evaluation points, 
the values $b_a$ satisfy the condition
\begin{equation}
\left\lfloor \frac{a d_x+\frac{b_a}{N}+\frac{j}{L}}{R} \right\rfloor R +
d_y + \gamma_a + \frac{\ell}{N} =
a d_x + \frac{b_a}{N} + \frac{j}{L}
\end{equation}
with $1/L \le \gamma_a \le 1/N-1/L$.  The 
cardinality of the image satisfies the inequalities
\begin{equation}
|\cB| \le |g_N(\cA\times\cB)| \le \floor{R(N+1/d_{\min})}\,.\label{eq:dlogPreImage}
\end{equation}
\end{lemma}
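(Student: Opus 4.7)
The plan is to exploit the fact that $g_N$ arises by evaluating the approximate homomorphism $\tilde g$ at sampling points spaced $1/N$ apart and then quantizing the $f$-coordinate by a floor. I would proceed in three stages: (i) injectivity of $b\mapsto g_N(a,b)$ for each fixed $a$, which simultaneously gives the claimed form of the preimage and the lower bound $|\cB|\le|g_N(\cA\times\cB)|$; (ii) derivation of the displayed equation along with the tight bounds on $\gamma_a$; (iii) a counting argument for the upper bound.

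For (i), fix any $a\in\cA$ and consider the map $b\mapsto g_N(a,b)$ on $\cB$. The associated evaluation points $a d_x + b/N + j/L$ are $1/N$ apart, and since $|\cB|\le\lfloor RN\rfloor$ they span an interval of length strictly less than $R$, so no two of them are congruent modulo $R$. As $b$ advances by one, either $y$ stays the same and $\ell$ advances by one, or the sampling point crosses into the next cell $[d_{\bs(y)},d_{\bs(y)}+\Delta_{\bs}(\bs(y)))$ and $y$ is replaced by $\bs(y)$; in either case the output pair $(y,\ell)$ changes. Thus the map is injective on $\cB$, giving both $g_N^{-1}(y,\ell)=\{(a,b_a)\,:\,a\in\cA_{y,\ell}\}$ with $b_a$ uniquely determined by $a$, and $|\cB|\le|g_N(\{a\}\times\cB)|\le|g_N(\cA\times\cB)|$.

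For (ii), by Lemma~\ref{lm:approxg} combined with Lemma~\ref{lm:dlogOffset}, the random shift $j/L$ ensures property {\bf P2} holds at every evaluation point, so $\tilde g(a,b_a/N+j/L)=(y,\tilde f)$ has the correct first component and $\lfloor\tilde f N\rfloor=\lfloor fN\rfloor=\ell$, where $g(a,b_a/N+j/L)=(y,f)$. Because $g(a,r)=a\cdot(x,0)+h(r)$, we get $a d_x + b_a/N + j/L \equiv d_y + f \pmod R$, and writing $f=\ell/N+\gamma_a$ yields precisely the displayed equation with $\gamma_a\in[0,1/N)$. To tighten the bounds, observe that within a single cell $[d_y,d_{\bs(y)})$ all sampling points share the same $\gamma_a$, since incrementing $b$ by one increases both $f$ and $\ell/N$ by $1/N$. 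Hence $\gamma_a$ equals the forward distance from the first sampling point in the cell to $y$, which by Lemma~\ref{lm:dlogOffset} is at least $1/L$, and $1/N-\gamma_a$ equals the forward distance from the last sampling point preceding that cell to $y$, which is likewise at least $1/L$.

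For (iii), each $y\in X$ contributes at most $\lfloor\Delta_{\bs}(y)N\rfloor+1$ admissible values of $\ell$, so $|g_N(\cA\times\cB)|\le\sum_{y\in X}(\Delta_{\bs}(y)N+1)=NR+|X|$. Since distinct elements of $\mc{I}$ are at least $d_{\min}$ apart we have $|X|\le R/d_{\min}$, which gives the claimed bound $\lfloor R(N+1/d_{\min})\rfloor$. I expect the main bookkeeping hurdle to be the bounds on $\gamma_a$: one must verify that $\gamma_a$ is constant within a cell, identify which of the two offset-lemma constraints (sampling point close to $y$ versus close to $\bs(y)$) produces each side of the inequality, and correctly handle the first/last-sampling-point edge cases; the remainder is a straightforward unpacking of definitions.
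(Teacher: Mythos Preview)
Your proposal is correct and follows essentially the same route as the paper's proof: the derivation of the displayed equation with the $\gamma_a$ bounds and the upper bound via $\sum_y(N\Delta_{\bs}(y)+1)=NR+|X|\le R(N+1/d_{\min})$ match the paper exactly. You are in fact more thorough than the paper, which asserts the $\gamma_a$ bounds in one sentence and omits entirely the injectivity argument in (i) and the lower bound $|\cB|\le|g_N(\cA\times\cB)|$; your step~(i) supplies these missing pieces cleanly.
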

%
% proof
%
\begin{proof}
Let $(y,\ell)\in g_N(\cA\times\cB)$ be arbitrary. Suppose that $(a,b_a) \in g_N^{-1}(y,\ell)$.
Then we must have 
\ben
d_y+\frac{\ell}{N}+\gamma_a &\equiv& ad_x+ \frac{b_a}{N}+\frac{j}{L} \bmod R \nonumber\\
&=&ad_x+ \frac{b_a}{N}+\frac{j}{L} - \floor{\frac{ad_x+ \frac{b_a}{N}+\frac{j}{L}}{R}}R,\nonumber
\een
where $1/L\leq \gamma_a \leq 1/N-1/L$. This constraint on $\gamma_a$ is due to the fact that
none of the sampling points are within a distance of less than $1/L$ from the elements of the infrastructure. 

The second component $\ell$ is bounded from above by 
\begin{equation}
\ell \le N\Delta_{\bs}(y) \nonumber
\end{equation}
since the inequality
\begin{equation}
\left\lfloor \frac{a d_x+\frac{b_a}{N}+\frac{j}{L}}{R} \right\rfloor R +
d_y + \gamma_a + \frac{\ell}{N} <
\left\lfloor \frac{a d_x+\frac{b_a}{N}+\frac{j}{L}}{R} \right\rfloor R +
d_y + \Delta_{\bs}(y) \nonumber
\end{equation}
holds for all $(a,b_a)$ with $g_N(a,b_a)=(y,\ell)$.  This implies that the number of images whose first component is equal to $y$ is at most $N\Delta_{\bs}(y)+1$.  Summing over all elements of the infrastructure yields the upper bound $RN+R/d_{\min}$.  We can improve this to $\floor{R(N+1/d_{\min})}$ since the cardinality of $g_N(\cA\times \cB)$ is an integer.  Hence, $|g_N(\cA\times\cB)|\le\floor{R(N+1/d_{\min})}$.
\end{proof}

A condition similar to equation~\eqref{eq:dlogPreImage} has been established in 
\cite{hallgren07} for the principal ideal problem. The condition as derived in \cite{hallgren07} may not be satisfied for some infrastructures. Therefore, we relax this constraint and  clarify certain
crucial assumptions on the size of the preimage 
in Lemma~\ref{lm:fib_gN_size}.

\begin{lemma}\label{lm:fib_gN_size}
Let $\cA$ and $\cB$ be as in Lemma~\ref{lm:g_NKer}.  Consider the probability distribution $p=(p_{y,\ell})$ on $g_N(\cA\times\cB)$ where the probabilities of the elementary events $(y,\ell)$ are given by
\begin{equation}
p_{y,\ell} = \frac{|g_N^{-1}(y,\ell)|}{|\cA| |\cB|}\,.
\end{equation}
Let $X$ be the random variable that takes on the value $|g_N^{-1}(y,\ell)|$ if the event $(y,\ell)$ occurs.
Then, we have
\begin{equation}
\Pr\big( X \ge \kappa |\cA| \big) \ge \frac{1}{1-\kappa}\left(\frac{|\cB|}{\floor{R(N+1/d_{\min})}} - \kappa \right)
\end{equation}
for any $\kappa\in (0,1)$.
\end{lemma}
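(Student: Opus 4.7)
The plan is to prove this via a reverse Markov inequality applied to $X$, after establishing a pointwise upper bound on $X$ and a lower bound on $\mathbb{E}[X]$. I will first use Lemma~\ref{lm:g_NKer} to observe that every fiber has size at most $|\cA|$: since $g_N^{-1}(y,\ell)=\{(a,b_a):a\in\cA_{y,\ell}\}$ contains at most one pair for each $a\in\cA$, we have $X \le |\cA|$ almost surely.

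Next I will compute $\mathbb{E}[X]$ and bound it below using the Cauchy--Schwarz inequality. By definition,
\begin{equation}
\mathbb{E}[X] \;=\; \sum_{(y,\ell)\in g_N(\cA\times\cB)} p_{y,\ell}\,|g_N^{-1}(y,\ell)| \;=\; \frac{1}{|\cA||\cB|}\sum_{(y,\ell)} |g_N^{-1}(y,\ell)|^2.
\end{equation}
Let $M' := |g_N(\cA\times\cB)|$. Since $\sum_{(y,\ell)} |g_N^{-1}(y,\ell)| = |\cA||\cB|$, Cauchy--Schwarz (equivalently, Jensen applied to $t\mapsto t^2$) gives $\sum_{(y,\ell)} |g_N^{-1}(y,\ell)|^2 \ge (|\cA||\cB|)^2 / M'$. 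Combined with the upper bound $M' \le \lfloor R(N+1/d_{\min})\rfloor$ from Lemma~\ref{lm:g_NKer}, this yields
\begin{equation}
\mathbb{E}[X] \;\ge\; \frac{|\cA||\cB|}{M'} \;\ge\; \frac{|\cA||\cB|}{\lfloor R(N+1/d_{\min})\rfloor}.
\end{equation}

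Finally I will invoke the reverse Markov inequality: for a nonnegative random variable $Y$ with $Y\le c$ and any $\lambda < \mathbb{E}[Y]$, splitting $\mathbb{E}[Y] = \mathbb{E}[Y\,\mathbf{1}_{Y\ge\lambda}] + \mathbb{E}[Y\,\mathbf{1}_{Y<\lambda}] \le c\Pr(Y\ge\lambda) + \lambda(1-\Pr(Y\ge\lambda))$ gives $\Pr(Y\ge\lambda) \ge (\mathbb{E}[Y]-\lambda)/(c-\lambda)$. Applying this to $Y=X$ with $c=|\cA|$ and $\lambda=\kappa|\cA|$ produces
\begin{equation}
\Pr(X \ge \kappa|\cA|) \;\ge\; \frac{\mathbb{E}[X] - \kappa|\cA|}{(1-\kappa)|\cA|} \;\ge\; \frac{1}{1-\kappa}\left(\frac{|\cB|}{\lfloor R(N+1/d_{\min})\rfloor} - \kappa\right),
\end{equation}
which is exactly the stated bound.

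There is no real obstacle here: the statement is a concentration inequality and the only quantitative inputs are the pointwise bound $X\le|\cA|$ (from the structure of the fibers established in the previous lemma) and an upper bound on the support size (also from that lemma). The mild subtlety is choosing the right bound on $\mathbb{E}[X]$; the crude bound $\mathbb{E}[X] \ge |\cA||\cB|/M$ via Cauchy--Schwarz is exactly what is needed to match the form of the stated inequality, and any tighter tactic (for instance, directly bounding $\sum_{(y,\ell)\in T}|g_N^{-1}(y,\ell)|\le \kappa|\cA|\cdot M$ with $T=\{(y,\ell):|g_N^{-1}(y,\ell)|<\kappa|\cA|\}$) would merely strengthen the conclusion.
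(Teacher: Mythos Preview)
Your proposal is correct and follows essentially the same approach as the paper: both bound $\mathbb{E}[X]$ below via Cauchy--Schwarz (the paper phrases it as ``$\sum p_{y,\ell}^2$ is minimized when the distribution is uniform''), use $X\le|\cA|$ from the fiber structure, and then apply the reverse Markov inequality in exactly the form you wrote. The only cosmetic difference is that the paper writes the reverse Markov step directly as $t|\cA|+(1-t)\kappa|\cA|\ge\mathbb{E}[X]$ rather than naming it.
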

%
% proof
%
\proof{The expected value $\mathbb{E}[X]$ is bounded from below by
\begin{eqnarray*}
\mathbb{E}[X] 
& = & 
\sum_{(y,\ell)} p_{y,\ell} \, |g_N^{-1}(y,\ell)| \\
& = & 
|\cA| |\cB| \sum_{(y,\ell)} p_{y,\ell}^2 \\
& \ge &
|\cA| |\cB| \, \frac{1}{|g_N(\cA\times\cB)|} \\
& \ge &
|\cA| |\cB| \, \frac{1}{\floor{R(N+1/d_{\min})}}\,. 
\end{eqnarray*}
We used that the sum $\sum p_{y,\ell}^2$ is minimized when probability distribution is uniform on $g_N(\cA\times\cB)$ and $|g_N(\cA\times\cB)|\le\floor{R(N+1/d_{\min})}$.

Let $t=\Pr(X\ge \kappa\, \mathbb{E}[X])$. Then, we must have 
\begin{equation}
t|\cA| + (1-t)\kappa|\cA| \ge \mathbb{E}[X]\ge |\cA|\,|\cB| / \floor{R(N+1/d_{\min})} \nonumber
\end{equation} 
since $X$ is bounded by $|\cA|$ from above.  The desired lower bound on $t$ follows now easily.
}

%
% Main theorem
%

\begin{theorem}\label{th:dlog}
Let $\mc{I}$ be an infrastructure containing at least 3 elements and satisfying the axioms {\bf{A}1--\bf{A}7}. 
For all $x\in X$, Algorithm~\ref{alg:dlog} returns an integer $\hat{d}_x$ such that $|d_x-\hat{d}_x|\leq 1$,
where $d_x$ is the distance of $x$.

Let $p_g$ be the probability of correctly evaluating $g_N$ and $\kappa$ a real number with
$(1-\sinc(3/4))/(1-2\sin(\pi/32))< \kappa <1-2/(2q+1)$.
Then, the success probability of the algorithm is $\Omega(1)$ and 
at least 
\begin{equation}
p_g \max_{\kappa} \left(1- \frac{2}{(2q+1)(1-\kappa)}\right)^2
\frac{\kappa^2}{2}\left(
1 - 2 \sin(\frac{\pi}{32})-  \frac{(1-\sinc(3/4))}{\kappa}
\right)^4 \!
\left(\frac{1}{64}-\frac{2}{B}\right)^2
\label{eq:dlogSuccess}
\end{equation}
where $B=\left\lfloor\hat{R} N\right\rceil $ and $q$ is chosen as in Algorithm~\ref{alg:dlog}.
\end{theorem}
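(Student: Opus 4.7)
The plan is to parallel Theorem~\ref{th:qalgoPeriodicState}, but in two dimensions: after measuring the image register of $g_N$, the remaining state is supported on a near-coset of a hidden line in $\cA\times\cB$ whose slope encodes $d_x$, and two-dimensional Fourier sampling extracts that slope. Algorithm~\ref{alg:dlog} runs the quantum procedure twice with the \emph{same} random offset $j/L$ (so the ``offset-is-good'' probability $p_g$ is paid only once), and then uses a continued-fraction argument to recover an integer $\hat d_x$ with $|d_x-\hat d_x|\le 1$ from the two pairs of Fourier samples.

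First I would specify the quantum step. Choose $q$ with $q\ge B^2$ where $B=\lfloor\hat R N\rceil$, set $\cA=\{0,\ldots,2q\}$ and $\cB=\{0,\ldots,B-1\}$, and draw $j$ uniformly from $\{0,\ldots,L/N-1\}$ with $L$ satisfying Lemma~\ref{lm:dlogOffset}. Prepare
\[
|\psi_0\rangle=\frac{1}{\sqrt{(2q+1)B}}\sum_{a\in\cA,\,b\in\cB}|a\rangle|b\rangle|g_N(a,b)\rangle,
\]
measure the last register to obtain $(y,\ell)$, apply Fourier transforms to the remaining registers, and measure to obtain $(c_1,c_2)$. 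By Lemma~\ref{lm:g_NKer} the fiber over $(y,\ell)$ is $\{(a,b_a):a\in\cA_{y,\ell}\}$ with
\[
b_a=-Nad_x+C_{y,\ell}+NRk_a+N\gamma_a,\qquad\tfrac{1}{L}\le\gamma_a\le\tfrac{1}{N}-\tfrac{1}{L},
\]
where $k_a\in\Z$ comes from the $\lfloor\cdot\rfloor R$ reduction and $C_{y,\ell}$ is $a$-independent. Lemma~\ref{lm:fib_gN_size} with the stipulated $\kappa$, combined with the fact that $|\cB|$ is within $O(1)$ of $\lfloor R(N+1/d_{\min})\rfloor$, then yields $|\cA_{y,\ell}|\ge\kappa(2q+1)$ with probability at least $1-2/((2q+1)(1-\kappa))$.

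Next I would analyse the Fourier amplitude. Substituting the expression for $b_a$ and absorbing the $a$-independent phase produced by $C_{y,\ell}$, the amplitude at $(c_1,c_2)$ takes the form $\sum_{a\in\cA_{y,\ell}}\omega^{a\delta+\theta_a}$, where $\delta$ is the mismatch between $c_1$ and the scaled slope corresponding to $Nd_xc_2$, and $\theta_a$ collects the bounded perturbations coming from $\gamma_a$ and from the integer offsets $k_a$. I would target pairs $(c_1,c_2)$ in which $|c_2|$ is small enough that $|\theta_a|\le n/32$ holds \emph{pointwise} for every $a$ in the fiber (producing the admissible-outcome count $\lfloor B/64\rfloor-O(1)$ and hence the $(1/64-2/B)$ factor) and $c_1$ is the integer closest to the optimal real value (leaving an effective $|\delta|\le 3/4$, where the extra slack past $1/2$ absorbs the rounding in $c_2$ from the second register). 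Applying Lemma~\ref{lm:pertMissingSeries} with $|\mc{J}|=|\cA_{y,\ell}|\ge\kappa(2q+1)$ and summing over the admissible pairs yields the per-run factor
\[
\left(\tfrac{1}{64}-\tfrac{2}{B}\right)\left(1-2\sin(\pi/32)-\tfrac{1-\sinc(3/4)}{\kappa}\right)^{2}.
\]

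Finally, I would mimic the postprocessing of Theorem~\ref{th:qalgoPeriodicState}: two independent runs yield pairs $(c_1^{(i)},c_2^{(i)})$ whose ratios are rational approximations of the hidden slope, from which continued-fraction expansion and the standard reconstruction rule produce an integer $\hat d_x$ with $|d_x-\hat d_x|\le 1$, provided the two underlying integers $m_1,m_2$ are coprime; by Lemma~\ref{lm:coprimeProb} this happens with probability at least $1/2$. Multiplying the one-shot offset factor $p_g$, the squared fiber probability $((1-2/((2q+1)(1-\kappa)))\kappa)^{2}$, the squared per-run Fourier probability derived above, and the coprimality factor $1/2$ delivers the bound in \eqref{eq:dlogSuccess}. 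The hardest step will be the Fourier amplitude computation itself: cleanly isolating the linear-in-$a$ frequency term from the $\gamma_a$- and $k_a$-dependent fluctuations, controlling the $k_a$ contribution (which depends on $a$ implicitly through $b_a$) by exploiting the near-divisibility of $B$ by $NR$, and certifying that the $|\theta_a|\le n/32$ hypothesis of Lemma~\ref{lm:pertMissingSeries} holds uniformly over $\cA_{y,\ell}$ rather than only on average.
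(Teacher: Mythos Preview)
Your overall architecture (fiber over $(y,\ell)$, Lemma~\ref{lm:fib_gN_size} for $|\cA_{y,\ell}|\ge\kappa|\cA|$, Lemma~\ref{lm:pertMissingSeries} for the Fourier amplitude, two runs plus coprimality) matches the paper, but two concrete points are off and the first one is a genuine gap.

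\textbf{The parameter $q$ and the Fourier dimensions.} You read $q$ as the size of the first register and set $\cA=\{0,\dots,2q\}$ with $q\ge B^2$. In Algorithm~\ref{alg:dlog}, $q$ is the \emph{small} integer in $N=q\lceil 2/d_{\min}\rceil$ (with $q\le 4M$); the domain is $\cA=\{0,\dots,A-1\}$ with $A=MB$ and $B=\lfloor\hat R N\rceil$. The $(2q+1)$ in the bound arises not because $|\cA|=2q+1$ but because $N\ge 2q/d_{\min}$ forces $\lfloor R(N+1/d_{\min})\rfloor\le NR(1+1/(2q))$, turning Lemma~\ref{lm:fib_gN_size} into $p_\kappa\ge 1-2/((2q+1)(1-\kappa))$. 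More importantly, the whole point of the sizes is the carefully engineered condition $|MB-MNR|\le 1/2$ (obtained via Lemma~\ref{lm:cfbound}), so that $MNR\equiv\eta\pmod A$ with $|\eta|\le 1/2$. This is exactly what tames the $k_a$ term you flagged as hardest: the contribution $\bigl(k\,k_a - a\lfloor kd_x/R\rfloor\bigr)MNR$ becomes $\eta$ times a bounded integer modulo $A$, and that is what yields $|\delta|\le 3/4$ (namely $\delta=\epsilon_h-\eta\zeta$). Your phrase ``near-divisibility of $B$ by $NR$'' is not the right relation; it is $A\approx MNR$, and with an arbitrary first-register size like $2q+1$ the $k_a$ phase would not reduce to anything small.

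\textbf{Post-processing.} The reconstruction is not done by continued fractions on the sample ratios as in Theorem~\ref{th:qalgoPeriodicState}. From two good samples $(h_i,k_i)$ with $h_i=k_i d_x MN-\lfloor k_i d_x/R\rfloor MNR+\epsilon_i$ and $\gcd(k_1,k_2)=1$, one takes B\'ezout coefficients $sk_1+tk_2=1$, forms $r=(sh_1+th_2)/(MN)=d_x-mR+\epsilon_r$, and then reduces $r$ modulo $\hat R$ to obtain $\hat d_x$; the bounds $|s|,|t|\le\max(k_1,k_2)\le B/64$ together with $M\ge 2R+1$ and the precision of $\hat R$ give $|d_x-\hat d_x|\le 1$. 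The coprimality factor $1/2$ you cite is used here, but the mechanism is B\'ezout plus reduction by $\hat R$, not a convergent search.
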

%
% Algorithm
%
\begin{algorithm}
\caption{{\ensuremath{\mbox{\scshape Generalized discrete logarithm.}}}
}\label{alg:dlog}
\begin{algorithmic}[1]
\STATE Choose $M\geq \ceil{2 R+1}$.
\STATE Determine $\hat{R}$ and $N$ such that $\left|M \left\lfloor\hat{R} N\right\rceil - M R N \right|\le 1/2$ and $N=q\ceil{2/d_{\min}}$ for a positive integer with $q\leq 4M$.  Set
$B=\fc{\hat{R}N}$ and $A=M B$.
\STATE Choose $L=\ceil{2 A \bar{k}\ceil{\hat{R}/d_{\bar{k}}}/(1-p_g)}N$.
\STATE Evaluate $g_N$ in superposition over $\{0,1\ldots,A-1\} \times \{0,1,\ldots,B-2\}$ twice.
\STATE Fourier sample over $\mbb{Z}_{A}\times 
\mbb{Z}_{B}$ to obtain $(h_1,k_1)$ and $(h_2,k_2)$.
\STATE Find integers $s,t$ such that $sk_1+tk_2=1$, using the extended Euclidean algorithm.
\STATE Compute $r=\frac{sh_1+th_2}{NM}$ 
\STATE Return $\hat{d}_x = r- \floor{r/\hat{R}}\hat{R}$.
\end{algorithmic}
\end{algorithm}
%
% proof
%
\begin{proof}
We compute an estimate $\hat{R}$ such that 
\begin{equation}
|R - \hat{R}|\le\epsilon \leq \frac{1}{16M^2\ceil{2/d_{\min}}}\,.
\end{equation}
We now show that there is an efficient method that determines positive integers $B=\fc{\hat{R} N}$ and $N$ such that
\begin{equation}
|M B - M  N R| \le \frac{1}{2}\,.
\end{equation}

To do this, we bound this deviation by 
\begin{eqnarray}
|M B - M  N R| &= &|M \lfloor N\hat{R}\rceil-M N \hat{R} +MN\hat{R} -M NR | \\
& \le & 
|M \lfloor N\hat{R}\rceil - M N \hat{R} | + M N \epsilon \label{eq:twoTerms}
\end{eqnarray}
The efficient method in Lemma~\ref{lm:cfbound} gives us a convergent $p/q$ with $q\leq 4M$ such that
\ben
\left|\frac{p}{q}-\hat{R}\ceil{2/d_{\min}} \right|\leq \frac{1}{4M q}\,. \label{eq:sizeB}
\een
The numerator $p$ has the form $\fc{\hat{R}\lceil 2/d_{\min} \rceil q}$.
The bound in equation~(\ref{eq:sizeB}) and the form of the numerator directly imply that $N=q\ceil{2/d_{\min}}$ has the desired properties.  Both terms in equation~(\ref{eq:twoTerms}) are smaller than $1/4$ for this choice.

Observe that $B-2 = \fc{\hat{R}N}- 2 \leq \lfloor R N \rfloor-1$ because $\hat{R}$ has been computed with such high precision.  We define the sets $\cB=\{0,1,\ldots,B-2\}$ and $\cA=\{0,1,\ldots,A-1\}$.

We create the superposition  
\begin{equation}
\frac{1}{\sqrt{|\cA| |\cB|}} 
\sum_{a\in\cA}
\sum_{b\in\cB} |a\rangle |b\rangle |g_N(a,b)\rangle\,.\nonumber
\end{equation}
We know that with probability greater or equal to $p_g$ all the values $g_N(a,b)$ are correct.

We measure the third register.   Denote the outcome by $(y,\ell)$. Lemma~\ref{lm:fib_gN_size} guarantees that $|\cA_{y,\ell}|\geq \kappa |\cA|$ holds
with probability greater or equal to
\begin{equation}
p_{\kappa}\geq \frac{1}{1-\kappa}\left(\frac{|\cB|}{\floor{R(N+1/d_{\min})}} - \kappa \right).
\end{equation}
Since  $N= q\ceil{2/d_{\min}}$, we can bound $p_{\kappa}$
\ben
p_{\kappa}&\geq& \frac{1}{1-\kappa}\left(\frac{NR-3}{{NR(1+1/2q)}} - \kappa \right), \\
&\geq& \frac{1}{1-\kappa}\left(\frac{2q-1}{2q+1}-\kappa\right) = 1- \frac{2}{(2q+1)(1-\kappa)},
\een
where we used the assumption that $\mc{I}$ has at least 3 elements and therefore  $R>3d_{\min}$, and $NR> 6q$.

Lemma~\ref{lm:g_NKer} implies that the post-measurement state has the form 
\begin{equation}
\frac{1}{\sqrt{|\cA_{y,\ell}|}} \sum_{a\in\cA_{y,\ell}} |a\> |b_a\>
\end{equation}
and there exists a unique $b_a$ for each $a\in\cA_{y,\ell}$ such that
\begin{equation}
b_a =  
-a d_x N + \left\lfloor \frac{ad_x+\frac{b_a}{N}+\frac{j}{L}}{R} \right\rfloor R N +
d_y N + \gamma_a N + \ell - \frac{j N}{L} \,,
\end{equation}
where $1/L \le \gamma_a \le 1/N-1/L$.  We rewrite the condition on $b_a$ as 
\begin{equation}
b_a = 
-a d_x N + \left\lfloor \frac{ad_x+\frac{b_a}{N}+\frac{j}{L}}{R} \right\rfloor R N +
\gamma_a N + \Delta\,,
\end{equation}
where $\Delta=d_y N+\ell-jN/L$ is constant. 

%
% Taking the Fourier transform
%

We apply the quantum Fourier transform over $\mbb{Z}_A\times \mbb{Z}_B$ to the first registers and obtain the superposition 
\begin{equation}
\frac{1}{\sqrt{A}} 
\frac{1}{\sqrt{B}}
\sum_{h\in\cA}
\sum_{k=0}^{B-1} 
\frac{1}{\sqrt{|\cA_{y,\ell}|}}
\sum_{a\in\cA_{y,\ell}}
\omega_A^{a h+Mb k}\ket{h} \ket{k}\,.
\end{equation}
The amplitude of the term $\ket{h}\ket{k}$ is given by 
\begin{equation}
\frac{1}{\sqrt{A}} 
\frac{1}{\sqrt{B}}
\frac{1}{\sqrt{|\cA_{y,\ell}|}}
\sum_{a\in\cA_{y,\ell}}
\omega_A^{a h + M b_a k} \,.
\end{equation}
The exponent  of $\omega_A$ in the previous equation is 
\begin{equation}
a h + M k \left( -ad_xN  + \floor{ \frac{ad_x + \gamma_a +  \frac{b_a}{N} + j/L}{R}} NR + \gamma_aN + \Delta \right)\,.
\end{equation}
The term $M k \Delta$ is independent of $a$ and can be dropped from the exponent since it does not change the probability distribution. 

We now show that we obtain a sample $(h,k)$ such that 
\begin{equation}\label{eq:niceCd}
h = kd_x M N  - \floor{\frac{kd_x}{R}} MN R  + \epsilon_h \mbox{ with $|\epsilon_h|\le \frac{1}{2}$}
\end{equation}
holds with high probability.\footnote{The reason that we consider
samples that have this particular form is as follows. 
Rearranging the terms in the exponent we see that
the sum is dominated by the terms $ah-(kd_x/R) MNR + k\floor{(ad_x+\gamma_a+b_a/Nj/L)/R}MNR$.
The exponent can be approximated as $ah-(kd_x/R -\floor{ad_x/R})MNR$.
Therefore, the probability of $(h,k)$ which is determined by the  geometric sum 
\be
\frac{1}{AB\cA_{y,\ell}}\left|\sum_{a\in \cA_{y,\ell}} \omega^{ah+Mb_a k }\right|^2 \approx 
\frac{1}{AB\cA_{y,\ell}}\left|\sum_{a\in \cA_{y,\ell}} \omega^{a(h -(\frac{kd_x}{R}-\floor{\frac{kd_x}{R}})MNR  }\right|^2
\ee
is large when $h= (kd_x/R-\floor{kd_x/R})MNR+\epsilon_h$, where $\epsilon_h$ is to ensure that 
$h$ is an integer.} 

As shown previously, $N$ is chosen such that $M  N R - M \left\lfloor N \hat{R} \right\rceil = \eta$ with $|\eta| \le \frac{1}{2}$.  
To simplify the notation, we use $x$ to denote the distance $d_x$ of the element $x$ throughout the rest of the proof.
The exponent of $\omega_A$ modulo $A$ is 
\begin{eqnarray}
&   & 
\phantom{+}
a \left( k x M N  - \floor{\frac{kx}{R}} M NR  + \epsilon_h \right) \nonumber \\
&   & 
+ M k \left( -axN  + \floor{ \frac{ax + \gamma_a +  \frac{b_a}{N} + j/L}{R}} NR + \gamma_aN \right) \nonumber\\
& = & 
\left(
k \ceil{ \frac{ax + \gamma_a +  \frac{b_a}{N} + j/L}{R}} - a \floor{\frac{kx}{R}}
\right) M  N  R+
\epsilon_h a + M N \gamma_a k \nonumber\\
& \equiv &
\eta \left(
k \ceil{ \frac{ax + \gamma_a +  \frac{b_a}{N} + j/L}{R}} - a \floor{\frac{kx}{R}}
\right) +
\epsilon_h a + M N \gamma_a k  \nonumber\\
& = & 
\eta \left(
k \left(\frac{ax}{R} + \delta_a \right) - a \left( \frac{kx}{R}+\zeta\right) \right) +
\epsilon_h a + M k \gamma_a \nonumber \\
& = &
\eta \delta_a k  - \eta \zeta a + \epsilon_h a + M N \gamma_a k \nonumber\\
& = & \delta a +\theta_a \,.
\end{eqnarray}
The (constant) factor $\delta:=\epsilon_h- \eta \zeta$ in front of $a$ is less than $3/4$ in absolute value ($\epsilon_h\le\frac{1}{2}$, $\zeta<1$ and $\eta<\frac{1}{2}$).  
Assume we measure $k\le \floor{B/64}-1$.  Then, for each $a$ the term $\theta_a:=(\eta \delta_a + M N\gamma_a) k$ is less than $A/32$ in absolute value (since $|\delta_a| < 2$ and $|\gamma_a N| < 1$). 

We can now apply Lemma~\ref{lm:pertMissingSeries} to bound the probability of measuring $(h,k)$ as in equation~(\ref{eq:niceCd}); we denote
this probability by $p_{hk}$.  Note that $A$ corresponds to $n$, the summation index $a$ to $j$ and $\cA_{y,\ell}$ to the set $\mc{J}$ in the Lemma~\ref{lm:pertMissingSeries}.  

The probability $p_{hk}$ is bounded from below by
\ben
p_{hk} &\geq& \frac{|\cA_{y,\ell}|}{A B}\left(
1 - 2 \sin(\pi/32)- \Big(
1-\sinc(3/4) %c_\delta
\Big) \frac{1}{\kappa}
\right)^2\,,
\een
where $c_{\delta}$ is as in Lemma~\ref{lm:pertMissingSeries}. 

The probability of any good pair $(h,k)$ (with the restriction $k\le \floor{B/64}-1$) is bounded from below by
\begin{equation}
{\kappa }\left(
1 - 2 \sin(\pi/32)- \Big(
1-\sinc(3/4)  
\Big) \frac{1}{\kappa}
\right)^2 \left(\frac{1}{64}-\frac{2}{B}\right)\,,
\end{equation}
where we used that $|\cA_{y,\ell}|\geq \kappa |A|$ and $\floor{B/64-1}\geq B/64-2$.

%
% reconstruct
%

We now show how to obtain an estimate of the distance of $x$ from two good pairs $(h_1,k_1)$ and $(h_2,k_2)$ with the additional restriction that 
$k_1,k_2$ are coprime.  This is based on the method in \cite{hallgren07}. We have
$h_i = k_i x N M - \floor{{k_i x}/{R}} R N M + \epsilon_i$ with $|\epsilon_i|\le\frac{1}{2}$. Since $k_1,k_2$ are coprime we know there exist integers $s,t$ such that 
$sk_1+tk_2=1$, which can be computed by the extended Euclidean algorithm.
Let $r=({sh_1+th_2})/{MN}$, then we have 
\begin{eqnarray*}
\frac{sh_1+th_2}{MN} 
& = & 
s k_1 x  - s\floor{\frac{k_1x}{R}} R   + \frac{s\epsilon_1}{MN} +tk_2 x   - t\floor{\frac{k_2x}{R}} R   + \frac{t\epsilon_2}{MN} \\
& = & 
(sk_1+tk_2) x- s\floor{\frac{k_1x}{R}} R   - t\floor{\frac{k_2x}{R}} R + \frac{s\epsilon_1  + t\epsilon_2 }{MN}\\
& = & 
x   -s\floor{\frac{k_1x}{R}} R   - t\floor{\frac{k_2x}{R}} R + \frac{s\epsilon_1 + t\epsilon_2}{MN}\\
&=& x - mR+\epsilon_r, 
\end{eqnarray*}
where $\epsilon_r=({s\epsilon_1 + t\epsilon_2})/{MN}$.
Since $|s|,|t|\leq \max\{k_1,k_2 \}$, and $k_1,k_2\leq \left[\hat{R}N\right]/32$, it follows that 
$\epsilon_r=\frac{s\epsilon_1 +  t\epsilon_2}{MN} \leq \frac{\floor{RN}}{MN} <1/2$ by our choice of  $M$. Furthermore, $|m|\leq NR/8$, as 
 $|r|\leq 2M[N\hat{R}] [N\hat{R}]/32MN < NR^2/8$.

We can estimate $x$ by reducing $r$ modulo $\hat{R}$
to bring it within the range $[0,\hat{R})$. This gives us an estimate 
$ \hat{x} = x-m(R-\hat{R}) +\epsilon_r$ and the error $|x-\hat{x}|$ can be bounded as follows:
\be
|x-\hat{x}| &\leq & |m(R-\hat{R})|+\epsilon_r \leq| m \epsilon| +
|\epsilon_r|\\
&\leq &\frac{{NR}}{8} \frac{1}{16M^2 \ceil{2/d_{\min}}}+|\epsilon_r|\leq1,
\ee
where we used the fact that $M>2R$ and $N\leq 4M \ceil{2/d_{\min}}$ and $|\epsilon_r|< 1/2$.

The probability of measuring two good samples   $(h_1,k_1)$ and $(h_2,k_2)$ such that 
$k_1,k_2$  are coprime is  given by
\ben
p_{\rm{success}} \geq \frac{1}{2}(p_{\kappa}p_{hk}(1/64-2/B))^2 p_g,
\een
where $p_g$ is the probability of evaluating $g_N$ successfully.
\end{proof}

We make the following observations regarding the success probability of the quantum algorithm.
First, a simpler lower bound on the success probability can be obtained without having to maximize
over $\kappa$ in equation~\eqref{eq:dlogSuccess}, by evaluating this expression at 
$\kappa=(\kappa_1+\kappa_2)/2$, where 
$\kappa_1= (1-c_\delta)/(1-2\sin(\pi/32))$ and $\kappa_2= 1-2/(2q+1)$.
We also note the expression can be further simplified to be completely independent of 
of the size of the infrastructure as follows. 

Second, under the assumption that $R\geq 256$
and $q\geq 8$, 
we can bound $(1/64-2/B) \geq 1/128$ and $2/(2q+1) \leq 1/8$, and the lower bound on 
success probability simplifies to a constant independent of the problem size. 
\ben
\max_{\kappa} p_g \left(1- \frac{1}{8(1-\kappa)}\right)^2
\frac{\kappa^2}{2}\left(
1 - 2 \sin(\frac{\pi}{32})-  \frac{(
1-\sinc(3/4)
)}{\kappa}
\right)^4
\left(\frac{1}{128}\right)^2 \,\label{eq:dlogSuccessSimp}
\een
Although the expressions for the success probability may appear to be a little unwieldy, 
we hope they provide insight into the various factors affecting the success probability.

Third, we can boost the success probability (strictly speaking, the lower bound on it) by increasing $q$. 

Fourth, we can truly improve upon the success probability by extending the set of usable observations
$(h_1,k_1)$ and $(h_2,k_2)$. Currently, we require that $k_i< \floor{B/64}$, but this can 
be relaxed significantly. 

\section*{Acknowledgment}

We would like to thank Felix Fontein for helpful discussions on infrastructures 
and suggestions to improve the paper.  P.W. thanks Joseph Brennen, Chen-Fu Chiang, and 
(Raymond) Yiu Yu Ho for helpful discussions. 
P.S. thanks  Robert Raussendorf for his generous support. 

P.W. gratefully acknowledges the support from the NSF grant CCF-0726771 and the NSF CAREER Award CCF-0746600. P.S. was sponsored by grants from CIFAR, MITACS and NSERC.

\section*{Appendix}

We prove here some auxiliary results.

\begin{lemma}\label{lm:coprimeProb}
Let $a$ and $b$ be two random numbers chosen uniformly at random from $\{1,\ldots,N\}$.  The probability that $a$ and $b$ are coprime is bounded from below by $1/2$, i.e.,
\begin{equation}
\Pr(\gcd(a,b)=1) > \frac{1}{2}\,.
\end{equation}
\end{lemma}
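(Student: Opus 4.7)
The plan is to bound the complementary probability $\Pr(\gcd(a,b) > 1)$ using a union bound over prime divisors. The event $\{\gcd(a,b) > 1\}$ is the union, over primes $p$, of the events $E_p = \{p \mid a \text{ and } p \mid b\}$, so it suffices to control $\sum_p \Pr(E_p)$ and show that this sum is strictly less than $1/2$.

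For any prime $p \leq N$, since $a$ is uniform on $\{1,\ldots,N\}$ we have $\Pr(p \mid a) = \lfloor N/p \rfloor / N \leq 1/p$, and similarly for $b$. Independence of $a$ and $b$ then gives $\Pr(E_p) \leq 1/p^2$, so the union bound yields
\[
\Pr(\gcd(a,b) > 1) \;=\; \Pr\Bigl(\bigcup_{p \leq N} E_p\Bigr) \;\leq\; \sum_{p \leq N} \frac{1}{p^2} \;\leq\; P(2) \;:=\; \sum_{p \text{ prime}} \frac{1}{p^2}.
\]

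What remains is the numerical verification that $P(2) < 1/2$. I would compute the contribution from the first several primes explicitly and bound the tail by a geometric-like estimate, e.g.\ using $\sum_{n \geq n_0} 1/n^2 < 1/(n_0 - 1)$. Concretely, summing $1/p^2$ over primes $p \leq 19$ gives roughly $0.442$, and the tail $\sum_{p \geq 23} 1/p^2 \leq \sum_{n \geq 23} 1/n^2 < 1/22 < 0.046$, so $P(2) < 0.488 < 1/2$. Combining these estimates with the union bound yields $\Pr(\gcd(a,b) = 1) > 1/2$.

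The main obstacle is really just the bookkeeping of the numerical bound on $P(2)$: one must include enough explicit primes to make the tail estimate tight enough. An alternative route would rely on the Möbius-function identity
\[
\Pr(\gcd(a,b)=1) \;=\; \frac{1}{N^2} \sum_{d=1}^{N} \mu(d) \lfloor N/d \rfloor^2,
\]
which converges to $6/\pi^2 > 1/2$ as $N \to \infty$, but proving the inequality for all finite $N$ via this expression requires a finer analysis of the oscillating Möbius sum, so the union-bound argument is cleaner.
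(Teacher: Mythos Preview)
Your proof is correct and follows essentially the same approach as the paper: a union bound over primes, the estimate $\Pr(p\mid a)\le 1/p$, and the conclusion that $\Pr(\gcd(a,b)>1)\le \sum_p 1/p^2 < 1/2$. The only difference is that the paper quotes the value $\sum_p 1/p^2 = 0.45224\ldots$ from a reference, whereas you supply a self-contained elementary estimate (explicit sum over $p\le 19$ plus an integral tail bound), which is arguably nicer.
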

\begin{proof}
Let $p$ be an arbitrary prime.  Then the probability that $p$ divides $a$, denoted $\Pr(p\mid a) $, 
is given by 
\begin{eqnarray}
\Pr(p \mid a) = \frac{\lfloor \frac{N}{p} \rfloor}{N} \le \frac{1}{p}\,.\nonumber
\end{eqnarray}
Thus, 
\begin{equation}
\Pr(p \mid \gcd(a,b) ) \le \frac{1}{p^2}\,.\nonumber
\end{equation}
We obtain an upper bound on the probability that there is a prime dividing the greatest common divisor of $a$ and $b$ with the help of the union bound.  This yields  
\begin{equation}
\Pr(\gcd(a,b)>1) \le \sum_p \frac{1}{p^2}\,,\nonumber
\end{equation}
where the summation index $p$ ranges over all primes.  The sum of squared reciprocals of primes is known to be
\begin{equation}
\sum_p \frac{1}{p^2} = \sum_{k=1}^\infty \frac{\mu(k)}{k} \ln \zeta(2k) = 0.4522474200\ldots\,,\nonumber
\end{equation}
where $\mu$ denotes the M{\"o}bius mu function and $\zeta$ the Riemann zeta function \cite[page 95]{Finch}.  Finally, we obtain the desired result
\begin{equation}
\Pr(\gcd(a,b)=1) \ge 1 - \sum_p \frac{1}{p^2} > \frac{1}{2}\nonumber
\end{equation}
by considering the complementary event.
\end{proof}

%%%%%%%%%%%%%%%%%%%%%%%%%%%%%%%%%%%%%%%%%%%%%%%%%%%%%%%%
We now prove a result related to continued fractions. The reader can find more details about
continued fractions in \cite{Burton10}.
\begin{lemma}\label{lm:cfbound}
Let $p_i/q_i$  denote the convergents of a real number $r\in \mbb{R}$, for $i\in \mbb{N}$. 
Then for any given constant $c>1$, there exists a convergent $p_{\ell}/q_{\ell}$ such that 
$|r-p_{\ell}/q_{\ell}|<1/c q_{\ell}$ and $q_{\ell} \leq c$.
\end{lemma}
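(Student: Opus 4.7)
The plan is to use two standard facts about continued fraction convergents: (i) the denominators $q_i$ form a (weakly, then strictly) increasing sequence with $q_0 = 1$, and (ii) for every convergent of $r \in \mathbb{R}$ one has the approximation bound
\[
\left| r - \frac{p_i}{q_i} \right| \;<\; \frac{1}{q_i \, q_{i+1}},
\]
provided the next convergent $p_{i+1}/q_{i+1}$ exists.

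Given $c > 1$, I would define $\ell$ to be the largest index for which $q_\ell \le c$. Such an index exists because $q_0 = 1 \le c$, and it is well-defined (finite) because the $q_i$ tend to infinity in the irrational case, and in the rational case the continued fraction terminates. By maximality, either (a) $p_{\ell+1}/q_{\ell+1}$ exists and satisfies $q_{\ell+1} > c$, or (b) $p_\ell/q_\ell$ is the final convergent, in which case $r = p_\ell/q_\ell$ exactly.

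In case (a), the approximation bound gives
\[
\left| r - \frac{p_\ell}{q_\ell} \right| \;<\; \frac{1}{q_\ell \, q_{\ell+1}} \;<\; \frac{1}{c \, q_\ell},
\]
and $q_\ell \le c$ holds by construction. In case (b) the left-hand side is $0$, so the inequality is trivial. Either way the claim follows.

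There is no real obstacle here; the argument is a direct consequence of the two standard facts cited above, and the only thing to be slightly careful about is handling the rational case separately so that the bound $|r - p_i/q_i| < 1/(q_i q_{i+1})$ is only invoked when the next convergent actually exists.
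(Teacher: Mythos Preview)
Your proposal is correct and follows essentially the same argument as the paper: choose the largest index $\ell$ with $q_\ell \le c$, handle the terminating (rational) case separately where $r = p_\ell/q_\ell$, and otherwise apply the standard bound $|r - p_\ell/q_\ell| < 1/(q_\ell q_{\ell+1})$ together with $q_{\ell+1} > c$.
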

\begin{proof} 
Since $c>1=q_0$ and  $q_i$ form a monotonically
increasing sequence for $i>1$, there exists  such a convergent 
$p_\ell/q_\ell$ such that  $q_{\ell}\leq c<q_{\ell+1}$ unless $r$
has a finite continued fraction expansion with all the $q_i<c$. If the latter case occurs, then it follows that there exists a convergent $p_{\ell}/q_{\ell} $ such that $r=p_{\ell}/q_{\ell}$ therefore for this convergent 
$|r-p/q|=0<1/c $ and the statement of the lemma holds. Otherwise,  $r$ has a continued fraction expansion such that $q_{\ell}\leq c< q_{\ell+1}$.  We
know that the convergents satisfy the relation 
\be
\left|r- \frac{p_i}{q_i}\right| <\frac{1}{q_iq_{i+1}}.
\ee
Therefore, we must have 
\be
\left|r- \frac{p_\ell}{q_\ell}\right| <\frac{1}{q_{\ell}q_{\ell+1}}<\frac{1}{cq_{\ell}},
\ee
where we used the fact that $q_{\ell+1}>c$.
\end{proof}

%\bibliographystyle{plain}
%\bibliography{infRefs}

\def\cprime{$'$}

\end{document}